\newcommand{\cC}{\mathcal{C}}
\newcommand{\cF}{\ensuremath{\mathcal{F}}}
\newcommand{\cN}{\ensuremath{\mathcal{N}}}
\newcommand{\cS}{\ensuremath{\mathcal{S}}}
\newcommand{\cU}{\ensuremath{\mathcal{U}}}
\newcommand{\bt}{{\bf t}}
\newcommand{\bF}{{\bf F}}
\newcommand{\cL}{\mathcal{L}}
\newcommand{\bs}{\mathbf{s}} 
\newcommand{\bx}{\mathbf{x}} 
\newcommand{\bd}{\mathbf{d}} 
\newcommand{\bA}{\mathbf{A}} 
\newcommand{\bD}{\mathbf{D}}
\newcommand{\R}{\ensuremath{\mathbb{R}}}
\newcommand{\N}{\ensuremath{\mathbb{N}}}
\newcommand{\A}{\ensuremath{\mathbb{A}}}
\newcommand{\conv}{*}
\newcommand{\deconv}{\oslash}
\newcommand{\pr}{\mathrm{prec}}
\newcommand{\su}{\mathrm{succ}}
\newcommand{\sint}[1]{\mathbb{N}_{#1}}
\newcommand{\fl}{\mathrm{Fl}}
\newtheorem{theorem}{Theorem}
\newtheorem{lemma}{Lemma}
\newtheorem{example}{Example}
\newtheorem{definition}{Definition}
\newtheorem{corollary}{Corollary}
\title{Trade-off between accuracy and tractability of network calculus in FIFO networks}
\author{Anne Bouillard \\
	{\tt{anne.bouillard@huawei.com}}\\
	Huawei Technologies France, \\20 quai du Point du Jour, \\92100 Boulogne-Billancourt}
\begin{document}
	\begin{filecontents}{two_hop_n.data}
#	 Abscissa	TFA	SFA	QLP	REG	LP	
1	0.0011	0.0011	0.0011	0.0011	0.0011	
2	0.0022	0.0025	0.0022	0.0024	0.0022	
3	0.0034	0.00400	0.0033	0.0037	0.0033	
4	0.0047	0.00550	0.0045	0.005	0.0045	
5	0.0061	0.00706	0.0056	0.0063	0.0056	
6	0.0075	0.00868	0.0068	0.0076	0.0067	
7	0.0090	0.01034	0.0081	0.0089	0.0078	
8	0.0105	0.01206	0.0094	0.0102	inf
9	0.0121	0.01383	0.0107	0.0115 	inf	
10	0.0138	0.01566	0.0121	0.0128	inf
11	0.0156	0.01754	0.0135	0.0141	inf
12	0.0174	0.01949	0.0150	0.0154	inf
13	0.0193	0.02150	0.0166	0.0167	inf
14	0.0213	0.02358	0.0182	0.0180	inf
15	0.0234	0.02572	0.0198	0.0193	inf
16	0.0256	0.02793	0.0215	0.0206	inf
17	0.0278	0.03021	0.0232	0.0219	inf
18	0.0302	0.03257	0.0248	0.0232	inf
19	0.0327	0.03500	0.0265	0.0245	inf
20	0.0352	0.03751	0.0281	0.0258	inf
21	0.0379	0.04010	0.0298	0.0271	inf
22	0.0407	0.04277	0.0314	0.0284	inf
23	0.0436	0.04553	0.0331	0.0297	inf
24	0.0467	0.04837	0.0347	0.0310	inf
25	0.0499	0.05131	0.0364	0.0323	inf
\end{filecontents}

\begin{filecontents}{two_hop_n_exec.data}
	#	Abscissa	TFA	SFA	QLP	REG	LP	
	1	0.1521449089050293	0.15215826034545898	0.7895033359527588	0.0021152496337890625	0.1630237102508545	
	2	0.16716337203979492	0.1803758144378662	0.7287483215332031	0.0	0.1436481475830078	
	3	0.2151048183441162	0.16599392890930176	0.8277795314788818	0.0	0.2095482349395752	
	4	0.14854121208190918	0.14077496528625488	0.7234137058258057	0.0	0.23557591438293457	
	5	0.17840003967285156	0.19839119911193848	0.8735127449035645	0.0	0.3787999153137207	
	6	0.1535499095916748	0.14588117599487305	0.7678618431091309	0.0	2.196516513824463	
	7	0.13693499565124512	0.15177369117736816	0.778407096862793	0.0	22.321476936340332	
	8	0.15057063102722168	0.15402507781982422	0.8907303810119629	0.0	inf
	9	0.1615593433380127	0.1561572551727295	0.9940381050109863	0.0	1000
	10	0.15287280082702637	0.1579303741455078	0.9882721900939941	0.0009765625	
	11	0.16362428665161133	0.16263151168823242	0.9868741035461426	0.0	
	12	0.17151165008544922	0.16066431999206543	1.1115586757659912	0.0	
	13	0.15786290168762207	0.14594101905822754	0.9874389171600342	0.0006594657897949219	
	14	0.1970224380493164	0.19581842422485352	1.1206319332122803	0.0007798671722412109	
	15	0.21106362342834473	0.1461467742919922	1.2264342308044434	0.0	
	16	0.16222405433654785	0.15461945533752441	1.4257738590240479	0.0	
	17	0.2344064712524414	0.14839458465576172	1.6304407119750977	0.0	
	18	0.20029687881469727	0.1605679988861084	1.7632839679718018	0.0	
	19	0.16637444496154785	0.15311741828918457	1.782494306564331	0.0	
	20	0.14521431922912598	0.1383805274963379	1.6466805934906006	0.0	
	21	0.1724238395690918	0.1327838897705078	2.661240577697754	0.0	
	22	0.1685807704925537	0.18874263763427734	2.814241647720337	0.0	
	23	0.18589019775390625	0.1645188331604004	3.3238141536712646	0.0	
	24	0.19834256172180176	0.17386221885681152	3.650193691253662	0.0	
	25	0.18754935264587402	0.15328574180603027	4.491748809814453 0.000957489013671875	
\end{filecontents}

\begin{filecontents}{long_K2.data}
	#	 Abscissa	TFA	SFA	QLP	REG	
	0.02	0.016474	0.020052179993890022	0.01194358	0.020000000000000004	
	0.038	0.01697717	0.02097843110581901	0.01198617	0.020000000000000004	
	0.055999999999999994	0.017512830000000004	0.021975759497893852	0.01203144	0.020000000000000004	
	0.074	0.018083689999999996	0.02304979920462824	0.01207892	0.020000000000000004	
	0.092	0.01869276	0.02420661447492368	0.01212924	0.020000000000000004	
	0.11	0.01934337	0.025452681791342954	0.01218257	0.020000000000000004	
	0.128	0.020039159999999997	0.0267949998915009	0.01223911	0.020000000000000004	
	0.146	0.02078414	0.028241035791848956	0.01229908	0.020000000000000004	
	0.16399999999999998	0.02158283	0.029798794814171754	0.0123627	0.020000000000000004	
	0.18199999999999997	0.022440089999999996	0.031476866615163836	0.01243023	0.020000000000000004	
	0.19999999999999996	0.02336146	0.033284408219512195	0.01250031	0.020000000000000004	
	0.21799999999999994	0.02435295	0.03523124805697935	0.01257016	0.020000000000000004	
	0.23599999999999993	0.025421339999999997	0.037327889004062975	0.01264334	0.020000000000000004	
	0.25399999999999995	0.026574109999999998	0.03958552543090484	0.01272006	0.020000000000000004	
	0.27199999999999996	0.02781964	0.04201616125423729	0.01280054	0.020000000000000004	
	0.29	0.029167230000000002	0.044632515997293634	0.01288501	0.020000000000000004	
	0.308	0.030627300000000003	0.04744824085777531	0.01297374	0.020000000000000004	
	0.326	0.03221148	0.05047785378516841	0.01306701	0.020000000000000004	
	0.34400000000000003	0.0339328	0.053736850568945546	0.01316509	0.020000000000000004	
	0.36200000000000004	0.035805880000000005	0.05724163993948383	0.01326832	0.020000000000000004	
	0.38000000000000006	0.03784716	0.06100972368389058	0.01337705	0.020000000000000004	
	0.3980000000000001	0.04007507	0.06505979577937052	0.01349164	0.020000000000000004	
	0.4160000000000001	0.042510440000000004	0.06941163154731457	0.01361179	0.020000000000000004	
	0.4340000000000001	0.04517666	0.07408629983196588	0.01372507	0.020000000000000004	
	0.4520000000000001	0.04810027	0.07910606120836143	0.01384329	0.020000000000000004	
	0.47000000000000014	0.05131123	0.08449473122530331	0.01396675	0.020000000000000004	
	0.48800000000000016	0.05484347	0.09027739069044223	0.01409579	0.020000000000000004	
	0.5060000000000001	0.058735510000000005	0.0964806940062431	0.01423073	0.020000000000000004	
	0.5240000000000001	0.06303132	0.10313289356775171	0.01437198	0.020000000000000004	
	0.5420000000000001	0.06778049	0.11026396523584536	0.01451992	0.020000000000000004	
	0.5600000000000002	0.07304001	0.1179055839032258	0.01467499	0.020000000000000004	
	0.5780000000000002	0.07887485	0.12609124017507295	0.01483768	0.020000000000000004	
	0.5960000000000002	0.08535919000000002	0.13485640319240724	0.01500848	0.020000000000000004	
	0.6140000000000002	0.09257826000000001	0.1442385906343317	0.01518794	0.020000000000000004	
	0.6320000000000002	0.10062974	0.15427739294619666	0.01537668	0.020000000000000004	
	0.6500000000000002	0.10962597999999998	0.16501468785542167	0.01557533	0.020000000000000004	
	0.6680000000000003	0.11969656000000001	0.1764945872567703	0.0157846	0.020000000000000004	
	0.6860000000000003	0.13099119	0.1887637785765813	0.01600525	0.020000000000000004	
	0.7040000000000003	0.14368237	0.20187148276419214	0.01623811	0.020000000000000004	
	0.7220000000000003	0.15797062	0.21586953211364932	0.01648409	0.020000000000000004	
	0.7400000000000003	0.17408845	0.2308125631976048	0.01674418	0.020000000000000004	
	0.7580000000000003	0.19230639	0.24675838831025804	0.01701944	0.020000000000000004	
	0.7760000000000004	0.2129396	0.26376774498673744	0.01731106	0.020000000000000004	
	0.7940000000000004	0.23635641000000002	0.28190424942396636	0.01762033	0.020000000000000004	
	0.8120000000000004	0.26298763	0.3012357290252601	0.01794864	0.020000000000000004	
	0.8300000000000004	0.29333838	0.32183259691699606	0.01830296	0.020000000000000004	
	0.8480000000000004	0.32800403	0.34376982829729735	0.01870391	0.020000000000000004	
	0.8660000000000004	0.36768281999999997	0.367125882156845	0.01911685	0.020000000000000004	
	0.8840000000000005	0.41320313	0.3919833987906066	0.01959663	0.020000000000000004	
	0.9020000000000005	0.46554229	0.41842942862805527	0.02010907	0.020000000000000004	
	0.9200000000000005	0.52585975	0.4465563293488372	0.0206665	0.020000000000000004	
	0.9380000000000005	0.59553726	0.47646061352374836	0.02126197	0.020000000000000004	
	0.9560000000000005	0.6762190699999999	0.5082454983782235	0.02193967	0.020000000000000004	
	0.9740000000000005	0.76987512	0.5420206727714749	0.0227493	0.020000000000000004	
	0.9920000000000005	0.8788586899999999	0.5779033768208955	0.02382307	0.020000000000000004	
\end{filecontents}

\begin{filecontents}{long_K10.data}
	#	 Abscissa	TFA	SFA	QLP	REG	
	0.02	0.01802835	0.020052179993890022	0.0119492	0.020000000000000004	
	0.04	0.018813080000000003	0.021085634439834022	0.01200215	0.020000000000000004	
	0.06	0.01965965	0.02220759824524313	0.01205911	0.020000000000000004	
	0.08	0.020573990000000004	0.023425892620689653	0.01212037	0.020000000000000004	
	0.1	0.021562730000000002	0.02474900610989011	0.01218623	0.020000000000000004	
	0.12000000000000001	0.0226332	0.02618607762331838	0.01225326	0.020000000000000004	
	0.14	0.023793560000000002	0.027747015475972543	0.01232377	0.020000000000000004	
	0.16	0.02505287	0.029442543429906546	0.01239817	0.020000000000000004	
	0.18	0.02642119	0.03128419074224344	0.01247669	0.020000000000000004	
	0.19999999999999998	0.02790968	0.033284408219512195	0.0125596	0.020000000000000004	
	0.21999999999999997	0.02953083	0.03545661827930175	0.01264717	0.020000000000000004	
	0.23999999999999996	0.0312985	0.03781526502040817	0.01273971	0.020000000000000004	
	0.25999999999999995	0.03322816	0.04037591030287207	0.01283753	0.020000000000000004	
	0.27999999999999997	0.035337059999999997	0.04315526083957219	0.01294098	0.020000000000000004	
	0.3	0.03764442999999999	0.04617130930136987	0.01304697	0.020000000000000004	
	0.32	0.04017181	0.04944329843820224	0.01315169	0.020000000000000004	
	0.34	0.04294326999999999	0.05299198321902017	0.01326111	0.020000000000000004	
	0.36000000000000004	0.04598562	0.05683949499408285	0.01337552	0.020000000000000004	
	0.38000000000000006	0.04932906000000001	0.06100972368389058	0.01349519	0.020000000000000004	
	0.4000000000000001	0.05300729	0.065528059	0.01362044	0.020000000000000004	
	0.4200000000000001	0.05705809	0.07042178070418006	0.01375161	0.020000000000000004	
	0.4400000000000001	0.06152406	0.0757200789139073	0.01388904	0.020000000000000004	
	0.46000000000000013	0.06645248	0.08145403046416384	0.01403311	0.020000000000000004	
	0.48000000000000015	0.07189709	0.08765698633802818	0.01418423	0.020000000000000004	
	0.5000000000000001	0.07791803	0.09436444218181818	0.01434283	0.020000000000000004	
	0.5200000000000001	0.08458291	0.10161432192481205	0.01450938	0.020000000000000004	
	0.5400000000000001	0.09196779	0.10944700752918288	0.01468439	0.020000000000000004	
	0.5600000000000002	0.10015856999999999	0.1179055839032258	0.01486838	0.020000000000000004	
	0.5800000000000002	0.109252	0.1270358790209205	0.01506194	0.020000000000000004	
	0.6000000000000002	0.11935724	0.13688662330434784	0.01526569	0.020000000000000004	
	0.6200000000000002	0.13059715	0.1475099123438914	0.01548029	0.020000000000000004	
	0.6400000000000002	0.14311129	0.15896065605660378	0.01570648	0.020000000000000004	
	0.6600000000000003	0.15705672	0.1712977844187192	0.01594501	0.020000000000000004	
	0.6800000000000003	0.17261176	0.18458334695876286	0.01619673	0.020000000000000004	
	0.7000000000000003	0.18997776	0.19888357227027026	0.01646255	0.020000000000000004	
	0.7200000000000003	0.20938326000000002	0.21426842990909092	0.01674344	0.020000000000000004	
	0.7400000000000003	0.23108706	0.2308125631976048	0.01702865	0.020000000000000004	
	0.7600000000000003	0.25538339000000004	0.24859470469620254	0.01731579	0.020000000000000004	
	0.7800000000000004	0.28260556	0.26769827346979863	0.01761106	0.020000000000000004	
	0.8000000000000004	0.31313348999999996	0.2882119478571428	0.01792198	0.020000000000000004	
	0.8200000000000004	0.34739732	0.310229220389313	0.01826721	0.020000000000000004	
	0.8400000000000004	0.38589008	0.33384949706557376	0.01866775	0.020000000000000004	
	0.8600000000000004	0.42916940000000003	0.35917786887610625	0.01911696	0.020000000000000004	
	0.8800000000000004	0.47787545000000003	0.3863251462307692	0.01961553	0.020000000000000004	
	0.9000000000000005	0.53273204	0.41540972678947374	0.02017608	0.020000000000000004	
	0.9200000000000005	0.59457298	0.4465563293488372	0.02076536	0.020000000000000004	
	0.9400000000000005	0.66434678	0.4798974796493507	0.02143885	0.020000000000000004	
	0.9600000000000005	0.7431395199999999	0.5155750491176471	0.02222001	0.020000000000000004	
	0.9800000000000005	0.83219586	0.5537412356271187	0.02316903	0.020000000000000004	
\end{filecontents}

\begin{filecontents}{long_load.data}
	#	 Abscissa	TFA	SFA	QLP	REG	
	0.02	0.011917950000000002	0.020052179993890022	0.01191629	0.020000000000000004	
	0.04	0.011939489999999997	0.021085634439834022	0.01193282	0.020000000000000004	
	0.06	0.01196492	0.02220759824524313	0.01194967	0.020000000000000004	
	0.08	0.011994399999999999	0.023425892620689653	0.01196691	0.020000000000000004	
	0.1	0.012028269999999999	0.02474900610989011	0.01198464	0.020000000000000004	
	0.12000000000000001	0.012066800000000003	0.02618607762331838	0.01200296	0.020000000000000004	
	0.14	0.012110319999999999	0.027747015475972543	0.01202198	0.020000000000000004	
	0.16	0.01215922	0.029442543429906546	0.01204182	0.020000000000000004	
	0.18	0.012213899999999998	0.03128419074224344	0.01206261	0.020000000000000004	
	0.19999999999999998	0.012274779999999999	0.033284408219512195	0.0120845	0.020000000000000004	
	0.21999999999999997	0.01234245	0.03545661827930175	0.01210764	0.020000000000000004	
	0.23999999999999996	0.01241743	0.03781526502040817	0.01213222	0.020000000000000004	
	0.25999999999999995	0.0125004	0.04037591030287207	0.01215843	0.020000000000000004	
	0.27999999999999997	0.01259205	0.04315526083957219	0.0121865	0.020000000000000004	
	0.3	0.012693219999999998	0.04617130930136987	0.01221666	0.020000000000000004	
	0.32	0.012804880000000001	0.04944329843820224	0.0122492	0.020000000000000004	
	0.34	0.012928069999999998	0.05299198321902017	0.01228442	0.020000000000000004	
	0.36000000000000004	0.01306402	0.05683949499408285	0.01232269	0.020000000000000004	
	0.38000000000000006	0.01321414	0.06100972368389058	0.0123644	0.020000000000000004	
	0.4000000000000001	0.01338011	0.065528059	0.01241002	0.020000000000000004	
	0.4200000000000001	0.013563750000000003	0.07042178070418006	0.01246007	0.020000000000000004	
	0.4400000000000001	0.01376732	0.0757200789139073	0.01251515	0.020000000000000004	
	0.46000000000000013	0.013993369999999998	0.08145403046416384	0.01257473	0.020000000000000004	
	0.48000000000000015	0.014245060000000002	0.08765698633802818	0.01263468	0.020000000000000004	
	0.5000000000000001	0.01452596	0.09436444218181818	0.01270063	0.020000000000000004	
	0.5200000000000001	0.01484044	0.10161432192481205	0.01277345	0.020000000000000004	
	0.5400000000000001	0.01519382	0.10944700752918288	0.0128542	0.020000000000000004	
	0.5600000000000002	0.01559249	0.1179055839032258	0.01294413	0.020000000000000004	
	0.5800000000000002	0.01604429	0.1270358790209205	0.01304473	0.020000000000000004	
	0.6000000000000002	0.01655896	0.13688662330434784	0.01315782	0.020000000000000004	
	0.6200000000000002	0.01714872	0.1475099123438914	0.01327942	0.020000000000000004	
	0.6400000000000002	0.01782897	0.15896065605660378	0.013402	0.020000000000000004	
	0.6600000000000003	0.018619490000000002	0.1712977844187192	0.01354024	0.020000000000000004	
	0.6800000000000003	0.01954604	0.18458334695876286	0.01369725	0.020000000000000004	
	0.7000000000000003	0.020642510000000003	0.19888357227027026	0.01387691	0.020000000000000004	
	0.7200000000000003	0.02195447	0.21426842990909092	0.01408421	0.020000000000000004	
	0.7400000000000003	0.023544310000000002	0.2308125631976048	0.01429621	0.020000000000000004	
	0.7600000000000003	0.02549914	0.24859470469620254	0.01452143	0.020000000000000004	
	0.7800000000000004	0.02794372	0.26769827346979863	0.014784	0.020000000000000004	
	0.8000000000000004	0.031061589999999993	0.2882119478571428	0.0150941	0.020000000000000004	
	0.8200000000000004	0.03513137	0.310229220389313	0.01546579	0.020000000000000004	
	0.8400000000000004	0.04059151	0.33384949706557376	0.01581905	0.020000000000000004	
	0.8600000000000004	0.04816136	0.35917786887610625	0.0162264	0.020000000000000004	
	0.8800000000000004	0.059080980000000005	0.3863251462307692	0.01672625	0.020000000000000004	
	0.9000000000000005	0.07561669	0.41540972678947374	0.01740625	0.020000000000000004	
	0.9200000000000005	0.10221478	0.4465563293488372	0.01814604	0.020000000000000004	
	0.9400000000000005	0.14838727	0.4798974796493507	0.01900561	0.020000000000000004	
	0.9600000000000005	0.23682415	0.5155750491176471	0.02016115	0.020000000000000004	
	0.9800000000000005	0.42977226	0.5537412356271187	0.02182318	0.020000000000000004	
\end{filecontents}

\begin{filecontents}{long_n.data}
	#	 Abscissa	TFA	SFA	QLP	REG	LP	
	1	0.0011	0.0011	0.0011	0.0011	0.0011	
	2	0.0024	0.0023	0.0023	0.0024	0.002325	
	3	0.0039	0.0042	0.0036	0.0039	0.00356944	
	4	0.0053	0.0070	0.0048	0.0056	0.00482051	
	5	0.0068	0.0112	0.0061	0.0075	0.00607461	
	6	0.0084	0.0175	0.0074	0.0096	0.00733036	
	7	0.0099	0.0269	0.0087	0.0119	inf
	8	0.0114	0.0410	0.0100	0.0144 	inf	
	9	0.0129	0.0622	0.0113	0.0171	inf
	10	0.0145	0.0943	0.0127	0.0200	inf
	11	0.0160	0.1430	0.0140	0.0231	inf
	12	0.0176	0.2169	0.0153	0.0264	inf
	13	0.0191	0.3295	0.0166	0.0299	inf
	14	0.0207	0.5012	0.0179	0.0336	inf
	15	0.0222	0.7635	0.0192	0.0375	inf
	16	0.0238	inf	0.0205	0.0416	inf
	17	0.0253	inf	0.0218	0.0459	inf
	18	0.0269	inf	0.0231	0.0504	inf
	19	0.0285	inf	0.0244	0.0551	inf
	20	0.0300	inf	0.0257	0.0600	inf
	21	0.0316	inf	0.0271	0.0651	inf
	22	0.0331	inf	0.0284	0.0704	inf
	23	0.0347	inf	0.0297	0.0759	inf
	24	0.0362	inf	0.0310	0.0816	inf
	25	0.0378	inf	0.0323	0.0875	inf
\end{filecontents}

\begin{filecontents}{long_n_exec.data}
	#	Abscissa	TFA	SFA	QLP	REG	LP	
	1	0.06390714645385742	0.07805061340332031	0.43222832679748535	0.0	0.06900191307067871	
	2	0.0692298412322998	0.05888628959655762	0.4041142463684082	0.0	0.08461880683898926	
	3	0.06901216506958008	0.08462142944335938	0.37549877166748047	0.0	0.10975313186645508	
	4	0.062489986419677734	0.06899738311767578	0.401031494140625	0.0	0.10023832321166992	
	5	0.08461546897888184	0.06248593330383301	0.38541316986083984	0.0	0.3007473945617676	
	6	0.0690000057220459	0.06900525093078613	0.42225050926208496	0.0	2.487156629562378	
	7	0.06806039810180664	0.08619356155395508	0.4313192367553711	0.0	inf
	8	0.08461642265319824	0.0781407356262207	0.45761537551879883	0.0	1000
	9	0.09665989875793457	0.08464789390563965	0.5169436931610107	0.0	
	10	0.06896829605102539	0.08464169502258301	0.6016135215759277	0.0	
	11	0.0781097412109375	0.10023856163024902	0.6705653667449951	0.0	
	12	0.08462119102478027	0.10024094581604004	0.848848819732666	0.0	
	13	0.06905508041381836	0.10023760795593262	0.9876163005828857	0.0	
	14	0.13148236274719238	0.10029411315917969	1.3026630878448486	0.0	
	15	0.10023784637451172	0.10024213790893555	1.688748836517334	0.0	
	16	0.10379743576049805	0.10677361488342285	2.2014832496643066	0.0	
	17	0.11652612686157227	0.16923737525939941	2.7045018672943115	0.0	
	18	0.1415257453918457	0.14713144302368164	3.155963659286499	0.0	
	19	0.1414940357208252	0.1751096248626709	5.162923812866211	0.0	
	20	0.13396263122558594	0.1471848487854004	6.179500102996826	0.0	
	21	0.14163661003112793	0.18495965003967285	8.29659128189087	0.0	
	22	0.15768885612487793	0.20050382614135742	10.660906553268433	0.0	
	23	0.17276740074157715	0.2162332534790039	13.355677843093872	0.0	
	24	0.20056891441345215	0.26296162605285645	17.181705713272095	0.0	
	25	0.20679187774658203	0.2590019702911377	21.620715379714966	0.0	
\end{filecontents}

\begin{filecontents}{mesh1.data}
	#u		TFA		SFA		QLP		unfoldQLP
	0.020000	0.007110	0.008963	0.007094	0.007090
	0.040000	0.007224	0.009288	0.007190	0.007184
	0.060000	0.007342	0.009624	0.007290	0.007280
	0.080000	0.007464	0.009972	0.007393	0.007379
	0.100000	0.007591	0.010331	0.007500	0.007481
	0.120000	0.007723	0.010704	0.007611	0.007587
	0.140000	0.007859	0.011089	0.007725	0.007696
	0.160000	0.008001	0.011487	0.007844	0.007809
	0.180000	0.008149	0.011899	0.007967	0.007926
	0.200000	0.008302	0.012324	0.008094	0.008046
	0.220000	0.008461	0.012763	0.008227	0.008171
	0.240000	0.008627	0.013217	0.008364	0.008300
	0.260000	0.008800	0.013686	0.008506	0.008434
	0.280000	0.008980	0.014169	0.008654	0.008572
	0.300000	0.009167	0.014668	0.008807	0.008716
	0.320000	0.009363	0.015183	0.008967	0.008865
	0.340000	0.009567	0.015714	0.009132	0.009019
	0.360000	0.009779	0.016262	0.009305	0.009180
	0.380000	0.010002	0.016826	0.009484	0.009346
	0.400000	0.010234	0.017408	0.009671	0.009519
	0.420000	0.010478	0.018008	0.009865	0.009700
	0.440000	0.010732	0.018625	0.010066	0.009887
	0.460000	0.010998	0.019261	0.010275	0.010082
	0.480000	0.011278	0.019916	0.010493	0.010285
	0.500000	0.011570	0.020590	0.010721	0.010425
	0.520000	0.011878	0.021284	0.010945	0.010560
	# u		TFA		SFA		QLP		UnfoldQLP
	0.540000	0.012200	0.021998	0.011120	0.010700
	0.560000	0.012539	0.022733	0.011301	0.010844
	0.580000	0.012896	0.023488	0.011489	0.010993
	0.600000	0.013271	0.024266	0.011685	0.011147
	0.620000	0.013666	0.025065	0.011889	0.011306
	0.640000	0.014083	0.025886	0.012102	0.011470
	0.660000	0.014523	0.026731	0.012324	0.011641
	0.680000	0.014987	0.027598	0.012555	0.011817
	0.700000	0.015478	0.028490	0.012797	0.012000
	0.720000	0.015998	0.029407	0.013049	0.012190
	0.740000	0.016548	0.030348	0.013312	0.012388
	0.760000	0.017131	0.031315	0.013587	0.012593
	0.780000	0.017750	0.032309	0.013875	0.012806
	0.800000	0.018407	0.033330	0.014178	0.013028
	0.820000	0.019107	0.034379	0.014492	0.013259
	0.840000	0.019851	0.035456	0.014823	0.013498
	0.860000	0.020645	0.036564	0.015171	0.013748
	0.880000	0.021492	0.037703	0.015500	0.013989
	0.900000	0.022397	0.038874	0.015824	0.014202
	0.920000	0.023365	0.040080	0.016169	0.014423
	0.940000	0.024403	0.041322	0.016540	0.014651
	0.960000	0.025516	0.042605	0.016939	0.014889
	0.980000	0.026712	0.043932	0.017370	0.015140
\end{filecontents}

\begin{filecontents}{mesh2.data}
	#u		TFA		SFA		QLP		unfoldQLP
	0.020000	0.008915	0.008963	0.007580	0.007567
	0.040000	0.009243	0.009288	0.007765	0.007737
	0.060000	0.009584	0.009624	0.007956	0.007910
	0.080000	0.009940	0.009972	0.008153	0.008086
	0.100000	0.010311	0.010331	0.008355	0.008266
	0.120000	0.010696	0.010704	0.008564	0.008448
	0.140000	0.011098	0.011089	0.008778	0.008635
	0.160000	0.011516	0.011487	0.008999	0.008825
	0.180000	0.011951	0.011899	0.009227	0.009020
	0.200000	0.012404	0.012324	0.009461	0.009218
	0.220000	0.012876	0.012763	0.009701	0.009421
	0.240000	0.013366	0.013217	0.009949	0.009629
	0.260000	0.013876	0.013686	0.010204	0.009841
	0.280000	0.014407	0.014169	0.010466	0.010054
	0.300000	0.014958	0.014668	0.010735	0.010267
	0.320000	0.015532	0.015183	0.011013	0.010483
	0.340000	0.016129	0.015714	0.011297	0.010704
	0.360000	0.016749	0.016262	0.011590	0.010928
	0.380000	0.017394	0.016826	0.011892	0.011156
	0.400000	0.018064	0.017408	0.012201	0.011389
	0.420000	0.018760	0.018008	0.012519	0.011625
	0.440000	0.019484	0.018625	0.012846	0.011866
	0.460000	0.020236	0.019261	0.013183	0.012112
	0.480000	0.021017	0.019916	0.013528	0.012362
	0.500000	0.021828	0.020590	0.013883	0.012617
	0.520000	0.022670	0.021284	0.014248	0.012877
	0.540000	0.023545	0.021998	0.014622	0.013142
	0.560000	0.024454	0.022733	0.015007	0.013413
	0.580000	0.025397	0.023488	0.015403	0.013689
	0.600000	0.026377	0.024266	0.015809	0.013971
	0.620000	0.027393	0.025065	0.016227	0.014259
	0.640000	0.028448	0.025886	0.016656	0.014553
	0.660000	0.029543	0.026731	0.017097	0.014854
	0.680000	0.030679	0.027598	0.017550	0.015162
	0.700000	0.031858	0.028490	0.018015	0.015477
	0.720000	0.033081	0.029407	0.018494	0.015799
	0.740000	0.034350	0.030348	0.018985	0.016130
	0.760000	0.035666	0.031315	0.019491	0.016468
	0.780000	0.037031	0.032309	0.020011	0.016816
	0.800000	0.038447	0.033330	0.020545	0.017172
	0.820000	0.039915	0.034379	0.021095	0.017539
	0.840000	0.041437	0.035456	0.021661	0.017916
	0.860000	0.043015	0.036564	0.022245	0.018305
	0.880000	0.044651	0.037703	0.022846	0.018706
	0.900000	0.046347	0.038874	0.023465	0.019117
	0.920000	0.048105	0.040080	0.024102	0.019540
	0.940000	0.049927	0.041322	0.024762	0.019977
	0.960000	0.051816	0.042605	0.025446	0.020432
	0.980000	0.053772	0.043932	0.026159	0.020908
\end{filecontents}

\begin{filecontents}{ring_K2.data}
	#	 Abscissa	TFA	SFA	QLP	REG	
	0.02	0.01013166	0.011911422186046513	0.00806627	0.0119	
	0.07	0.011100180000000001	0.013805214978723406	0.00825975	0.0119	
	0.12000000000000001	0.012332039999999999	0.016478166968152866	0.00848961	0.0119	
	0.17	0.013950019999999997	0.02053509685618729	0.00876583	0.0119	
	0.22000000000000003	0.0161672	0.02742570343661972	0.00909945	0.0119	
	0.27	0.019389019999999996	0.041711981524163576	0.00948016	0.0119	
	0.32	0.02449335	0.08909829927559054	0.00993185	0.0119	
	0.37	0.033801040000000004	100	0.01048011	0.0119	
	0.42	0.056110389999999996	inf	0.01115518	0.0119	
	0.47	0.1814617	inf	0.01191488	0.0119	
	0.52	inf	0.011380412371134022	0.01296871	0.0119	
	0.5700000000000001	inf	inf	0.01411822	0.0119	
	0.6200000000000001	inf	inf	0.01556138	0.0119	
	0.6700000000000002	inf	inf	0.01741691	0.0119	
	0.7200000000000002	inf	inf	0.01987495	0.0119	
	0.7700000000000002	inf	inf	0.02325968	0.0119	
	0.8200000000000003	inf	inf	0.02817877	0.0119	
	0.8700000000000003	inf	inf	0.03607113	0.0119	
	0.9200000000000004	inf	inf	0.05127026	0.0119	
	0.9700000000000004	inf	inf	0.09129055	0.0119	
\end{filecontents}

\begin{filecontents}{ring_K5.data}
	#	 Abscissa	TFA	SFA	QLP	REG	
	0.02	0.01163477	0.011911422186046513	0.00771458	0.0119	
	0.07	0.013392260000000001	0.013805214978723406	0.00853001	0.0119	
	0.12000000000000001	0.01581587	0.016478166968152866	0.0088366	0.0119	
	0.17	0.019372570000000002	0.02053509685618729	0.00919524	0.0119	
	0.22000000000000003	0.02509899	0.02742570343661972	0.00960804	0.0119	
	0.27	0.0358505	0.041711981524163576	0.01009787	0.0119	
	0.32	0.06340957	0.08909829927559054	0.01068685	0.0119	
	0.37	0.2881872	100	0.01136662	0.0119	
	0.42	inf	inf	0.01216122	0.0119	
	0.47	inf	inf	0.01308577	0.0119	
	0.52	inf	inf	0.01420465	0.0119	
	0.5700000000000001	inf	inf	0.01557997	0.0119	
	0.6200000000000001	inf	inf	0.01730033	0.0119	
	0.6700000000000002	inf	inf	0.01949999	0.0119	
	0.7200000000000002	inf	inf	0.0223929	0.0119	
	0.7700000000000002	inf	inf	0.02636125	0.0119	
	0.8200000000000003	inf	inf	0.0320796	0.0119	
	0.8700000000000003	inf	inf	0.04115921	0.0119	
	0.9200000000000004	inf	inf	0.0584169	0.0119	
	0.9700000000000004	inf	inf	0.10203909	0.0119		
\end{filecontents}

\begin{filecontents}{ring_load.data}
	#	 Abscissa	TFA	SFA	QLP	
	0.02	0.00771358	0.011911422186046513	0.00772396	
	0.04	0.007730309999999999	0.012598572295857988	0.00775132	
	0.06	0.007750539999999999	0.013376453686746989	0.00778219	
	0.08	0.007774409999999999	0.014264263963190183	0.0077248	
	0.1	0.007802200000000001	0.015287081000000001	0.00785483	
	0.12000000000000001	0.00783412	0.016478166968152866	0.00789688	
	0.14	0.007870589999999998	0.017882784363636366	0.00794293	
	0.16	0.007911890000000001	0.019563956039735094	0.00799318	
	0.18	0.00795837	0.021612275243243242	0.00804781	
	0.19999999999999998	0.008010590000000001	0.024162711655172413	0.00810705	
	0.21999999999999997	0.008068969999999998	0.02742570343661972	0.00816948	
	0.23999999999999996	0.00813407	0.03174832128057554	0.00823661	
	0.25999999999999995	0.00820666	0.03774690447058824	0.0083088	
	0.27999999999999997	0.00828744	0.04663064694736842	0.00838637	
	0.3	0.00837725	0.06114040538461538	0.0084697	
	0.32	0.00847721	0.08909829927559054	0.00855923	
	0.34	0.008588369999999998	0.16535200903225805	0.00865544	
	0.36000000000000004	0.00871227	1.2085306280991737	0.00875889	
	0.38000000000000006	0.00885045	inf	0.00887023	
	0.4000000000000001	0.00900487	inf	0.00899021	
	0.4200000000000001	0.009177910000000001	inf	0.00911971	
	0.4400000000000001	0.009372370000000001	inf	0.00925974	
	0.46000000000000013	0.00959168	inf	0.00941151	
	0.48000000000000015	0.00984018	inf	0.00957645	
	0.5000000000000001	0.01012305	inf	0.00975626	
	0.5200000000000001	0.01044708	inf	0.00995016	
	0.5400000000000001	0.01082074	inf	0.01015243	
	0.5600000000000002	0.011255159999999997	inf	0.01037355	
	0.5800000000000002	0.01176518	inf	0.01061633	
	0.6000000000000002	0.01237068	inf	0.01088458	
	0.6200000000000002	0.013099520000000002	inf	0.01118267	
	0.6400000000000002	0.013991320000000002	inf	0.0115163	
	0.6600000000000003	0.015105020000000004	inf	0.01189445	
	0.6800000000000003	0.01653183	inf	0.01232502	
	0.7000000000000003	0.018421060000000003	inf	0.01278934	
	0.7200000000000003	0.021034860000000002	inf	0.01330334	
	0.7400000000000003	0.0248801	inf	0.01391467	
	0.7600000000000003	0.031080350000000003	inf	0.01463391	
	0.7800000000000004	0.04272436	inf	0.01553068	
	0.8000000000000004	0.0724997	inf	0.01665812	
	0.8200000000000004	0.30987250000000005	inf	0.01786669	
	0.8400000000000004	inf	inf	0.02203676	
	0.8600000000000004	inf	inf	0.02401549	
	0.8800000000000004	inf	inf	0.02648683	
	0.9000000000000005	inf	inf	0.0296742	
	0.9200000000000005	inf	inf	0.03422751	
	0.9400000000000005	inf	inf	0.04087058	
	0.9600000000000005	inf	inf	0.05196264	
	0.9800000000000005	inf	inf	0.07423572	
\end{filecontents}

\begin{filecontents}{ring_n.data}
	#Absci	TFA	SFA	QLP	REG	LP	
	1	0.0011	0.0011	0.0011	0.0011	0.0011	
	2	0.00247	0.0027	0.0024	0.0024	0.00235227	
	3	0.00394	0.0059	0.0038	0.0039	0.00360324	
	4	0.00545	0.0140	0.0053	0.0056	0.00485291	
	5	0.007	0.0604	0.0068	0.0075	inf
	6	0.0086	inf	0.0082	0.0096	inf
	7	0.0101	inf	0.0097	0.0119	inf
	8	0.0116	inf	0.0112	0.0144	inf
	9	0.0133	inf	0.0127	0.0171	inf
	10	0.0149	inf	0.0142	0.0200	inf
\end{filecontents}

\begin{filecontents}{ring_n_exec.data}
	#	Abscissa	TFA	SFA	QLP	REG	LP	
	1	0.07811617851257324	0.06906819343566895	0.3476135730743408	0.0	0.06900286674499512	
	2	0.05337882041931152	0.04686713218688965	0.3845374584197998	0.0	0.2161235809326172	
	3	0.053373098373413086	0.07810664176940918	0.48058390617370605	0.0	0.15845584869384766	
	4	0.06899499893188477	0.062490224838256836	0.46977710723876953	0.0	0.34758448600769043	
	5	0.0812382698059082	0.07200264930725098	0.5702199935913086	0.0	100
	6	0.07811141014099121	0.06901001930236816	0.7704944610595703	0.0	
	7	0.08463478088378906	0.1002655029296875	1.3880691528320312	0.0	
	8	0.0781104564666748	0.12201666831970215	2.220266819000244	0.0	
	9	0.08461332321166992	0.09373211860656738	6.225678443908691	0.0	
	10	0.06899690628051758	0.08466553688049316	9.613391160964966	0.0	
\end{filecontents}

\begin{filecontents}{two_hop_K1.data}
	#	 Abscissa	TFA	SFA	QLP	REG	
	0.02	0.01101301	0.011975015351351353	0.01100683	0.012799999999999999	
	0.04	0.011030950000000001	0.012054813726027397	0.01101416	0.012799999999999999	
	0.06	0.01105397	0.012139541666666668	0.01102222	0.012799999999999999	
	0.08	0.011082349999999998	0.012229354802816903	0.0110313	0.012799999999999999	
	0.1	0.01111635	0.012324408857142859	0.01104164	0.012799999999999999	
	0.12000000000000001	0.011156349999999999	0.012424872652173914	0.01105355	0.012799999999999999	
	0.14	0.011202610000000002	0.012530924117647061	0.01106733	0.012799999999999999	
	0.16	0.011255530000000003	0.012642746298507462	0.01108327	0.012799999999999999	
	0.18	0.011315559999999999	0.012760538363636366	0.01110172	0.012799999999999999	
	0.19999999999999998	0.011383109999999998	0.012884506615384615	0.01112303	0.012799999999999999	
	0.21999999999999997	0.01145875	0.0130148655	0.01114759	0.012799999999999999	
	0.23999999999999996	0.01154301	0.013151842619047619	0.01117578	0.012799999999999999	
	0.25999999999999995	0.011636549999999999	0.013295675741935482	0.01120806	0.012799999999999999	
	0.27999999999999997	0.011740069999999998	0.013446620819672132	0.01124488	0.012799999999999999	
	0.3	0.01185437	0.013604939000000002	0.01128676	0.012799999999999999	
	0.32	0.01198033	0.013770907644067798	0.01133425	0.012799999999999999	
	0.34	0.012118980000000001	0.013944818344827588	0.01138796	0.012799999999999999	
	0.36000000000000004	0.01227141	0.014126976947368423	0.01144855	0.012799999999999999	
	0.38000000000000006	0.012438929999999997	0.01431770557142857	0.01151678	0.012799999999999999	
	0.4000000000000001	0.01262299	0.014517340636363636	0.01159346	0.012799999999999999	
	0.4200000000000001	0.012825199999999998	0.014726238888888889	0.01167948	0.012799999999999999	
	0.4400000000000001	0.013047450000000002	0.014944769433962262	0.01177589	0.012799999999999999	
	0.46000000000000013	0.01329189	0.015173325769230769	0.01188381	0.012799999999999999	
	0.48000000000000015	0.01356098	0.015412311823529415	0.01200455	0.012799999999999999	
	0.5000000000000001	0.013857560000000001	0.015662164	0.01213957	0.012799999999999999	
	0.5200000000000001	0.014184890000000002	0.015923332224489797	0.01229054	0.012799999999999999	
	0.5400000000000001	0.014546820000000002	0.016196286	0.01245938	0.012799999999999999	
	0.5600000000000002	0.014947769999999999	0.01648152746808511	0.01264829	0.012799999999999999	
	0.5800000000000002	0.01539296	0.016779575478260873	0.0128598	0.012799999999999999	
	0.6000000000000002	0.01588853	0.017090974666666665	0.01309582	0.012799999999999999	
	0.6200000000000002	0.016441760000000003	0.017416298545454546	0.01335855	0.012799999999999999	
	0.6400000000000002	0.01706132	0.01775616360465116	0.01365456	0.012799999999999999	
	0.6600000000000003	0.01775759	0.01811116742857143	0.01394145	0.012799999999999999	
	0.6800000000000003	0.0185431	0.018482015829268294	0.01424947	0.012799999999999999	
	0.7000000000000003	0.019433009999999997	0.018869368	0.01459246	0.012799999999999999	
	0.7200000000000003	0.02044585	0.019273961692307696	0.01497677	0.012799999999999999	
	0.7400000000000003	0.021604480000000002	0.01969657042105263	0.01540903	0.012799999999999999	
	0.7600000000000003	0.022937160000000005	0.020137978702702706	0.01589721	0.012799999999999999	
	0.7800000000000004	0.02447928	0.02059905833333334	0.01637273	0.012799999999999999	
	0.8000000000000004	0.02627555	0.021080629714285714	0.01684698	0.012799999999999999	
	0.8200000000000004	0.028382779999999996	0.021583681235294118	0.01737399	0.012799999999999999	
	0.8400000000000004	0.030874150000000003	0.022109162727272726	0.01785032	0.012799999999999999	
	0.8600000000000004	0.03384474	0.022658061	0.01836312	0.012799999999999999	
	0.8800000000000004	0.03741969	0.02323148748387097	0.01892804	0.012799999999999999	
	0.9000000000000005	0.04176541	0.02383051	0.01951873	0.012799999999999999	
	0.9200000000000005	0.04710636000000001	0.024456360689655177	0.02002185	0.012799999999999999	
	0.9400000000000005	0.05374972	0.025110233142857145	0.02051127	0.012799999999999999	
	0.9600000000000005	0.06212133	0.02579345177777778	0.02106135	0.012799999999999999	
	0.9800000000000005	0.0728226	0.026507371538461538	0.02165564	0.012799999999999999	
\end{filecontents}

\begin{filecontents}{two_hop_K2.data}
	#	 Abscissa	TFA	SFA	QLP	REG	
	0.02	0.012133000000000001	0.011975015351351353	0.01103027	0.012799999999999999	
	0.04	0.012377840000000003	0.012054813726027397	0.0110666	0.012799999999999999	
	0.06	0.012635249999999999	0.012139541666666668	0.01110927	0.012799999999999999	
	0.08	0.012906100000000002	0.012229354802816903	0.0111586	0.012799999999999999	
	0.1	0.013191310000000001	0.012324408857142859	0.01121493	0.012799999999999999	
	0.12000000000000001	0.013491809999999998	0.012424872652173914	0.01127861	0.012799999999999999	
	0.14	0.013808710000000002	0.012530924117647061	0.01134996	0.012799999999999999	
	0.16	0.014143200000000002	0.012642746298507462	0.0114293	0.012799999999999999	
	0.18	0.01449645	0.012760538363636366	0.01151712	0.012799999999999999	
	0.19999999999999998	0.01486987	0.012884506615384615	0.0116139	0.012799999999999999	
	0.21999999999999997	0.01526501	0.0130148655	0.01171828	0.012799999999999999	
	0.23999999999999996	0.01568346	0.013151842619047619	0.01182152	0.012799999999999999	
	0.25999999999999995	0.016126970000000004	0.013295675741935482	0.01193262	0.012799999999999999	
	0.27999999999999997	0.016597510000000003	0.013446620819672132	0.01205205	0.012799999999999999	
	0.3	0.01709726	0.013604939000000002	0.01218031	0.012799999999999999	
	0.32	0.01762851	0.013770907644067798	0.01231795	0.012799999999999999	
	0.34	0.01819382	0.013944818344827588	0.01246438	0.012799999999999999	
	0.36000000000000004	0.01879606	0.014126976947368423	0.01260483	0.012799999999999999	
	0.38000000000000006	0.01943828	0.01431770557142857	0.01275277	0.012799999999999999	
	0.4000000000000001	0.020123949999999998	0.014517340636363636	0.01290917	0.012799999999999999	
	0.4200000000000001	0.0208568	0.014726238888888889	0.01307452	0.012799999999999999	
	0.4400000000000001	0.02164103	0.014944769433962262	0.01324353	0.012799999999999999	
	0.46000000000000013	0.02248121	0.015173325769230769	0.01341152	0.012799999999999999	
	0.48000000000000015	0.02338249	0.015412311823529415	0.01358787	0.012799999999999999	
	0.5000000000000001	0.024350510000000006	0.015662164	0.01377307	0.012799999999999999	
	0.5200000000000001	0.02539156	0.015923332224489797	0.01396764	0.012799999999999999	
	0.5400000000000001	0.026512650000000002	0.016196286	0.01417215	0.012799999999999999	
	0.5600000000000002	0.02772157	0.01648152746808511	0.01438721	0.012799999999999999	
	0.5800000000000002	0.029026959999999997	0.016779575478260873	0.01461347	0.012799999999999999	
	0.6000000000000002	0.030438600000000003	0.017090974666666665	0.01485175	0.012799999999999999	
	0.6200000000000002	0.03196732	0.017416298545454546	0.01509912	0.012799999999999999	
	0.6400000000000002	0.033625260000000004	0.01775616360465116	0.01534465	0.012799999999999999	
	0.6600000000000003	0.035426139999999995	0.01811116742857143	0.01559787	0.012799999999999999	
	0.6800000000000003	0.037385260000000003	0.018482015829268294	0.01586214	0.012799999999999999	
	0.7000000000000003	0.03951989	0.018869368	0.01613811	0.012799999999999999	
	0.7200000000000003	0.04184951	0.019273961692307696	0.01642647	0.012799999999999999	
	0.7400000000000003	0.044396160000000004	0.01969657042105263	0.01671976	0.012799999999999999	
	0.7600000000000003	0.04718465000000001	0.020137978702702706	0.01702493	0.012799999999999999	
	0.7800000000000004	0.0502432	0.02059905833333334	0.01734385	0.012799999999999999	
	0.8000000000000004	0.05360382	0.021080629714285714	0.01767748	0.012799999999999999	
	0.8200000000000004	0.05730283	0.021583681235294118	0.01802687	0.012799999999999999	
	0.8400000000000004	0.06138171	0.022109162727272726	0.01839323	0.012799999999999999	
	0.8600000000000004	0.0658874	0.022658061	0.01877791	0.012799999999999999	
	0.8800000000000004	0.0708742	0.02323148748387097	0.01919248	0.012799999999999999	
	0.9000000000000005	0.07640388000000001	0.02383051	0.01963632	0.012799999999999999	
	0.9200000000000005	0.08254687999999999	0.024456360689655177	0.02010755	0.012799999999999999	
	0.9400000000000005	0.08938492	0.025110233142857145	0.02060835	0.012799999999999999	
	0.9600000000000005	0.09701119999999999	0.02579345177777778	0.02113885	0.012799999999999999	
	0.9800000000000005	0.10553425	0.026507371538461538	0.02170217	0.012799999999999999		
\end{filecontents}

\begin{filecontents}{two_hop_K10.data}
	#	 Abscissa	TFA	SFA	QLP	REG	
	0.02	0.01138415	0.011975015351351353	0.01101465	0.012799999999999999	
	0.04	0.01147491	0.012054813726027397	0.01103164	0.012799999999999999	
	0.06	0.01157269	0.012139541666666668	0.01105126	0.012799999999999999	
	0.08	0.01167793	0.012229354802816903	0.01107378	0.012799999999999999	
	0.1	0.01179109	0.012324408857142859	0.0110995	0.012799999999999999	
	0.12000000000000001	0.011912700000000002	0.012424872652173914	0.01112874	0.012799999999999999	
	0.14	0.01204333	0.012530924117647061	0.01116183	0.012799999999999999	
	0.16	0.0121836	0.012642746298507462	0.01119914	0.012799999999999999	
	0.18	0.0123342	0.012760538363636366	0.01124105	0.012799999999999999	
	0.19999999999999998	0.01249591	0.012884506615384615	0.01128796	0.012799999999999999	
	0.21999999999999997	0.012669530000000002	0.0130148655	0.01134033	0.012799999999999999	
	0.23999999999999996	0.012856019999999998	0.013151842619047619	0.01139862	0.012799999999999999	
	0.25999999999999995	0.01305639	0.013295675741935482	0.01146336	0.012799999999999999	
	0.27999999999999997	0.0132718	0.013446620819672132	0.01153512	0.012799999999999999	
	0.3	0.013503499999999998	0.013604939000000002	0.01161449	0.012799999999999999	
	0.32	0.013752869999999999	0.013770907644067798	0.01170216	0.012799999999999999	
	0.34	0.01402154	0.013944818344827588	0.01179886	0.012799999999999999	
	0.36000000000000004	0.01431123	0.014126976947368423	0.01190539	0.012799999999999999	
	0.38000000000000006	0.014623959999999997	0.01431770557142857	0.01202266	0.012799999999999999	
	0.4000000000000001	0.0149619	0.014517340636363636	0.01215163	0.012799999999999999	
	0.4200000000000001	0.015327589999999999	0.014726238888888889	0.01229219	0.012799999999999999	
	0.4400000000000001	0.01572387	0.014944769433962262	0.01244629	0.012799999999999999	
	0.46000000000000013	0.01615389	0.015173325769230769	0.01261538	0.012799999999999999	
	0.48000000000000015	0.01662138	0.015412311823529415	0.01280067	0.012799999999999999	
	0.5000000000000001	0.01713046	0.015662164	0.01297642	0.012799999999999999	
	0.5200000000000001	0.01768585	0.015923332224489797	0.01316568	0.012799999999999999	
	0.5400000000000001	0.01829305	0.016196286	0.01337143	0.012799999999999999	
	0.5600000000000002	0.01895835	0.01648152746808511	0.0135954	0.012799999999999999	
	0.5800000000000002	0.01968897	0.016779575478260873	0.01383948	0.012799999999999999	
	0.6000000000000002	0.02049332	0.017090974666666665	0.01410613	0.012799999999999999	
	0.6200000000000002	0.021381209999999998	0.017416298545454546	0.01439933	0.012799999999999999	
	0.6400000000000002	0.022364020000000002	0.01775616360465116	0.01469271	0.012799999999999999	
	0.6600000000000003	0.02345518	0.01811116742857143	0.01498117	0.012799999999999999	
	0.6800000000000003	0.02467046	0.018482015829268294	0.01529203	0.012799999999999999	
	0.7000000000000003	0.026028550000000004	0.018869368	0.01562372	0.012799999999999999	
	0.7200000000000003	0.027551649999999997	0.019273961692307696	0.01595075	0.012799999999999999	
	0.7400000000000003	0.029266240000000002	0.01969657042105263	0.0162813	0.012799999999999999	
	0.7600000000000003	0.031204149999999993	0.020137978702702706	0.01663536	0.012799999999999999	
	0.7800000000000004	0.033403680000000005	0.02059905833333334	0.01701534	0.012799999999999999	
	0.8000000000000004	0.03591148	0.021080629714285714	0.01742397	0.012799999999999999	
	0.8200000000000004	0.03878423	0.021583681235294118	0.01786433	0.012799999999999999	
	0.8400000000000004	0.04209165	0.022109162727272726	0.01831769	0.012799999999999999	
	0.8600000000000004	0.04591967	0.022658061	0.01873096	0.012799999999999999	
	0.8800000000000004	0.050375130000000004	0.02323148748387097	0.01915632	0.012799999999999999	
	0.9000000000000005	0.0555917	0.02383051	0.01958355	0.012799999999999999	
	0.9200000000000005	0.06173753	0.024456360689655177	0.02004611	0.012799999999999999	
	0.9400000000000005	0.06902609999999999	0.025110233142857145	0.02055143	0.012799999999999999	
	0.9600000000000005	0.07773041	0.02579345177777778	0.02109692	0.012799999999999999	
	0.9800000000000005	0.08820174	0.026507371538461538	0.02167893	0.012799999999999999	
\end{filecontents}

\begin{filecontents}{two_hop_load.data}
	#	Load	TFA	SFA	IQLP	REG	
	0.02	22.03378804851957	24.06279699052762	22.01437927	25.800000000000008	
	0.04	22.087767958382162	24.246346824536218	22.03229446	25.800000000000008	
	0.06	22.162963152942694	24.451444679921526	22.05587586	25.800000000000008	
	0.08	22.26054339904239	24.67896605894696	22.08722204	25.800000000000008	
	0.1	22.381843772915307	24.929872864211333	22.1284209	25.800000000000008	
	0.12	22.52838687696327	25.205220096273823	22.18157252	25.800000000000008	
	0.14	22.70190893730083	25.50616323494912	22.24881405	25.800000000000008	
	0.16	22.904390552880223	25.833966373047872	22.3323472	25.800000000000008	
	0.18	23.138093043431372	26.19001117884669	22.43446887	25.800000000000008	
	0.20	23.405601565396267	26.575806771911143	22.55760567	25.800000000000008	
	0.22	23.709876445626612	26.99300060615665	22.70435316	25.800000000000008	
	0.24	24.054314539160274	27.44339046432136	22.87752103	25.800000000000008	
	0.26	24.442822872873005	27.9289376794548	23.0801854	25.800000000000008	
	0.28	24.879907421888905	28.451781711725314	23.3157501	25.800000000000008	
	0.3	25.370780621533363	29.014256222959318	23.58644455	25.800000000000008	
	0.32	25.921492200047854	29.618906807003135	23.89600646	25.800000000000008	
	0.34	26.53908920211743	30.268510551418625	24.2490129	25.800000000000008	
	0.36	27.231812764501584	30.966097625379515	24.65073726	25.800000000000008	
	0.38	28.00934144636682	31.714975110142635	25.07989729	25.800000000000008	
	0.40	28.883093908240177	32.51875331236202	25.55686702	25.800000000000008	
	0.42	29.866607755346074	33.38137482706065	26.05988902	25.800000000000008	
	0.44	30.97601681027757	34.307146646563766	26.56435105	25.800000000000008	
	0.46	32.23065652274204	35.30077564445359	27.0888929	25.800000000000008	
	0.48	33.6538374758929	36.36740779998474	27.62784762	25.800000000000008	
	0.50	35.27384119479539	37.512671568798545	28.19015624	25.800000000000008	
	0.52	37.12521244299045	38.74272585062984	28.78299175	25.800000000000008	
	0.54	39.250450489173836	40.06431305449548	29.38923299	25.800000000000008	
	0.56	41.702242308411066	41.48481781712378	30.02669003	25.800000000000008	
	0.58	44.54643922167461	43.01233199171884	30.69088192	25.800000000000008	
	0.60	47.86606408427015	44.65572659220441	31.38340174	25.800000000000008	
	0.62	51.76676281139653	46.42473145357957	32.10442444	25.800000000000008	
	0.64	56.384303828442945	48.3300234527345	32.86782354	25.800000000000008	
	0.66	61.895017159315735	50.383324226898324	33.64969722	25.800000000000008	
	0.68	68.5305083443643	52.597508429785606	34.49100351	25.800000000000008	
	0.70	76.59867500534261	54.98672367954271	35.35152846	25.800000000000008	
	0.72	86.51415242070476	57.566523478941534	36.25751721	25.800000000000008	
	0.74	98.84308545733336	60.354014528227985	37.23482204	25.800000000000008	
	0.76	114.37002889220935	63.36802000603078	38.24092354	25.800000000000008	
	0.78	134.19963011407026	66.62926056535653	39.28940154	25.800000000000008	
	0.80	159.9140113232311	70.16055498167975	40.41840613	25.800000000000008	
	0.82	193.82113306463907	73.98704260041325	41.63869262	25.800000000000008	
	0.84	239.35494384781302	78.1364299637426	42.85032068	25.800000000000008	
	0.86	301.73453414200344	82.63926425430691	44.1443755	25.800000000000008	
	0.88	389.07602764293745	87.52923647812914	45.54094163	25.800000000000008	
	0.90	514.316507595135	92.84351762449221	47.01873098	25.800000000000008	
	0.92	698.6351458487571	98.62313138942245	48.47013297	25.800000000000008	
	0.94	977.7176192473537	104.91336743580763	49.97555537	25.800000000000008	
	0.96	1413.5940355165053	111.76423959120004	51.61690597	25.800000000000008	
	0.98	2117.780630822055	119.23099385894727	53.38640435	25.800000000000008	
\end{filecontents}

\maketitle

\begin{abstract}
	Computing accurate deterministic performance bounds is a strong need for communication technologies having strong requirements on latency and reliability. Beyond new scheduling protocols such as TSN, the FIFO policy remains at work within each class of communication. 
	
	In this paper, we focus on computing deterministic performance bounds in FIFO networks in the network calculus framework. We propose a new algorithm based on linear programming that presents a trade-off between accuracy and tractability. This algorithm is first presented for tree networks. In a second time, we generalize our approach and present a linear program for computing performance bounds for arbitrary topologies, including cyclic dependencies. Finally, we provide numerical results, both of toy examples and real topologies, to assess the interest of our approach.
\end{abstract}

\section{Introduction}

The aim of new communication technologies is to provide deterministic
services, with strong requirements on buffer occupancy, latency and reliability. An example of such a standard under discussion is Time-Sensitive Networking (TSN), which is part of the 802.1 working group~\cite{TSN} and has potential applications to industrial and automotive networks. In this new communication paradigm, critical traffic (having strong delay and reliability requirements) and best-effort traffic can share switches and routers. Even if scheduling policies have been defined to cope with these heterogeneous traffic classes, it is a necessity to develop tools for accurately dimensioning the bandwidth allocated to each class. 

\medskip

Properly dimensioning a network relies on the ability to compute accurate performance bounds (delay or buffer occupancy) in networks. As far as deterministic performance bounds are concerned, one popular theory is 
network calculus, which is based on the (min, plus) semi-ring. Elements of the network, such as the traffic flows and switches, are described by {\em curves},  and upper bounds of the performances  (delay, buffer occupancy) are computed from this description. This theory has already been successfully applied to various types of networks. One can cite switched network~\cite{Cruz1995}, Video-on-Demand~\cite{MR2006}, AFDX (Avionics Full Duplex) networks~\cite{BNOT2010}, TSN/AVB~\cite{MSMB18, ZPZ0}.

\medskip

Different solutions have recently been proposed to analyze these types of networks with network calculus. It is first required to give a precise modeling of the scheduling policy (priorities, processor sharing scheduling such as DRR (Deficit Round Robin) \cite{BSS12}, WRR (Weighted Round Robin), to deduce network guarantees for flows scheduled in the same {\em class}, where the FIFO (First In First Out) policy is at work. Being able to compute accurate performance bounds is FIFO networks is then crucial. 

Recent work focus on the analysis of FIFO networks, and their main goal is to reduce the computational cost for deriving performance guarantees (upper bounds of worst-case delay). For example, Mohammadpour et al. propose in~\cite{MSL19} propose a modeling of TNS, and the insertion of regulators~\cite{LeB18} to control the arrival processes at each router; Thomas et al. compare in~\cite{TBM19} the analysis with partial insertion regulators (from complete to none) using TFA++ (total flow analysis) proposed in~\cite{ML17}. These analyses have a very low complexity, which allow the analysis of large-scale networks, but can have pessimistic bounds. 

Other works focus on the accuracy of the bounds computed, in order to get the tightest result possible. From the first paper on network calculus, phenomena such as the {\em pay burst only once} and the {\em pay multiplexing only once} have been exhibited, and each time they led to improvements of the performance bounds. More recently, algorithms based on linear programming have been proposed in~\cite{BS12, BS15} to compute tight bounds in FIFO networks, but the complexity of these algorithms is too high to be used in most of the networks.

Nevertheless, some networks are not so large that they require very low complexity performance bounds. For example the linear programming method can be used to improve the computation of the performance bounds in smaller networks, such as industrial networks. For example, the TSN industrial network presented in~\cite{ZCWW19} has less than 20 nodes, where every flows cross at most 5 routers. The performances of these network could benefit from a more precise analysis at a small computational cost, even if this would be out of reach for larger networks. 

\paragraph{Objective and contributions.}
The objective of this paper is to explore a solution in-between these two extreme, that could be both tractable and lead to accurate bounds. 
We introduce a new polynomial size linear programming to compute performance bounds in FIFO networks, which could present a good trade-off between complexity and accuracy to analyze medium-size networks. 
Furthermore, we compare this algorithm with different network calculus methods. More precisely, our contributions are the following.

\begin{enumerate}
	\item We first propose a simplified model (regarding that of~\cite{BS12}) for a linear program computing bounds in FIFO trees. This model can also take into account the shaping of transmission links. While losing some accuracy, this algorithms is more tractable, and achieve better performances bounds than the other methods in the literature. 
	\item We generalize the linear programming algorithms to network with cyclic dependencies, improving the stability region of the other existing methods. 
	\item We compare our algorithms against the literature in both toy examples (tandems and rings) and real-world use-cases. 
\end{enumerate}

The rest of the paper is organized as follows. First, the network calculus framework and our network model are briefly recalled in Section~\ref{sec:framework}. The state of the art on network calculus for FIFO networks is described in Section~\ref{sec:soa}. In Section~\ref{sec:tree}, we present the first contribution of the paper, that is, a new linear programming proposition to compute performance bounds in FIFO tree networks, in polynomial time. This approach is generalized in Section~\ref{sec:ff} and~\ref{sec:cyclic} respectively to the case of feed-forward networks and networks with cyclic dependencies. Finally, we compare the new algorithm with the state of the art in several examples in Section~\ref{sec:numerical} before concluding. 

\section{Network calculus framework}
\label{sec:framework}
In this section, we recall the network calculus framework and present the basic results that will be used in the next parts of the paper. More details about the framework can be found in~\cite{BBL18, LT2001, Chang2000}. 

We will use the following notations: $\R_+$ is the set of non-negativea reals, for all $n\in \N$, $\N_n = \{1,\ldots n\}$,   and for all $x \in \R$, $(x)_+ = \max(0, x)$.
\subsection{Arrival and service curves}
\subsubsection{Data processes and arrival curves.}
Flows of data are represented by 
cumulative processes. More precisely, if $A$ represents a flow at a certain point
in the network, $A(t)$ is the amount of data of that flow crossing
that point during the time interval $[0,t)$, with the convention $A(0)=0$. The cumulative processes are non-decreasing, left-continuous and null at zero. We denote by $\cF$ the set of such functions. 

A flow $A$ is constrained by the arrival curve $\alpha$,
or is $\alpha$-constrained, if $$\forall s,t\in\R_+ \text{ with }s\leq t,
\quad A(t) - A(s) \leq \alpha(t-s).$$ 
In the following we will mainly consider  {\em token-bucket}
functions: $\gamma_{b,r}:0\mapsto 0;~t\mapsto b + r t$, if $t>0$.  The
{\em burst} $b$ can be interpreted as the maximal amount of data
that can arrive simultaneously and the {\em arrival rate}~$r$ as a maximal long-term
arrival rate of data.

\subsubsection{Servers and service curves.}

An $n$-server $\cS\subseteq \cF^n \times \cF^n$ (illustrated for $n=1$ in Figure~\ref{fig:server}) is a relation between $n$ arrival processes $(A_i)_{i=1}^n$ and $n$ departure processes $(D_i)_{i=1}^n$ such that $A_i\geq D_i$ for all $i\in\sint{n}$. The latter inequality models the causality of the system (no data is created inside the system).

\begin{figure}[htbp]
		\centering
		\begin{tikzpicture}[server/.style={shape=rectangle,draw,minimum height=.8cm,inner xsep=4ex}]
		\node[server,name=S1] at (0,0) {$\cS$};
		\draw[->, blue] (-1.2,-0) node[left] {$A$} -- (1.5,0) node[right] {$D$};
		\end{tikzpicture}
\caption{Server model.}
\label{fig:server}
\end{figure}
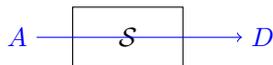

The role of a service curve is to constrain the relation between the
inputs of a server and its outputs. 

We say that~$\beta\in\cF$ is a {\em service curve} for 1-server~$\cS$ if
\begin{equation}
\label{eq:ssc}
\forall (A,D) \in \cS,~\ A \geq D \geq A\conv\beta,
\end{equation}
where $*$ is the (min,plus) convolution: for all $t\geq 0$, $A\conv \beta(t) = \inf_{0\leq s\leq t} A(s) + \beta(t-s)$.  
In the following we will use
\begin{itemize}
\item  the {\em rate-latency}
service curves: $\beta_{R,T}:t\mapsto R(t-T)_+$, where $T$ and $R$ can be roughly interpreted as $T$ is the
{\em latency} until the server becomes active and $R$ is its minimal
{\em service rate} after this latency;  
\item the {\em pure delay} service curve: $\delta_{d}: t\mapsto 0$ if $t\leq d$; $t\mapsto +\infty$ if $t>d$. We have $A\conv \delta_d(t) = A((t-d)_+)$
\end{itemize}

\paragraph{Service curve} An $n$-server $\cS$ offers a service curve $\beta$ if it offers the service curve $\beta$ for the aggregated flows: for all $((A_i), (D_i)) \in \cS$,  $(\sum_{i=1}^m A_i) \geq (\sum_{i=1}^m D_i) \conv \beta$.
We call the flow with arrival process $\sum_{i=1}^m A_i$ the aggregate process of flows $1,\ldots,n$. 

\paragraph{FIFO service curve} In this paper, we assume that the service policy in this system in FIFO (First-In-First-Out): data are served in their arrival order. It is possible to find service guarantees for individual flows.

\begin{theorem}[{\cite[Proposition 6.2.1]{LT2001}}]
	\label{th:fifo}
	Consider a FIFO server with service curve $\beta$, crossed
	by two flows with respective arrival curves $\alpha_1$ and $\alpha_2$. For all $\theta\geq 0$, $\beta_{\theta}$ is a residual service curve for the first flow, with 
	$$\beta_{\theta} = [\beta - \alpha_2\conv\delta_{\theta}]_+ \wedge \delta_{\theta}.$$ 		
\end{theorem}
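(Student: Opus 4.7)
The plan is to fix $t\geq 0$ and exhibit, for every $\varepsilon>0$, an $s\in[0,t]$ such that $D_1(t)\geq A_1(s)+\beta_\theta(t-s)-\varepsilon$; then passing to the infimum yields $D_1(t)\geq(A_1\conv\beta_\theta)(t)$. Note first that $\beta_\theta(u)=0$ for $u\leq\theta$ and $\beta_\theta(u)=[\beta(u)-\alpha_2(u-\theta)]_+$ for $u>\theta$, so the infimum only really depends on the two regimes $t-s\leq\theta$ and $t-s>\theta$.

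The central FIFO tool is to introduce the cut time $\tau=\sup\{u\leq t:A_1(u)+A_2(u)\leq D_1(t)+D_2(t)\}$. By left-continuity, $A(\tau)\leq D(t)$. FIFO, in the form that the first $D(t)$ bits of the aggregate that have departed are exactly those that arrived first, yields on the one hand $D_i(t)\geq A_i(\tau)$ for $i\in\{1,2\}$, and on the other hand $D_2(t)\leq A_2(\tau)$ (up to a burst at $\tau$ that I will handle by a $\tau^+$ limit).

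I would then split on the position of $\tau$ with respect to $t-\theta$. If $\tau\geq (t-\theta)_+$, take $s=\tau$; since $t-s\leq\theta$, $\beta_\theta(t-s)=0$ and the FIFO inequality $D_1(t)\geq A_1(\tau)$ directly gives the wanted bound. If on the contrary $\tau<t-\theta$, invoke the aggregate service curve: for every $\varepsilon>0$ there exists $s^\star\in[0,t]$ with $D_1(t)+D_2(t)\geq A_1(s^\star)+A_2(s^\star)+\beta(t-s^\star)-\varepsilon$. Using $D_2(t)\leq A_2(\tau)$, one gets
\[
D_1(t)\geq A_1(s^\star)+\beta(t-s^\star)+A_2(s^\star)-A_2(\tau)-\varepsilon.
\]
When $s^\star>\tau$ the bracket $A_2(s^\star)-A_2(\tau)$ is non-negative, so $D_1(t)\geq A_1(s^\star)+\beta(t-s^\star)-\varepsilon\geq A_1(s^\star)+\beta_\theta(t-s^\star)-\varepsilon$ because $\beta_\theta\leq\beta$. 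When $s^\star\leq\tau$, the arrival curve of flow $2$ yields $A_2(\tau)-A_2(s^\star)\leq\alpha_2(\tau-s^\star)\leq\alpha_2(t-s^\star-\theta)$, the last inequality relying on $\tau\leq t-\theta$ and monotonicity of $\alpha_2$. Combined with the FIFO bound $D_1(t)\geq A_1(\tau)\geq A_1(s^\star)$, this produces
\[
D_1(t)\geq A_1(s^\star)+\bigl[\beta(t-s^\star)-\alpha_2(t-s^\star-\theta)\bigr]_+-\varepsilon = A_1(s^\star)+\beta_\theta(t-s^\star)-\varepsilon,
\]
since $t-s^\star>\theta$ in this subcase. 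Letting $\varepsilon\to 0$ then concludes in all branches.

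I expect the only delicate step to be the FIFO upper bound $D_2(t)\leq A_2(\tau)$: strictly speaking one has $A(\tau)\leq D(t)\leq A(\tau^+)$, so in the presence of a simultaneous burst at time $\tau$ one needs to argue by approximating $\tau$ from the right and passing to the limit in the above inequalities, using left-continuity of the cumulative processes. Everything else reduces to plugging in the definition of $\beta_\theta$ and combining the aggregate service inequality with the two FIFO bounds on flow $2$.
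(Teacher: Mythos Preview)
The paper does not prove this theorem: it is quoted as Proposition~6.2.1 of Le~Boudec and Thiran~\cite{LT2001} and used as a black box, so there is no ``paper's own proof'' to compare against.

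Your argument is essentially the standard one and is correct. The decomposition via the FIFO cut time $\tau$, the case split on $\tau\gtrless t-\theta$, and the further split on the position of the service-curve witness $s^\star$ relative to $\tau$ are exactly how the result is established in the reference. Two small remarks. First, in the sub-case $s^\star>\tau$ you only need $A_2(s^\star)\geq A_2(\tau^+)$, which holds because $A_2$ is non-decreasing; you never need $D_2(t)\leq A_2(\tau)$ there, only the weaker $D_2(t)\leq A_2(\tau^+)$, so the burst caveat is harmless in that branch. Second, in the sub-case $s^\star\leq\tau$, the inequality $A_2(\tau^+)-A_2(s^\star)\leq\alpha_2(t-s^\star-\theta)$ follows by applying the arrival-curve bound at times $u\downarrow\tau$ and using $\tau<t-\theta$ together with right-continuity of $\alpha_2$ on $(0,\infty)$ (which holds for the concave curves used in the paper); this is precisely the limiting argument you anticipated, and it closes the gap cleanly.
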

	
One can notice that the service curves computed when $\theta$ is varying are not comparable, and lead to different performances. 

\paragraph{Greedy shapers} In most networks, the transmission rate is physically limited by the capacity of a wire or a channel, which limits the quantity of data that can be transmitted to the next server. This phenomenon is taken into account by {\em greedy shapers}.
Let $B$ be a cumulative process, crossing a leaky-bucket greedy shaper $\sigma: t\mapsto L + Ct$. The output process is $D = B\conv \sigma$. Here $C$ represents the maximum capacity of the server, and $L$ can represent a packet length, hence take into account the packetization effect. 

A server whose transmission rate is limited by a token-bucket greedy shaper can then be modeled by a system that is composed of a server $\beta$ and a greedy shaper $\sigma$, as depicted on Figure~\ref{fig:shap}. We will always assume that $\sigma \geq \beta$, which is not a restriction since the service offered to a flows is limited by the physical limitations of the server. 

Consider a system consisting in a 1-server with service curve $\beta$ followed by a greedy-shaper $\sigma$. The departure process then satisfies:
$$D = B\conv \sigma \geq (A \conv \beta) \conv \sigma = A \conv (\beta \conv \sigma) = A\conv \beta,$$ 
where the last equality comes from $\beta\leq \sigma$. 

\begin{figure}[htbp]
	\centering
	\begin{tikzpicture}[server/.style={shape=rectangle,draw,minimum height=.8cm,inner xsep=4ex}]
	\node[server,name=S1] at (0,0) {$\beta$};
	\node[draw, circle] (sh) at (2, 0) {$\sigma$};
	\draw[->, blue] (-1.2, 0) node[left] {$A$} -- (S1) -- node[above] {$B$} (sh) -- (3, 0)node[right] {$D$};
\end{tikzpicture}
\caption{Shaping of the output process.}
\label{fig:shap}
\end{figure}
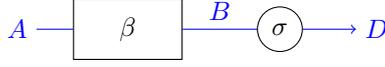

As a consequence, the whole system still offers $\beta$ as service curve.

\subsubsection{Output arrival curve}
Departure process are also characterized by an arrival curve. Such an arrival curve can be computed in function of the arrival curve of the arrival process aand the shaping and service curves of the server. 
\begin{theorem}[{\cite[Theorem 5.3]{BBL18}}]
	\label{th:output}
	Suppose that $A$ is $\alpha$-constrained and crosses a server offering the service curve $\beta$ and with greedy shaper $\sigma$. Then the departure process $D$ is $\alpha\deconv \beta \wedge \sigma$-constrained, where $\deconv$ is the (min, plus)-deconvolution: $\alpha\deconv\beta(t) = \sup_{u\geq 0} \alpha(t+u) - \beta(u)$.
\end{theorem}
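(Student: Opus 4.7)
The plan is to establish the two arrival-curve properties separately and then take their minimum. Concretely, I will show (i) that $D$ is $\alpha\deconv \beta$-constrained, which captures the effect of the pure service $\beta$ applied to an $\alpha$-constrained input; and (ii) that $D$ is $\sigma$-constrained, which captures the shaping effect of $\sigma$. Since both inequalities hold pointwise for every pair $s\le t$, the conjunction yields $D(t)-D(s)\le (\alpha\deconv\beta\wedge \sigma)(t-s)$, which is exactly the claimed arrival-curve bound.

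For step (i), I would fix $s\le t$ and combine the service-curve inequality $D(s)\ge A\conv \beta(s)$ with the causality bound $D(t)\le A(t)$. Expanding the convolution gives
\[
D(t)-D(s) \le A(t) - \inf_{0\le u\le s}\bigl(A(u)+\beta(s-u)\bigr) = \sup_{0\le u\le s}\bigl(A(t)-A(u)-\beta(s-u)\bigr).
\]
After the change of variable $v=s-u$ and using the $\alpha$-constraint $A(t)-A(s-v)\le \alpha(t-s+v)$, the supremum is bounded by $\sup_{v\ge 0}\alpha((t-s)+v)-\beta(v)=\alpha\deconv\beta(t-s)$. This is the standard ``pay bursts only once'' style manipulation and is the computational core of the proof, but it does not present any real difficulty once the change of variable is in place.

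For step (ii), I would use that the output of the greedy shaper satisfies $D=B\conv \sigma$, which as noted in the paragraph preceding the theorem already shows that $D$ inherits $\sigma$ as an arrival curve: for any $s\le t$,
\[
D(t)-D(s) = B\conv\sigma(t) - B\conv\sigma(s) \le \sigma(t-s),
\]
by the classical fact that the output of a greedy shaper with a sub-additive shaping curve $\sigma$ (with $\sigma(0)=0$) is itself $\sigma$-constrained. This step relies on nothing beyond the definition of greedy shaping and sub-additivity of $\sigma$, so it is straightforward.

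I do not foresee a genuine obstacle here: the result is essentially a combination of two well-known facts of network calculus, and the only subtlety is the bookkeeping in step (i) to make sure the change of variable inside the infimum correctly produces a (min,plus)-deconvolution. The mild care needed is to ensure the extension of the supremum from $v\in[0,s]$ to $v\ge 0$ is valid, which holds because $\alpha$ is non-decreasing and $A$ is defined for all non-negative arguments (with the convention $A(u)=0$ for $u\le 0$ allowing the bound $A(t)-A(s-v)\le \alpha(t-s+v)$ to remain true even when $s-v<0$).
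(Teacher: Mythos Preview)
The paper does not give its own proof of this statement: Theorem~\ref{th:output} is simply quoted from \cite[Theorem~5.3]{BBL18} and used as a black box, so there is nothing to compare against. Your argument is the standard textbook derivation and is correct; the only point worth noting is that in step~(i) you use $D(t)\le A(t)$ and $D(s)\ge A\conv\beta(s)$, and the latter holds for the full system (server followed by shaper) precisely because of the observation made just before the theorem that $\beta\le\sigma$ implies the composite system still offers $\beta$ as a service curve.
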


In case of token-buckets arrival curve and greedy-shaper, the departure process then has two token-bucket constraints: $\alpha\deconv\beta$ and $\sigma$. 

In the case of a token-bucket arrival curve $\alpha = \gamma_{b,r}$ and rate-latency service curve $\beta = \beta_{R,T}$ with $R>r$, one has $\alpha\deconv\beta = \gamma_{b + tT,r}$.

\subsection{Performance guarantees in a server}

\paragraph{Backlog and delay} Let $\cS$ be a 1-server and $(A,D)\in\cS$.  The backlog of that server at time
$t$ is $b(t) = A(t) - D(t)$. The worst-case backlog is then
$b_{\max}=\sup_{t\geq 0} b(t)$.

We denote $b_{\max}(\alpha,\beta)$
the maximum backlog 
that can be obtained for an $\alpha$-constrained flow crossing
a server offering the service curve $\beta$. It can be shown to be the maximum vertical distance between $\alpha$ and $\beta$. For example, we have
$b_{\max}(\gamma_{b,r} ,\beta_{R,T}) =  b+rT $ if $r\leq R$.

The  delay of data exiting at time $t$ is $d(t) = \sup\{d\geq 0\mid A(t-d) - D(t)\}$. The worst-case delay is then
$d_{\max}=\sup_{t\geq 0} d(t)$.

We denote $d_{\max}(\alpha,\beta)$
the maximum delay 
that can be obtained for an $\alpha$-constrained flow crossing
a server offering the service curve $\beta$. It can be shown to be the maximum horizontal distance between $\alpha$ and $\beta$. For example, we have
$d_{\max}(\gamma_{b,r} ,\beta_{R,T}) =  T + \frac T R$ if $r <  R$.

Backlog and delay are illustrated on Figures~\ref{fig:proc} and~\ref{fig:perf}.

\begin{figure}[htbp]
	\centering
	\subfloat[\label{fig:proc}Processes]{
		\centering
		\begin{tikzpicture}
			\draw[->] (0,0) -- (3,0)node[below, pos = 0.9] {time};
			\draw[->] (0,0) -- (0,2) node[left, pos = 0.8] {\rotatebox{90}{data}};
			\draw[red] (0,0) -- (0, 0.5) -- (1, 1) -- (1.5,1.5) -- (3,1.7) node[above = -0.05cm, pos = 0.8] {$A$};
			\draw[blue] (0,0) -- (0.5,0) -- (0.8333, 0.667) -- (1.5, 1) --  (2,1.5) -- (3,1.63) node[below, pos = 0.5] {$D$};
			\draw[<->, yellow!70!black, thick]   (0.5, 0) node[below]{$t$} -- (0.5, 0.75) node[left, pos = 0.5] {\rotatebox{90}{\tiny $b(t)$}};
			\draw[green!70!black, <->, thick] (1.2,1.2) -- (1.7,1.2) node[above = -0.1cm, pos = 0.75] {\tiny $d(u)$};
			\draw[green!70!black, dotted] (1.7,1.2) -- (1.7, 0) node[below = 0.05cm] {$u$};
	\end{tikzpicture}}
	\hspace{2cm}
	\subfloat[\label{fig:perf}Performances]{
		\centering
		\begin{tikzpicture}
			\draw[->] (0,0) -- (3,0)node[below, pos = 0.9] {$t$};
			\draw[->] (0,0) -- (0,2) node[left, pos = 0.8] {\rotatebox{90}{data}};
			\draw[red] (0,1) -- (3,1.7) node[below = -0.05cm, pos = 0.95] {$\alpha$};
			\draw[blue] (0,0) -- (1,0) -- (3,2) node[below, pos = 0.5] {$\beta$};
			\draw[green!70!black, <->, thick] (0,1) -- (2,1) node[below = -0.1cm, pos = 0.2] {$d_{\max}$};
			\draw[yellow!70!black, <->, thick] (1,0) -- (1, 1.25) node[right = -0.1cm, pos = 0.5] {\rotatebox{90}{$b_{\max}$}};
			\draw[red!70!black, <->, thick] (0,1.65) -- (2.7, 1.65) node[above = -0.1cm, pos = 0.5] {$\ell_{\max}$};
	\end{tikzpicture}}
	\caption{Processes and worst-case performances.}
\end{figure}
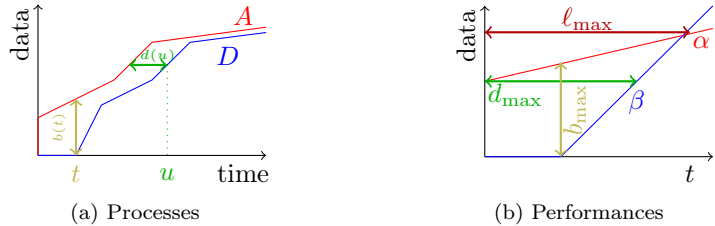

\paragraph{From performances to output arrival curves}

It is also possible to compute alternative arrival curves of the output processes using delay and backlog upper bounds of the servers. 

\begin{theorem}[{\cite[Theorem 7.4]{BBL18}}]
	\label{th:delay-curve}
	Consider a FIFO server crossed by an $\alpha$-constrained flow. Suppose that $d$ is an upper bound of the delay for this flow. Then $\alpha\deconv\delta_{d}$ is an arrival curve for the departure process.
\end{theorem}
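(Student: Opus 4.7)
The plan is to unfold the $(\min,+)$-deconvolution against $\delta_d$ into an explicit form, then combine causality and the FIFO delay bound with the arrival-curve property of $A$. First I would simplify $\alpha \deconv \delta_d$: since $\delta_d(u) = 0$ for $u \leq d$ and $+\infty$ otherwise, and $\alpha$ is non-decreasing, we get
\begin{equation*}
(\alpha \deconv \delta_d)(t) \;=\; \sup_{u \geq 0} \bigl(\alpha(t+u) - \delta_d(u)\bigr) \;=\; \sup_{0 \leq u \leq d} \alpha(t+u) \;=\; \alpha(t+d).
\end{equation*}
So the goal reduces to proving that $D(t) - D(s) \leq \alpha(t - s + d)$ for all $0 \leq s \leq t$.

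Next I would use the two ingredients given by the hypotheses. By causality, every server satisfies $D(t) \leq A(t)$. The FIFO delay bound translates (via the definition of $d(t)$ recalled just before) into the pointwise inequality $D(s) \geq A((s-d)_+)$: indeed, on a FIFO server, if the worst-case delay is at most $d$, then all data that has arrived by time $s-d$ must have departed by time $s$. Combining these two facts with the arrival-curve property of $A$ gives
\begin{equation*}
D(t) - D(s) \;\leq\; A(t) - A((s-d)_+) \;\leq\; \alpha\bigl(t - (s-d)_+\bigr) \;\leq\; \alpha(t - s + d),
\end{equation*}
where the last inequality uses that $\alpha$ is non-decreasing and $t - (s-d)_+ \leq t - s + d$. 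This is exactly $(\alpha \deconv \delta_d)(t-s)$, which establishes the claim.

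There is no real obstacle here: the only subtlety is correctly reading the FIFO delay bound as the pointwise inequality $D(s) \geq A((s-d)_+)$, which handles both the regime $s \geq d$ (where the bound is nontrivial) and $s < d$ (where it reduces to $D(s) \geq 0$, with the required inequality holding simply by causality and monotonicity of $\alpha$). Everything else is bookkeeping on the definition of the $(\min,+)$-deconvolution.
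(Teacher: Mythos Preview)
The paper does not prove this theorem at all; it is quoted verbatim from \cite[Theorem 7.4]{BBL18} and used as a black box. Your argument is correct and is essentially the standard proof: reduce $\alpha\deconv\delta_d$ to the shift $t\mapsto\alpha(t+d)$, then sandwich $D(t)-D(s)$ between $A(t)$ (causality) and $A((s-d)_+)$ (delay bound) and invoke the arrival-curve inequality for $A$. The case split on $s\gtrless d$ is handled cleanly via $(s-d)_+$, and the only monotonicity you rely on (that of $\alpha$) is harmless since any arrival curve can be replaced by its non-decreasing closure.

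One minor remark: the FIFO hypothesis is what makes the step $D(s)\geq A((s-d)_+)$ immediate from the per-bit delay interpretation, though with the virtual-delay definition $d(t)=\inf\{\tau\geq 0: A(t)\leq D(t+\tau)\}$ the same inequality follows without FIFO. Either way, nothing is missing in your argument.
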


\begin{theorem}
	\label{th:backlog-curve}
	Consider an $\alpha$-constrained flow crossing a system, with $\alpha = \gamma_{b,r}$. Assume that the last server of the system crossed by the flow offers a continuous service curve $\beta_n$ and that $\alpha$ is the only constraint for the flow. If $B$ is the maximum backlog in the whole system for this flow, then $\alpha'= \gamma_{B, r}$ is an arrival curve for the departure process. 
\end{theorem}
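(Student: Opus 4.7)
The plan is to propagate the arrival curve through the servers by iterating Theorem~\ref{th:output}, and then recognize the resulting burst as~$B$. The starting observation is a one-line algebraic identity: for any service curve $\beta'$ and any token-bucket $\gamma_{b'',r}$, one has, for $t>0$,
\[
(\gamma_{b'',r}\deconv \beta')(t) \;=\; \sup_{u\geq 0}\bigl[b''+r(t+u)-\beta'(u)\bigr] \;=\; b''+rt+c', \qquad c':=\sup_{u\geq 0}[ru-\beta'(u)],
\]
so the deconvolved curve is again a token-bucket, with rate $r$ unchanged and burst increased by $c'$. Applying this to each of the first $n-1$ servers in turn shows that the input of the last server admits an arrival curve $\gamma_{b',r}$ with $b' = b + \sum_{i=1}^{n-1} c_i$, where $c_i := \sup_u[ru-\beta_i(u)]$.

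One more application of Theorem~\ref{th:output} at the $n$-th server then yields $D$ with arrival curve $\gamma_{b' + c_n, r}$. Here $c_n$ is finite because the existence of the finite backlog bound $B$ implicitly requires asymptotic stability at rate at least~$r$, and continuity of $\beta_n$ ensures the last deconvolution is well-behaved (and the supremum defining it is attained). To recognize $b' + c_n$ as~$B$, I invoke the distributive identity $\sup_u[ru-(\beta_1\conv\cdots\conv\beta_n)(u)] = \sum_{i=1}^n c_i$ (obtained by unfolding the $(\min,+)$-convolution) together with the max-vertical-distance formula for the worst-case backlog of a $\gamma_{b,r}$-constrained flow: this gives $B = b + \sum_{i=1}^n c_i = b' + c_n$, so $\gamma_{B,r}$ is indeed an arrival curve for~$D$.

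The main obstacle I foresee lies in this final identification of $B$ with $b+\sum c_i$: it is clean when $B$ is the end-to-end backlog derived from the convolved service curve $\beta_1\conv\cdots\conv\beta_n$, but in the presence of cross-traffic each $\beta_i$ is only a residual service curve produced by Theorem~\ref{th:fifo}, which depends on a free parameter~$\theta$. One must check that the identification goes through for a compatible choice of these parameters and that the single-rate token-bucket form of the arrival curve is preserved all the way to the last server. The remaining algebraic manipulation and the iteration of Theorem~\ref{th:output} are otherwise routine.
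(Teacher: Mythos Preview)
Your argument has a genuine gap: it proves a weaker statement than the theorem. The quantity $B$ in the theorem is the \emph{actual} worst-case backlog of the flow in the system, which in the paper is typically computed by a linear program and can be strictly smaller than the iterated-deconvolution bound $b+\sum_i c_i$ you obtain. What you establish is that $\gamma_{b+\sum_i c_i,\,r}$ is an arrival curve for $D$; since $b+\sum_i c_i\geq B$ in general, this is \emph{weaker} than $\gamma_{B,r}$ being one. Your final ``identification'' $B=b+\sum_i c_i$ only holds when the SFA-style convolution bound is tight, which is precisely the situation the theorem is meant to escape (it is invoked in Section~\ref{sec:split} to turn a tighter LP backlog into a tighter arrival curve). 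You also never use the hypothesis that $\alpha$ is the \emph{only} constraint on the flow, and you implicitly assume the system decomposes as a tandem with known per-server service curves $\beta_1,\dots,\beta_n$, which the theorem does not.

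The paper's proof is not curve-algebraic but trajectory-based, and this is where the ``only constraint'' hypothesis enters. One fixes $s\leq t$ and an admissible trajectory, then \emph{modifies} the arrival process after time $s$ to its maximal value $A'(u)=A(s)+H+r(u-s)$ (possible exactly because $\alpha$ is the only constraint). The backlog at $s^+$ under $A'$ is then at most $B$ by definition of $B$, and since arrivals after $s$ grow at rate $r$, one gets $D(t)-D(s)\leq A'(t)-D(s)\leq B+r(t-s)$ directly. The continuity of $\beta_n$ is used only to repair a possible right-discontinuity of $D$ at $s$, not for any deconvolution step. This argument works for an arbitrary system (with cross-traffic, non-tandem structure, etc.) and for whatever $B$ actually is, which is the point.
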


\begin{proof}
	A similar result has already been proved in~\cite{Bou19} in a slightly different setting (strict service curves). For the sake of completeness, we give the proof here, but it follows the lines of the previous proof.
	
	Let us denote by $A$ the cumulative arrival process of the flow and $D$ its departure process. Fix $s$ and $t$ such that $s\leq t$. One wants to show that $D(t)-D(s) \leq B + rt$.
	
	Let us first transform the arrival process $A$ in $A'$ so that $A'(u) = A(u)$ for all $u\leq s$ and $A'(u)$ is maximized for all $u>s$. As $\alpha$ is the only constraint for the flow, there exists $H\geq 0$ such that for all $u>t$, $A'(u) = A(s) + H  + r(u-s)$.
	
	If $D(s) = D(s+)$ (the departure process is right-continuous at $s$), then the backlog at time $s+$ with the modified arrival process $A'$ is $A'(s+) - D(s+) = A'(s) + H - D(s) \leq B$. Consequently, 
	$$D(t) - D(s) \leq A(t) - D(s) \leq A'(t) - D(s) \leq A(s) + H + r(t-s) - D(s) \leq B + r(t-s).$$
	
	In the case $D$ is not right-continuous at time $s$, one needs to modify the departure process to finish the proof. 
	Let $A_n$ (resp. $D_n$) be the aggregated arrival (resp. departure) process of the last server visited by the flow of interest. 
	There exists $v\leq s$ such that $D_n(t) \geq A_n(v) + \beta_n(s-v)$. 
	As $\beta_n$ is continuous, $D_n$ can be modified from time $s$ so that is continuous: take $D'_n(u) = \min(D_n(u), A_n(v) + \beta_n(s-v))$. Remark that as $v\leq s$, $A_n(v) = A'_n(v)$, so this new departure process is admissible and continuous on an interval $[s, s+\epsilon]$ with $\epsilon>0$. As $D'_n$ is continuous on $[s, s+\epsilon]$, the individual flows are also continuous, and in particular, $D'(s+) = D(s)\leq D(s+)$. We can then write
	$$D(t) - D(s) \leq A'(t) - D'(s+) \leq A(s) + H + r(t-s) - D(s) \leq B + r(t-s).$$
\end{proof}

\subsection{Network model}
Consider a network $\cN$ composed of $n$ servers numbered from 1 to $n$ and
crossed by $m$ flows named $f_1,\ldots,f_m$, such that
\begin{itemize}
	\item each server $j$ guarantees a service curve
	$\beta_j$ and has a greedy shaper $\sigma_j$. The service policy is FIFO;
	\item each flow $f_i$ is $\alpha_i$-constrained and circulates along an acyclic path $\pi_i= \langle \pi_i(1),\ldots,\pi_i(\ell_i)\rangle$ of length $\ell_i$.
\end{itemize}
We will always assume in the following that 
arrival curves and greedy shapers are token-bucket and the service curves rate-latency. We will use the following additional notations:

\begin{itemize}
	\item $F^{(j)}_i\in\cF$ is the cumulative process of flow $i$ entering server $j$. The departure process after the last server crossed by flow $f_i$ is be denoted $F^{(n+1)}_i$;
	\item the arrival curve of $F_i^{(j)}$ is denoted $\alpha_i^{(j)}:t\mapsto b_i^{(j)} + r_i t$. In particular, $F_i^{(\pi(1))}$ is $\alpha$-constrained and $b_i^{(\pi(1))} = b_i$;
	\item the service curve of server $j$ is $\beta_j:t\mapsto R_j(t-T_j)_+$ and the shaping curve is $\sigma_j:t\mapsto L_j + C_j t$;
	\item for a server $j$, we define $\fl(j) = \{i~|~ \exists \ell,~\pi_i(\ell)  =j\}$ the set of indices of the flows crossing server $j$ and $\fl(h, j) = \{i~|~ \exists \ell,~(\pi_i(\ell), \pi_i(\ell+1)  =(h, j )\}$  the set of indices of the flows crossing servers $h$ and $j$ in sequence;
	\item for all flows $f_i$, for $j\in\pi(i)$, we denote by $\su_i(j)$ is the successor of server $j$ in flow $f_i$. If $j = \pi(\ell_i)$, then $\su_i(j) = n+1$. For all servers $j$, $\pr(j)$ is the set of predecessors of server $j$.    
\end{itemize}

We call the family of cumulative $(F^{(j)}_i)_{i\in\sint{m}, j\in\pi_i \cup \{n+1\}}$ a trajectory of the network, and an admissible trajectory if it satisfied all the network calculus constraints described above: arrival, service 
shaping and FIFO constraints. 

 The induced graph $G_{\cN} = (\sint{n},\A)$ is the
directed graph whose vertices are the servers and the set of arcs is
$$\A = \{(\pi_i(k),\pi_i(k+1))~|~i\in\sint{m}, k\in\sint{\ell_i-1}\}.$$
As we will focus on the
performances in server $n$ or of a flow ending at server $n$, we can assume without loss of generality that the network is connected and has a unique
final strictly connected component, which contains $n$.

\begin{itemize}
	\item If the induced graph $G_{\cN}$ is a line, we say that the network is a tandem network;
	\item if the induced graph $G_{\cN}$ is a tree, we say that the network is a tree network (as the network is assumed to be connected and have a unique final component, all maximal paths end at node $n$, that is the unique sink of the network);
	\item if the induced graph $G_{\cN}$ is acyclic, we say that the network is feed-forward;
	\item if the induced graph $G_{\cN}$ contains cycles, we say that the network has cyclic dependencies (or is not feed-forward).
\end{itemize}

\paragraph{Stability} We will also be interested in the network stability. 

\begin{definition}[Global stability]
	A network  is {\em globally stable} if the backlogged periods of each server are uniformly bounded. 
\end{definition}

Deciding if a network is stable is an open problem in general, and only partial results exist. A necessary condition is that the arrival rate in each server is less than the service rate, but this condition is not sufficient: it has been shown in~\cite{And2007} that there exists networks with arbitrary small {\em local} load that can be unstable. 

Local stability refers to the arrival rate being less than the service rate in every server of the network. In the following, we will always assume local stability. In our setting, this means that for all server $j$, $\sum_{i\in\fl(j)} r_i \leq R_j$.

\section{State of the art on computing bounds in FIFO networks in NC}
\label{sec:soa}

We only describe techniques for feed-forward networks. For cyclic dependencies, those techniques must be used with the fix-point for example. This latter point will be developed in Section~\ref{sec:cyclic}.

\subsection{TFA (Total flow analysis) and TFA++}
TFA and TFA++ as described here can be used exclusively for FIFO networks. It is based on Theorem~\ref{th:delay-curve} and the fact that the worst-case delay in a FIFO server is the same for all flows crossing it.

\begin{algorithm}[htbp]
	\caption{TFA analysis: delay of flow $i$}
	\label{algo:tfa}
	\Begin{
	\ForEach{server $j$ in the topological order}
	{
		$b \gets \sum_{i\in\fl(j)} b_i^{(j)}$\;
		$d_j \gets T_j + \frac{b}{R_j}$\;
		$b_i^{(\su_i(j))} \gets b_i^{(j)} + r_id_j$
	}
	{\bf return } $\sum_{j\in \pi_i} d_j$
}
\end{algorithm}

TFA++ is similar to TFA except that  it is taking into account the maximum service rate (as a greedy-shaper) of the preceding servers. It has first been introduced in Grieux's PhD thesis~\cite{Gri04} and then popularized under the name TFA++ in~\cite{ML17}. In short, between Algorithm~\ref{algo:tfa} and~\ref{algo:tfa++}, lines 3 and 4 differ. 
The case with cyclic dependencies is studied in~\cite{TBM19}, and will be commented in Section~\ref{sec:cyclic}.

\begin{algorithm}[htbp]
	\caption{TFA++ analysis: delay of flow $i$}
	\label{algo:tfa++}
	\Begin{
		\ForEach{server $j$ in the topological order}
		{
			$\alpha \gets \sum_{h\in\pr(j)} \min(\sigma_h, \sum_{i \in \fl(h, j)}\alpha_i) + \sum_{i~|~\pi_i(1) = j}\alpha_i$\;
			$d_j\gets d_{\max}(\alpha, \beta_j)$\;
			$b_i^{(\su_i(j))} \gets b_i^{(j)} + r_id_j$
		}
		{\bf return } $\sum_{j\in \pi_i} d_j$
	}
\end{algorithm}

\subsection{SFA (Separated flow analysis)}
SFA is the technique that uses the pay burst only once phenomenon through the use of (min, plus) operators. We give here a possible algorithm when the network is FIFO, by choosing a particular value of $\theta$ in Theorem~\ref{th:fifo}. This choice is locally optimal: from Theorem~\ref{th:backlog-curve}, the backlog bound  characterizes the maximum burst of the output arrival curves, so $\theta$ is chosen so as to minimize the backlog bound for each flow at each server.

\begin{corollary}[of Theorem~\ref{th:fifo}]
	Consider a FIFO server with service curve $\beta:t\mapsto R(t-T)_+$, crossed
	by two flows $f_1$ and $f_2$ with respective arrival curves $\alpha_1:t\mapsto b_1 + r_1 t$ and $\alpha_1:t\mapsto b_1 + r_1 t$. A residual service curve for flow $f_1$ is 
	$$\beta': t\to (R-r_2)(t - (T + b_2/R)_+.$$ 
	The output arrival curve is $\alpha_1'= \alpha_1 + (T+ b_2/R)r_1.$		
\end{corollary}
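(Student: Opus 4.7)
\medskip
\noindent\textbf{Proof plan.} The plan is to derive the residual service curve as a direct instantiation of Theorem~\ref{th:fifo} with the specific choice $\theta = T + b_2/R$, and then feed the resulting rate-latency curve into Theorem~\ref{th:output} to read off the output arrival curve.

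First, I would compute $\alpha_2 \conv \delta_\theta$ explicitly. Since $\delta_\theta(u)=0$ for $u\le \theta$ and $+\infty$ otherwise, and since $\alpha_2$ is non-decreasing with $\alpha_2(0)=0$, the convolution reduces to $\alpha_2 \conv \delta_\theta(t) = \alpha_2((t-\theta)_+)$, i.e. $0$ for $t\le \theta$ and $b_2+r_2(t-\theta)$ for $t>\theta$. Plugging $\theta = T + b_2/R$ into $\beta(t) - \alpha_2\conv\delta_\theta(t)$ for $t>\theta$ and simplifying, the burst term $b_2$ is exactly cancelled by $R(\theta - T)$, leaving the clean expression $(R-r_2)(t-\theta)$. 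For $t\le\theta$ the quantity $\beta(t)-\alpha_2\conv\delta_\theta(t) = R(t-T)_+$ is non-negative but is dominated by $\delta_\theta(t)=0$ in the final $\wedge\,\delta_\theta$. Taking the positive part and then the minimum with $\delta_\theta$ therefore yields exactly
\[
\beta_\theta(t) \;=\; (R-r_2)\bigl(t - (T + b_2/R)\bigr)_+,
\]
which is the announced $\beta'$.

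For the output arrival curve, I would apply Theorem~\ref{th:output} to flow $f_1$ with arrival curve $\alpha_1=\gamma_{b_1,r_1}$ and residual service curve $\beta'=\beta_{R-r_2,\,T+b_2/R}$ (valid because the stability assumption gives $R-r_2 > r_1$, so the deconvolution is a token-bucket). Using the standard identity $\gamma_{b,r}\deconv\beta_{R',T'} = \gamma_{b + rT',\,r}$ recalled just before the statement, one obtains $\alpha_1 \deconv \beta' = \gamma_{b_1 + r_1(T+b_2/R),\, r_1}$, which is precisely $\alpha_1$ shifted upward by the constant $r_1(T+b_2/R)$, as claimed.

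The proof is essentially algebraic, so there is no real obstacle; the only delicate point is verifying that the choice $\theta = T+b_2/R$ is precisely the value that makes $[\beta-\alpha_2\conv\delta_\theta]_+$ meet $\delta_\theta$ at its kink and produces a rate-latency shape (any other $\theta$ would leave a non-affine piece). I would also briefly justify tacitly assumed conditions (e.g.\ $R>r_2$ for the residual rate to be positive, and $R-r_2>r_1$ for the deconvolution to be finite), which are implicit in the local-stability assumption stated at the end of Section~\ref{sec:framework}.
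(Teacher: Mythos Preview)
Your proposal is correct and follows exactly the paper's approach: the paper's entire proof is the one-line remark ``This is Theorem~\ref{th:fifo} with $\theta = T + b_2/R$,'' and you have simply spelled out the algebra behind that instantiation and the subsequent deconvolution via Theorem~\ref{th:output}. Nothing is missing or different in substance.
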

This is Theorem~\ref{th:fifo} with $\theta = T + b_2/R$.

Algorithm~\ref{algo:sfa} describes the procedure to compute the delay of a flow with the SFA method.
\begin{algorithm}[htbp]
	\caption{SFA analysis: delay of flow $i_0$}
	\label{algo:sfa}
	\Begin
	{
		\ForEach{server $j$ in the topological order}
		{
			\ForEach{flow $i\in\fl(j)$}
			{
				$b \gets \sum_{k\in\fl(j)-i} b_k^{(j)}$\;
				$b_i^{(\su_i(j))} \gets b_i^{(j)} + (T_j + b/R_j)r_i$\;
				$T_i^{(j)} \gets  (T_j + b/R_j)r_i$\;
				$R_i^j \gets R_j - \sum_{k\in\fl(j)-i}r_k$
				
		}}
		{\bf return } $\sum_{j\in \pi_{i_0}} T_{i_0}^{(j)} + b_{i_0} / (\min_{j\in\pi_{i_0}} R_{i_0}^j)$
	}
\end{algorithm}

\subsection{Deborah}
Deborah(DElay BOund Rating AlgoritHm)~\cite{BLMS2010} is a software designed to compute delay bounds in FIFO tandem networks. It is based on the optimization of $\theta$ parameters that appear in Theorem~\ref{th:fifo} for each server. These parameters can in particular be optimized when the flows are nested (each flow is contained, contains and is disjoint from any other flow) or for sink-trees. It is showed in~\cite{LMMS2006} that the delay bounds are tight for sink trees, and in~\cite{BLMS2008} that even for very small networks, the bound is not tight for other tandem topologies. The general case of tandems can be tackled by cutting flows to make is nested~\cite{LMS2007, LMS2008}. 

The tool requires token-bucket arrival curves and rate-latency service curves, and does not take into account the shaping effect of a maximal service curve. The aim of this paper is to study general topologies and the shaping effect. This is why we will not include this tool in our comparisons. Comparisons with linear programming methods can be found in~\cite{BS12}.

\subsection{Linear programming}
The linear programming approach developed in~\cite{BS12, BS15} consists in writing the network calculus constraints as linear constraints. If the arrival curves are piece-wise linear concave and the service curves piece-wise linear convex, then the exact worst-case bounds can be computed by a MILP (Mixed-integer linear program). However this solution is very costly as the number of variables is exponential and there are integer variables. The MILP can be relaxed by removing the integer variables and there corresponding constraints. While this relaxation gives accurate bounds (better that other methods), the number of constraints is still too high to be able to compute bounds in large network. 

\medskip

In the following, we will compare our contribution with the TFA++, SFA and linear programming (LP). We call tractable or scalable TFA++ ans SFA, as their complexity enables the analysis of large networks, whereas we call untractable LP, due to its (super)-exponential complexity.
\section{A polynomial-size linear program with for tree networks}
\label{sec:tree}

In this section, we propose to modify the linear program of~\cite{BS12} to keep the number of constraints and variables polynomial in the size of the network, for a tree network. Simply removing constraints can make the bounds more pessimistic than SFA or TFA, and we will then propose to incorporate these bounds to improve the tightness of our new bound. We also adapt the linear program so that it can take into account the shaping of the cumulative processes due to the link capacities.

\subsection{A linear program to compute upper bound delays}
This paragraph describes the linear program obtained from the simplification of the linear program presented in from~\cite{BS12}. We write the variables in bold letters. 
To give the intuition of these variables and constraints, we apply the construction on the small network for Figure~\ref{fig:toy}. The linear program is given in Table~\ref{tab:toy}.
 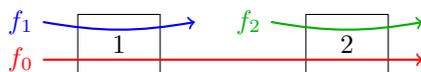
\begin{figure}[htbp]
 	\centering
 	\begin{tikzpicture}[server/.style={shape=rectangle,draw,minimum height=.8cm,inner xsep=3ex}]
 	\node[server,name=S1] at (0,0) {$1$};
 	\node[server,name=S2] at (3,0) {$2$};
 	\draw[->, thick, red] (-1,-0.2) node[left] {$f_0$} -- (4,-0.2);
 	\draw[->, thick, blue] (-1,0.3) node[left] {$f_1$} to[out=-10,in=-170] (1,.3);
 	\draw[->, thick, green!70!black]  (2,0.3) node[left] {$f_2$} to[out=-10,in=-170] (4,.3);
 	\end{tikzpicture}
 	\caption{Toy network for the linear program description.}
	\label{fig:toy}
 \end{figure}

For the general description of the linear program, we assume a tree-network, and that the flow we are interested in computing the delay ({\em flow of interest}) ends at the sink of the tree.  

Each server $j$ has a unique successor that we denote $\su(j)$, except node $n$ that is a sink, for which we set $\su(n) = n+1$. To simplify the notations, we will use $F_{i}^{(\su(\pi_i(\ell_i)))}$ instead of $F_{i}^{(n+1)}$ for the departure processes of the flows. 
Let us denote $d(j)$ the depth of server $j$. We set the depth of the sink $d(n) = 1$, and $d(j) = d(\su(j)) + 1$.

\paragraph{Time variables and constraints:}
We introduce one variable ${\bf t_0}$ representing the departure time of 
the bit of interest. 
For each server $j$, we reserve $d(j) + 1$ variables representing dates ${\bf t}_k$ for $k \in [u_{\min}(j),\ldots, u_{\max}(j)]$. The sets $[u_{\min}(j),\ldots,$ $ u_{\max}(j)]$ are disjoint. 
We set the following constraints (to be explained with the FIFO and service constraints):
\begin{itemize}
	\item for all $k \in [u_{\min}(j),\ldots, u_{\max}(j) - 1]$, ${\bf t}_k \geq {\bf t}_{k+1}$;
	\item for all $k\leq u_{\max}(j) - u_{\min}(j) - 1$, and with $h = \su(j)$, ${\bf t}_{k + u_{\min}(j)} \leq  {\bf t}_{k + u_{\min}(h)}$.
\end{itemize}

The total number of time variables is then at most $(n+2)(n+1)/2$ and the number of time constraints is at most $n(n+1)$. The worst case is obtained for the tree with maximal depth, that is the tandem networks.  

\begin{example}
	In our example, we have 6 time variables: $\bt_0$ for the exit time of the bit of interest, $\bt_1$ and $\bt_2$ are defined for server 2, and $\bt_3$, $\bt_4$, $\bt_5$ are defined for server 1. 
\end{example}

\paragraph{Process variables:} Now he have introduced the dates variables, we can introduce variables of type $\bF^{(i)}_j(\bt_k)$ for the cumulative processes. This corresponds to the value of the arrival cumulative process of flow $f_i$ at server $j$ and at time $t_k$. The variable $\bF^{(j)}_i(\bt_k)$ exists if $j\in \pi_i \cup \{\su(\pi_i(\ell_i))\}$ and $k \in [u_{\min}(j),\ldots, u_{\max}(j)]$.

For each flow, there is at most one process variable per time variable, so the number of constraints is at most $m(n+2)(n+1)/2$ (this is the worst case where all flows cross all servers).

\paragraph{FIFO constraints} For each time $t$, there exists a date $s\leq t$ such that all data exited by time $t$ have arrived by time $s$. The FIFO policy ensures that this is also true for all flows crossing that server. When applied at the specific dates, we obtain the following constraints:
for all server $j$, and $h = \su(j)$, for all $k\leq u_{\max}(h) - u_{\max}(h)$, 
\begin{itemize}
	\item ${\bf t}_{k + u_{\min}(j)} \leq  {\bf t}_{k + u_{\min}(h)}$ (as mentioned above);
	\item for all $i\in\fl(j)$,  $\bF_i^{(j)}(\bt_{k + u_{\min}(j)}) =  \bF_i^{(h)}(\bt_{k + u_{\min}(h)})$.
\end{itemize}

For a flow crossing server a server of depth $d$, there are $d$ FIFO constraints. Then there are in total at most $mn(n+1)/2$ FIFO constraints.

\paragraph{Service constraints} This is the point where the simplification is done compared to~\cite{BS12}. We introduce only one constraint per server, at time $t_{u_{\max}(\su(j))}$: we apply the formula $\sum_{i\in\fl(j)} F^{(\su(j))} \geq \sum_{i\in\fl(j)} F^{(j)}_i \conv \beta_j$, and obtain by definition of the (min, plus)-convolution: for all server $j$, denoting $h = \su(j)$, 
\begin{itemize}
	\item $\sum_{i\in \fl(j)} \bF^{(h)}_i(\bt_{u_{\max}(h)}) \geq \sum_{i\in \fl(j)} \bF^{(j)}_i(\bt_{u_{\max}(j)})$ and  
	\item $\sum_{i\in \fl(j)} \bF^{(h)}_i(\bt_{u_{\max}(h)}) \geq\sum_{i\in \fl(j)} \bF^{(j)}_i(\bt_{u_{\max}(j)}) + R_j(\bt_{u_{\max}(h)} - \bt_{u_{\max}(j)}) - R_jT_j$.
\end{itemize}

These constraints, together with the FIFO constraints and the monotony of the cumulative processes also impose an order on the dates: 
for all $k \in [u_{\min}(j),\ldots, u_{\max}(j) - 1]$, ${\bf t}_k \geq {\bf t}_{k+1}$.

There are two service constraints per server, so $2n$ in total.

\paragraph{Arrival constraints}
For all flow $f_i$ one can add the constraints built from its arrival curve. Let $j$ be the first server crossed by flow $f_i$. 
for all $u_{\min(j)}\leq u<v \leq u_{\max(j)}$, we have 
$$\bF_i^{(j)}(\bt_u) - \bF_i^{(j)}(\bt_v) \leq b_i + r_i(\bt_u - \bt_v).$$

If a flow arrives at a server of depth $d$, it induces $d(d+1)/2$ constraints. There are then at most $mn(n+1)/2$ arrival constraints. 

\paragraph{Monotony constraints} We havw defined an total order for the dates of the cumulative arrival process for each flow (arriving at its first server). The arrival processes are non-decreasing, and we can translate this into linear constraints. 
Due to the FIFO constraints and service constraints, there is no need to write these types of constraints them for the arrival processes at each server.  For each flow $f_i$, let $j$ be the first server it crosses, for all $k \in [u_{\min(j)}, u_{\max(j)}-1]$, we have 
$$\bF_i^{(j)}(\bt_k) \geq  \bF_i^{(j)}(\bt_{k + 1}).$$ 

For each flow crossing a server of depth $d$, there are $d$ monotony constraints, so $mn(n+1)/2$ in total. 

\paragraph{Delay objective}
To obtain an upper bound of the worst-case delay of flow $f_i$ ending at server $n$ is $\max: t_0 - t_{u_{\min}(j)}$ where $j$ is the first server crossed by flow $f_i$. 

\paragraph{Backlog objective} Alternatively, to obtain an upper bound of the worst-case backlog of flow $f_i$ starting at server $j$ and ending at server $n$ in the network, one introduces the following constraints and objective: 
\begin{itemize}
	\item for all $u_{\min(j)} \leq u\leq u_{\max(j)}$, $\bF_i^{(j)}(\bt_0) - \bF_i^{(j)}(\bt_u) \leq b_i + r_i(\bt_0 - \bt_u)$;
	\item $\max~:~  \bF_i^{(j)}(\bt_0) -  \bF_i^{(n+1)}(\bt_0)$. 
\end{itemize}

\begin{theorem}
	\begin{enumerate}
	\item Let $D$ be a solution of the linear program described above and $d$ be the worst-case delay of the flow of interest. Then $D\geq d$. 
	\item Let $B$ be a solution of the linear program described above and $b$ be the worst-case backlog of the flow of interest. Then $B\geq b$. 
	\end{enumerate}
\end{theorem}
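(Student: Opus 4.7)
The plan is to establish both parts uniformly by showing that the linear program is a valid relaxation of the network-calculus model: from any admissible trajectory of $\cN$ and any instant $\tau_0$ at which the flow of interest realises a delay (respectively a backlog) within $\epsilon$ of $d$ (respectively $b$), I would construct a feasible point of the LP whose objective value matches this realised performance. Since the LP is a maximisation, this gives $D \geq d - \epsilon$ and $B \geq b - \epsilon$, and letting $\epsilon \to 0$ concludes.

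The construction traverses the servers from the sink towards the sources, treating $\bt_0$ as the unique date associated with the fictitious exit server $n+1$ (i.e.\ $u_{\min}(n+1) = u_{\max}(n+1) = 0$). Set $\bt_0 := \tau_0$; then, once the dates at $h = \su(j)$ are fixed, I would define the dates at $j$ in two steps. First, for each $k = 0, \ldots, d(j)-1$, use the FIFO discipline of server $j$ to pick $\bt_{k + u_{\min}(j)} \leq \bt_{k + u_{\min}(h)}$ such that $F_i^{(j)}(\bt_{k + u_{\min}(j)}) = F_i^{(h)}(\bt_{k + u_{\min}(h)})$ for every $i \in \fl(j)$; such a date exists because a FIFO server serves each flow in arrival order, so for any output instant there is a matched input instant at which every flow's cumulative input equals its cumulative output. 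Second, invoke the service-curve definition $\sum_{i \in \fl(j)} F_i^{(h)} \geq (\sum_{i \in \fl(j)} F_i^{(j)}) \conv \beta_j$ evaluated at $\bt_{u_{\max}(h)}$ to select $\bt_{u_{\max}(j)} \leq \bt_{u_{\max}(h)}$ witnessing the inequality $\sum_{i \in \fl(j)} F_i^{(h)}(\bt_{u_{\max}(h)}) \geq \sum_{i \in \fl(j)} F_i^{(j)}(\bt_{u_{\max}(j)}) + \beta_j(\bt_{u_{\max}(h)} - \bt_{u_{\max}(j)})$. Finally, set every process variable to the actual cumulative value, $\bF_i^{(j)}(\bt_k) := F_i^{(j)}(\bt_k)$.

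Feasibility then follows constraint-by-constraint. The FIFO equalities and the service inequality hold by the two choices just made; the arrival constraints at the first server of each flow follow from the $\alpha_i$-constraint of its input; and the monotony constraints are inherited from the monotonicity of the cumulative processes. The time-order constraints $\bt_k \geq \bt_{k+1}$ at server $j$ propagate from those at $h$ via the FIFO step; the only non-trivial case is $\bt_{u_{\max}(j)-1} \geq \bt_{u_{\max}(j)}$, for which I would chain $F_i^{(j)}(\bt_{u_{\max}(j)-1}) = F_i^{(h)}(\bt_{u_{\max}(h)-1}) \geq F_i^{(h)}(\bt_{u_{\max}(h)}) \geq F_i^{(j)}(\bt_{u_{\max}(j)})$ and conclude by monotonicity of $F_i^{(j)}$, shifting the service witness to the left over any flat interval of $F^{(j)}$ if needed.

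It remains to match the objective. Chaining the FIFO equalities from $\bt_0$ back along the path $\pi_{i_0}$ of the flow of interest yields $F_{i_0}^{(j_0)}(\bt_{u_{\min}(j_0)}) = F_{i_0}^{(n+1)}(\bt_0)$ with $j_0 = \pi_{i_0}(1)$; choosing $\bt_{u_{\min}(j_0)}$ as the smallest such date makes $\bt_0 - \bt_{u_{\min}(j_0)}$ the delay at $\tau_0$ of the last bit of $f_{i_0}$ to have exited, so the LP objective attains the realised delay. For the backlog, the added arrival constraints at $\bt_0$ remain satisfied by the $\alpha_i$-constraint of the trajectory, and the objective $\bF_i^{(j)}(\bt_0) - \bF_i^{(n+1)}(\bt_0)$ equals $F_i^{(j)}(\tau_0) - F_i^{(n+1)}(\tau_0)$, i.e.\ the realised backlog. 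The main technical delicacy I anticipate is the possible non-attainment of the $(\min,+)$-convolution infimum when picking the service witness; this is standard and can be dispatched by a small $\epsilon$-perturbation of the witness or by exploiting the left-continuity of cumulative processes assumed in Section~\ref{sec:framework}.
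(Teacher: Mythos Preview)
Your approach is sound in spirit but takes a far longer route than the paper. The paper's proof is a one-liner: the LP of Section~\ref{sec:tree} is obtained from the exact LP of~\cite{BS12} by \emph{dropping} service constraints (and, consequently, time variables), keeping the same objective. Since removing constraints from a maximisation can only increase the optimum, and the optimum of the full LP from~\cite{BS12} equals the exact worst-case delay (resp.\ backlog), the present LP's optimum is automatically an upper bound. No trajectory-to-feasible-point construction is needed here.

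What you sketch is essentially the proof the paper gives for the \emph{next} theorem, where TFA++, SFA and shaping constraints have been added and one can no longer simply appeal to~\cite{BS12}. Your construction matches that argument closely, with one technical slip: you set $\bF_i^{(j)}(\bt_k) := F_i^{(j)}(\bt_k)$ directly, but if a cumulative process has a jump, there may be no input time $s$ with $F_i^{(j)}(s) = F_i^{(h)}(t)$ exactly, so the FIFO equalities of the LP can fail under your assignment. The paper sidesteps this by defining the process variables \emph{through} the FIFO equalities (propagating values from the sink backwards) and only pinning the service-witness values to actual cumulative amounts; this yields $\bF_i^{(j)}(\bt_u) \in [F_i^{(j)}(t_u), F_i^{(j)}(t_u+)]$, which is enough for the arrival and monotony constraints while making the FIFO equalities hold by construction. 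Your $\epsilon$-perturbation remark addresses the service-witness selection but not this FIFO discontinuity issue.
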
 

\begin{proof}
	The linear program just described above has a subset of constraints of the one described in~\cite{BS12} for computing the exact worst-case delay, with the same objective. Then the results of our linear constraint is then an upper bound of it, hence the result.
\end{proof}

Compared to the approach of~\cite{BS12}, we removed most of the service constraints (only one is kept per server). The number of time variables is then reduces from $2^{n+1}-1$ for a tandem with $n$ servers to $(n+1)(n+2)/2$ variables, and the number of constraints is now $O(mn^2)$. 
Moreover, all the time variables related to one server are naturally ordered with the FIFO and service constraints. As a consequence, there is no need to introduce Boolean variables to ensure the monotony of the processes. 
However, the following example shows that the performances can get too pessimistic compared to the scalable methods.  

\begin{table}[htbp]
	\centering
	\begin{tabular}{|ll|}
		\hline
		Maximize: & $\bt_0 - \bt_3$  \\
		\hline
		 such that &$\bt_3\leq \bt_1 \leq \bt_0$\\
		 (Time &$\bt_4\leq \bt_2$\\
		  constraints)&$\bt_2\leq \bt_1$\\
		 &$\bt_5 \leq \bt_4\leq \bt_3$\\
		 \hline
		 (FIFO & $\bF_0^{(1)}(\bt_3) =\bF_0^{(2)}(\bt_1) =\bF_0^{(3)}(\bt_0)$\\  
		 constraints) &$\bF_0^{(1)}(\bt_4) =\bF_0^{(2)}(\bt_2)$\\
		 &$\bF_1^{(1)}(\bt_3) =\bF_1^{(2)}(\bt_1)$\\  
		 &$\bF_1^{(1)}(\bt_4) =\bF_1^{(2)}(\bt_2)$\\
		 &$\bF_2^{(2)}(\bt_1) =\bF_2^{(3)}(\bt_0)$\\
		 \hline
		 (Service& $\bF^{(2)}_0(\bt_2) +  \bF^{(2)}_1(\bt_2) \geq \bF^{(1)}_0(\bt_5) +  \bF^{(1)}_1(\bt_5) + R_1(\bt_{2} - \bt_{5}) - R_1T_1 $ \\ 
		 constraints)& $\bF^{(2)}_0(\bt_2) +  \bF^{(2)}_1(\bt_2) \geq \bF^{(1)}_0(\bt_5) +  \bF^{(1)}_1(\bt_5)$ \\ 
		 &$\bF^{(3)}_0(\bt_0) +  \bF^{(3)}_2(\bt_0) \geq \bF^{(2)}_0(\bt_2) +  \bF^{(2)}_2(\bt_2) + R_2(\bt_{0} - \bt_{2}) - R_2T_2 $\\
		 &$\bF^{(3)}_0(\bt_0) +  \bF^{(3)}_2(\bt_0) \geq \bF^{(2)}_0(\bt_2) +  \bF^{(2)}_2(\bt_2)$\\
		 \hline
		 (Arrival& $\bF_0^{(1)}(\bt_3) - \bF_0^{(1)}(\bt_4) \leq b_0 + r_0(\bt_3 - \bt_4)$\\
		constraints)& $\bF_0^{(1)}(\bt_4) - \bF_0^{(1)}(\bt_5) \leq b_0 + r_0(\bt_4 - \bt_5)$\\
		& $\bF_0^{(1)}(\bt_3) - \bF_0^{(1)}(\bt_5) \leq b_0 + r_0(\bt_3 - \bt_5)$\\
		& $\bF_1^{(1)}(\bt_3) - \bF_1^{(1)}(\bt_4) \leq b_1 + r_1(\bt_3 - \bt_4)$\\
		& $\bF_1^{(1)}(\bt_4) - \bF_1^{(1)}(\bt_5) \leq b_1 + r_1(\bt_4 - \bt_5)$\\
		& $\bF_1^{(1)}(\bt_3) - \bF_1^{(1)}(\bt_5) \leq b_1 + r_1(\bt_3 - \bt_5)$\\
		& $\bF_2^{(2)}(\bt_1) - \bF_i^{(2)}(\bt_2) \leq b_2 + r_2(\bt_1 - \bt_2)$\\
		\hline
		(Monotony& $\bF_0^{(1)}(\bt_3) \geq \bF_0^{(1)}(\bt_4) \geq\bF_0^{(1)}(\bt_5)$\\
		constraints)& $\bF_1^{(1)}(\bt_3) \geq \bF_1^{(1)}(\bt_4) \geq\bF_1^{(1)}(\bt_5)$\\
		& $\bF_2^{(2)}(\bt_1) \geq \bF_2^{(2)}(\bt_2)$ \\
		\hline
	\end{tabular}
\caption{Linear program from the simplification of~\cite{BS12} for the toy example of Figure~\ref{fig:toy}.}
\label{tab:toy} 
\end{table}

\begin{example}
	Consider the example of Figure~\ref{fig:toy}, with arrival curves $\alpha: t\mapsto 1+ t$ for all flows and service curves $\beta: t\to 4(t-1)_+$ for both servers. Assume moreover that server 1 has the maximum service curve $\beta^u_1: t\mapsto 4t$. 
	
	The delay obtained with the SFA or LP method is 2.83, with TFA++, 2.95 and with this new linear program is 3.25. The reason for this is that a service constraint for the first server has been removed compared to the linear program of~\cite{BS12}: the time variable $t_1$, used to describe the flows entering the second server appears only as a FIFO constraints in server 1, and is not involved in a service constraint. In this linear program, it is set to $t_0$, inducing a larger delay. We see that $t_1 = t_0$. All data have been served for flow 2 before serving flow 0, as if server 2 gave the priority to flow 2. 
	
	Figure~\ref{fig:toy-lp} shows the trajectories computed by the linear program. 
	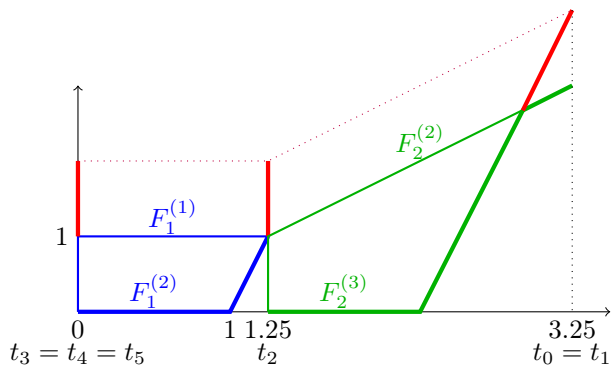
\begin{figure}[htbp]
		\centering
		\begin{tikzpicture}
		\draw[->] (0,0) -- (7, 0);
		\node[below] at (2, 0) {1};
		\node[left] at (0, 1) {1};
		\node[below] at (0, 0) {0};
		\node[below = 0.3cm] at (0, 0) {$t_3 = t_4 = t_5$};
		\node[below] at (2.5, 0) {1.25};
		\node[below = 0.3cm] at (2.5, 0) {$t_2$};
		\node[below] at (6.5, 0) {3.25};
		\node[below = 0.3cm] at (6.5, 0) {$t_0 = t_1$};
		\draw[->] (0,0) -- (0, 3);
		\draw[thick, blue] (0, 0) -- (0, 1)  -- (2.5, 1) node[pos = 0.5, above=-0.1cm] {$F^{(1)}_1$};
		\draw[ultra thick, red] (0, 1) -- (0, 2);
		\draw[ultra thick, blue] (0,0) -- (2, 0) node[pos = 0.5, above = -0.1cm] {$F^{(2)}_1$} -- (2.5, 1);
		\draw[ultra thick, red] (2.5, 1) -- (2.5, 2);
		\draw[thick, green!70!black] (2.5, 0) -- (2.5, 1) -- (6.5, 3) node[pos = 0.5, above=-0.1cm] {$F^{(2)}_2$};
		\draw[ultra thick,green!70!black ] (2.5, 0) -- (4.5, 0) node[pos = 0.5, above=-0.1cm] {$F^{(3)}_2$}-- (5.85, 2.67) -- (6.5, 3);
		\draw[red, ultra thick] (5.85, 2.67) -- (6.5, 4);
		\draw[black, dotted] (6.5, 4) -- (6.5, 0);
		\draw[purple, dotted] (0, 2) -- (2.5, 2) -- (6.5, 4);
		\end{tikzpicture}
		\caption{Trajectory reconstructed from the toy example. (blue) cumulative processes of flow 1 at server 1; (green) cumulative processes of flow 2 at server 2;(red) cumulative addition of the processes of flow 0. At time 0, the burst of size 1 arrives. It is transmitted at time 1.25, and served until time 3.25. }
		\label{fig:toy-lp}
	\end{figure}

\end{example}

\subsection{Adding SFA, TFA++ and shaping constraints}
We have just seen an example where the linear program just described  computes performance bounds that are more pessimistic than with the SFA  or TFA++. In this paragraph, we will see that adding more constraints computed with the SFA and TFA++ can drastically improve the bounds. Moreover, we show that we can also incorporate shaping constraints in our linear program. 

\paragraph{TFA++ and SFA constraints}
The intuition is the following: compared to the linear program of~\cite{BS12}, we have removed many service constraints, that were necessary to retrieve the exact worst-case delay. The idea here is to replace these service curve constraints by pure delay curves using Theorem~\ref{th:delay-curve}. SFA delays for each flow can also generalize this idea for a sequence of servers. 

The TFA++ algorithm from~\cite{ML17} computes for each server $j$ an upper bound of its delay $d_j^{TFA}$, that is satisfied for each flow (as the service policy is FIFO). 

One can replace each FIFO constraint setting an order between dates ${\bf t}_{k + u_{\min}(j)} \leq  {\bf t}_{k + u_{\min}(h)}$ by 
${\bf t}_{k + u_{\min}(j)} \leq  {\bf t}_{k + u_{\min}(h)} \leq {\bf t}_{k + u_{\min}(j)} + d_j^{TFA}$.  

We add as many constraints as FIFO constraints, that is at most $mn(n+1)/2$.

 The SFA algorithm computes the delay for each flow. Let  $d_i^{SFA}$ be the delay of flow $f_i$ computed with this method. 
Let $j$ and $h$ be respectively the first server and the successor of the last server crossed by flow $i$. For all $k \in [u_{\min(h)}, u_{\max(h)} - 1]$, by using successive FIFO constraints, we have 
$$\bF_i^{(h)}(\bt_{u_{\min}(h) + k}) = \bF_i^{(j)}(\bt_{u_{\min}(j) + k}).$$ As these represent the arrival and departure in/from the system of a bit of data of flow $f_i$. One can add the constraint
$$\bt_{u_{\min}(h) + k} - \bt_{u_{\min}(j) + k} \leq d_i^{SFA}.$$
We add at most $n$ constraints per flow, that is a total of at most $mn$.

\paragraph{Shaping constraints}

From Theorem~\ref{th:output}, we now that the aggregate process at the departure of node $j$ is constrained by the token-bucket $\sigma_j$. One can then add the following constraints: 
for each server $j$, let $h = \su(j)$. For all $u_{\min}(h) \leq u < v \leq u_{\max}(h)$, 
$$\sum_{i\in \fl(j)} \bF^{(h)}_i(\bt_{u}) - \sum_{i\in \fl(j)} \bF^{(h)}_i(\bt_v) \leq  L_j + C_j(\bt_u - \bt_v). $$

The number of additional variables is $d(d+1)/2$ for a node of depth $d$, so there are at most $n^2(n+1)/2$ variables.

The additional variables for the toy example of Figure~\ref{fig:toy} is given in Table~\ref{tab:toy-add}.

\begin{table}[htbp]
	\centering
	\begin{tabular}{|ll|}
		\hline
		such that & $\bt_1 - \bt_3 \leq d^{TFA}_1$ \\
		(TFA++& $\bt_2 - \bt_4 \leq d^{TFA}_1$ \\
		constraints)& $\bt_0 - \bt_1 \leq d^{TFA}_2$ \\
		\hline
		(SFA& $\bt_0 - \bt_3 \leq d^{SFA}_0$ \\
		constraints) & $\bt_1 - \bt_3 \leq d^{SFA}_1$  \\
		& $\bt_2 - \bt_4 \leq d^{SFA}_1$  \\
		& $\bt_0 - \bt_1 \leq d^{SFA}_2$  \\
		\hline
		(Greedy-shaper& $\bF_0^{(1)}(\bt_1) - \bF_0^{(1)}(\bt_2) \leq L_1 + C_1(\bt_1 - \bt_2)$\\
		constraints) & \\
		\hline
	\end{tabular}
\caption{Additional linear program for incorporating TFA++, SFA and greedy-shaper constraints for the toy example of Figure~\ref{fig:toy}.}
\label{tab:toy-add}
\end{table}

\begin{theorem}
	The objective of the linear program with the additional constraints is an upper bound of the worst-case delay (resp. backlog) of the flow of interest.
\end{theorem}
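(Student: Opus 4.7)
The plan is to build on the previous theorem: the base linear program already yields an upper bound on the worst-case delay/backlog because its feasible set contains the projection of every admissible trajectory onto the chosen time and process variables. It therefore suffices to check that the three new families of constraints (TFA++, SFA and greedy-shaper) are satisfied by \emph{every} admissible trajectory, so that adding them does not remove any trajectory from the feasible set and thus preserves the upper-bound property.

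First I would handle the TFA++ constraints. Algorithm~\ref{algo:tfa++} produces, for each server $j$, a quantity $d_j^{TFA}$ that is a valid delay bound at server $j$ for any admissible trajectory; by FIFO, this bound applies uniformly to all flows through $j$. In the LP, whenever we have a FIFO equality $\sum_{i\in\fl(j)}\bF_i^{(j)}(\bt_{k+u_{\min}(j)}) = \sum_{i\in\fl(j)}\bF_i^{(h)}(\bt_{k+u_{\min}(h)})$ (with $h = \su(j)$), the two dates $\bt_{k+u_{\min}(j)}$ and $\bt_{k+u_{\min}(h)}$ represent respectively an arrival and a departure time of the same aggregate quantity, so their difference is at most $d_j^{TFA}$. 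Hence the added inequality $\bt_{k+u_{\min}(h)} - \bt_{k+u_{\min}(j)} \le d_j^{TFA}$ is satisfied on every admissible trajectory.

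Next, for SFA: Algorithm~\ref{algo:sfa} yields an end-to-end delay bound $d_i^{SFA}$ that is valid for flow $f_i$ in any admissible trajectory (this is a direct consequence of Theorem~\ref{th:fifo} iterated along the path of $f_i$). Chaining the FIFO equalities of the LP along $\pi_i$ gives $\bF_i^{(h)}(\bt_{u_{\min}(h)+k}) = \bF_i^{(j)}(\bt_{u_{\min}(j)+k})$ for $j$ the first and $h$ the successor of the last server of $f_i$; therefore the two dates represent the entrance and exit of the same bit of flow $i$ in the system, and the inequality $\bt_{u_{\min}(h)+k} - \bt_{u_{\min}(j)+k} \le d_i^{SFA}$ holds. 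Finally, for the greedy-shaper constraints, Theorem~\ref{th:output} (and the assumption $\sigma_j\ge \beta_j$) tells us that the aggregate departure from server $j$ is $\sigma_j$-constrained, i.e.\ for all $s\le t$, $\sum_{i\in\fl(j)}F_i^{(h)}(t) - \sum_{i\in\fl(j)}F_i^{(h)}(s) \le L_j + C_j(t-s)$; specialising to any pair of LP time variables $\bt_u,\bt_v$ with $\bt_v \le \bt_u$ (which is guaranteed by the time constraints of the LP) yields exactly the added inequality.

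Combining these three points, every admissible trajectory of the network yields a feasible solution of the augmented LP with the same delay (resp.\ backlog) value as in the base LP. By the previous theorem the base-LP optimum already dominates the worst-case delay/backlog, and because the augmented LP has a smaller or equal feasible set but still contains all admissible trajectories, its maximum is still an upper bound of the worst-case delay (resp.\ backlog). The only real subtlety I anticipate is making precise that the LP variables $\bt_{\cdot}$ truly correspond, in every admissible trajectory, to arrival/departure dates of common content so that the TFA++ and SFA delay bounds can be invoked; this follows from the FIFO constraints together with the ordering of the $\bt_k$ imposed by the service constraints, so no new argument beyond careful bookkeeping is required.
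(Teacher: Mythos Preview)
Your approach is essentially the same as the paper's: show that every admissible trajectory still yields a feasible point of the augmented LP, so the optimum remains an upper bound. The paper, however, does not simply invoke the previous theorem and then check only the new constraints; it rebuilds the trajectory-to-feasible-point map explicitly (defining the $t_u$'s via successive FIFO and service inversions) and verifies \emph{all} constraints, because the earlier theorem was proved only by observing that the base LP is a relaxation of the LP in~\cite{BS12}, not by constructing this map.

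There is one genuine subtlety you flag but do not resolve, and which the paper handles carefully: the LP process variables need not equal the true process values at the chosen dates. Because of possible jumps in the cumulative functions, one only has $\bF_i^{(j)}(\bt_u)\in[F_i^{(j)}(t_u),\,F_i^{(j)}(t_u+)]$. The paper uses this bracketing to justify the TFA++ and SFA inequalities: from $F_i^{(j)}(t_u)\le \bF_i^{(j)}(\bt_u)=\bF_i^{(h)}(\bt_v)\le F_i^{(h)}(t_v+)$ one deduces $t_v-t_u\le d_j^{TFA}$ (resp.\ $d_i^{SFA}$). Your sentence ``the two dates represent an arrival and a departure time of the same aggregate quantity'' is the right intuition, but as written it presupposes continuity; without the bracketing argument the step is not quite complete. (Minor point: the FIFO constraints in the LP are per-flow, not aggregate as you wrote them.)
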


\begin{proof}
	The proof is similar to the proof in~\cite{BS12, BS15} (upper bound part). 
	Let $(F_{i}^{(j)})_{i, j}$ be an admissible trajectory for the network. 
	Let $t_0$ be the departure date (at server $n$) of the bit of data satisfying the worst-case delay (or backlog) of the flow of interest. 
	In the whole proof, we write $h = \su(j)$. We have $u_{\min(n+1)} = u_{\max(n+1)} = 0$. If $t_u$ are defined for $u\in [u_{\min(h)}, u_{\max(h)}]$, for some server $j$, one can define $s$ such that $i \in \fl(j)$, $F_i^{(j)}(s) = F_i^{(h)}(t_u)$. This is possible since the policy is FIFO. We denote $s = t_{u + 1 - u_{\min(h)} + u_{\min(j)}}$. 
	Moreover, there exists $s$ such that $\sum_{i\in\fl(j)} F_i^{(h)}(t_{u_{\min(h)}}) \geq \sum_{i\in\fl(j)} F_i^{(j)}(s) + \beta_j(t_{u_{\min(h)}} - s)$. We define $t_{u_{\min(j)}}$ as the minimum value of $s$ satisfying this service constraint. The minimum exists due to the continuity of $\beta_j$ and left-continuity of the processes.  
	
	Let us now fix the variables of the linear program:
	\begin{itemize}
		\item for all $u$, $\bt_u = t_u$;
		\item for all $i \in\fl(n)$, $\bF_i^{(n+1)}(\bt_0) = F_i^{(n+1)}(t_{0})$.
		\item for all $j$ and all $i\in\fl(j)$, for all $k\leq u_{\max(h)} - u_{\max(h)}$, $\bF_i^{(j)}(t_{u_{\min(j)}+k}) = \bF_i^{(h)}(\bt_{u_{\min(h)}+k})$;
		\item for all $j$ and all $i\in\fl(j)$, $\bF_i^{(j)}(\bt_{u_{\min(j)}}) = F_i^{(j)}(t_{u_{\min(j)}})$.
	\end{itemize} 
	
	One can remark, due to possible discontinuities of the cumulative processes that where defined, $\bF_i^{(j)}(\bt_u) \in [F_i^{(j)}(t_u) , F_i^{(j)}(t_u+)]$. 
	
	Now, one only need to check that the variables set this way satisfy all the linear constraints. By construction, the FIFO and service constraints are satisfied.
	
	As the system is causal, that is, $F_i^{(j)} \geq F_i^{(h)}$, and the cumulative processes are non-decreasing, the time and monotony constraints are satisfied.  
	
	For all $i$, let us denote $j = \pi_i(1)$ the first server crossed by flow $i$, for all $u_{\min(j)} \leq u < v \leq u_{\max(j)}$, 
	$\bF_i^{(j)}(\bt_u) - \bF_i^{(j)}(\bt_v) \leq F_i^{(j)}(t_u+) - F_i^{(j)}(t_v) \leq b_i + r_i(t_u-t_v)$. The arrival constraints are then satisfied. 
	
	Similarly, consider server $j$ and its departure processes $F_i^{(h)}$. For all $u_{\min(h)} \leq u < v \leq u_{\max(h)}$, 
	$$\sum_{i\in\fl(j)} (\bF_i^{(h)}(\bt_u) - \bF_i^{(h)}(\bt_v)) \leq \sum_{i\in\fl(j)}(F_i^{(h)}(t_u+) - F_i^{(h)}(t_v)) \leq L_j + C_j(t_u-t_v),$$ and the shaping constraints are satisfied. 
	
	Let us focus on the TFA++ constraints. For each FIFO constraint $\bF_i^{(j)}(\bt_u) = \bF_i^{(h)}(\bt_v)$, we have
	$$F_i^{(j)}(t_u) \leq \bF_i^{(j)}(\bt_u)  = \bF_i^{(h)}(\bt_v) \leq F_i^{(h)}(t_v+),$$ so $t_v - t_u \leq d_j^{TFA}$ and then the constraint $\bt_v - \bt_u\leq d_j^{TFA}$ is satisfied. 
	
	Similarly, for each flow $i$, let $j$ the first server it crosses and $h$ the successor of the last server it crosses. For all $t_v$ where $F_i^{(h)}$ is defined, and $t_u$ such that $\bF_i^{(j)}(\bt_u)  = \bF_i^{(h)}(\bt_v)$ (by transitivity) we have 
	$$F_i^{(j)}(t_u) \leq \bF_i^{(j)}(\bt_u)  = \bF_i^{(h)}(\bt_v) \leq F_i^{(h)}(t_v+),$$ so $t_v - t_u \leq d_i^{SFA}$ and then the constraint $\bt_v - \bt_u\leq d_i^{SFA}$ is satisfied. 
\end{proof}

\section{Linear programs for feed-forward networks}
\label{sec:ff}
The method proposed above strongly relies on the tree-topology of the network. In this section, we show how to extend the linear programming approach to feed-forward network. The first method is the more accurate and consists in unfolding the network in order to transform it into a tree. Unfortunately, the size of the tree might become exponential compared to the size of the original network. We then propose alternative constructions to reduce this complexity. For example the decomposition of the network into smaller pieces. These two constructions can of course be combined to optimize the trade-offs between accuracy and tractability, but this is out of the scope of this paper.
 
\subsection{Unfolding a feed-forward network into a tree}
Intuitively, this is equivalent to introducing FIFO and service date independently for each predecessor of servers. So if a servers has two successors,  FIFO and service dates for this server be will be introduced twice and independently. The unfolding of the network of Figure~\ref{fig:toy-ff} is depicted in Figure~\ref{fig:unfold}. 

\begin{figure}[htbp]
	\centering
	\begin{tikzpicture}
		[server/.style={shape=rectangle,draw,minimum height=.8cm,inner xsep=3ex}]
		\node[server,name=S0] at (0,0) {0};
		\node[server,name=S1] at (2,-1) {1};
		\node[server,name=S2] at (2,1) {2};
		\node[server,name=S3] at (4,0) {3};	
		\draw[thick, ->, red] (-1, -0.2) node [left] {$f_0$}-- (0, -0.2)-- (2, -1.2) -- (4, -0.2) -- (5, -0.2);
		\draw[thick, ->, blue] (-1, 0.2)  node [left] {$f_1$} -- (0, 0.2)-- (2, 1.2) -- (4, 0.2) -- (5, 0.2);
	\end{tikzpicture}
\caption{Toy feed-forward network.}
\label{fig:toy-ff}
\end{figure}
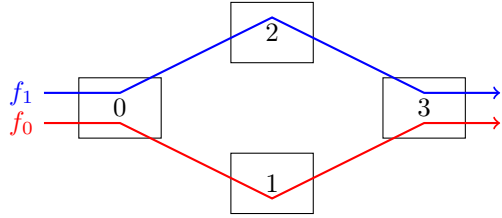

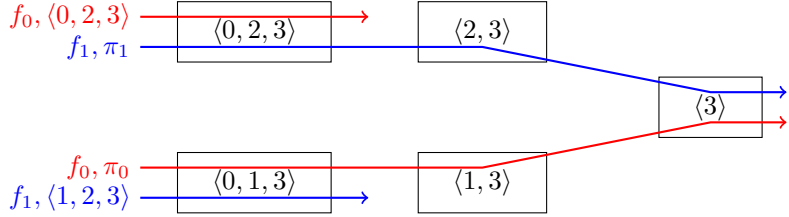
\begin{figure}[htbp]
	\centering
	\begin{tikzpicture}
	[ server/.style={shape=rectangle,draw,minimum height=.8cm,inner xsep=3ex}]
	\node[server,name=S01] at (0,-1) {$\langle 0, 1, 3 \rangle$};
	\node[server,name=S02] at (0,1) {$\langle 0, 2, 3 \rangle$};
	\node[server,name=S1] at (3,-1) {$\langle 1, 3 \rangle$};
	\node[server,name=S2] at (3,1) {$\langle 2, 3 \rangle$};
	\node[server,name=S3] at (6,0) {$\langle 3 \rangle$};
	\draw[thick, ->, red] (-1.5, -0.8) node [left] {$f_0, \pi_0$} -- (3, -0.8 )-- (6, -0.2) -- (7, -0.2);
	\draw[thick, ->, blue] (-1.5, -1.2) node [left] {$f_1, \langle 1, 2, 3\rangle$}-- (1.5, -1.2); 
	\draw[thick, ->, blue] (-1.5, 0.8) node [left] {$f_1, \pi_1$}-- (3, 0.8 )-- (6, 0.2) -- (7, 0.2);
	\draw[thick, ->, red] (-1.5, 1.2)node [left] {$f_0, \langle 0, 2, 3\rangle$} -- (1.5, 1.2);
	\end{tikzpicture}
	\caption{Unfolding of the network of Figure~\ref{fig:toy-ff}. For example, the original flow following the path $\langle 0, 1, 3\rangle$, leads to two flows in the unfolding: paths $\langle \langle 0, 1, 3\rangle, \langle 1, 3\rangle, \langle 3\rangle\rangle$ and $\langle \langle 0, 1, 3\rangle\rangle$.}
	\label{fig:unfold}
\end{figure}

\paragraph{The unfolding construction}
Consider a network $\cN$. Assume that this network is feed-forward, and that the servers are numbered such that if $(j,h)$ is an arc, then $j<k$, and that server $n$ is the only sink of the network.

The unfold-net of $\cN$ is denoted $\cU$ and defined as follows: \begin{itemize}
	\item let $\Pi$ be the set of paths in $\cN$ ending at $n$.
	Then $\Pi$ is the set of servers of $\cU$, server $\langle j_1, j_2 , \ldots, n\rangle$ offers a service curve $\beta_{j_1}$. 
	\item for all node $\pi = \langle j_1, j_2,\ldots , n\rangle$ and all flow $f_i$, let $\pi' = \langle j_1, j_2,\ldots , j_k\rangle$ the maximum common prefix of both $\pi_i$ and $\pi$. If this prefix is not empty, then there is a flow $(i, \pi)$ from node $\pi$ to $\langle j_k,\ldots , n\rangle$ with arrival curve $\alpha_i$. If flow $f_i$ is ending at server $n$, we call the flow from $\pi$ to $\langle n\rangle$ the copy for flow $f_i$.
\end{itemize}

One can easily check the following lemmas:
\begin{lemma} If $\pi \neq \pi'$, the two flows $(i,\pi)$ and $(i,\pi')$ do not share any common sub-path.
\end{lemma}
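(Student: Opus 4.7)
The plan is to prove the stronger statement that the paths of $(i,\pi)$ and $(i,\pi')$ in $\cU$ share not even a single server, which clearly implies that no common sub-path exists. First I would make explicit the list of servers traversed by each flow: writing $\langle j_1,\ldots,j_k\rangle$ for the maximum common prefix of $\pi$ and $\pi_i$, the flow $(i,\pi)$ visits exactly the nodes $N_a := \langle \pi(a),\pi(a+1),\ldots,n\rangle$ for $a=1,\ldots,k$, as can be read off the example of the figure caption; symmetrically, $(i,\pi')$ visits the nodes $N'_b := \langle \pi'(b),\ldots,n\rangle$ for $b=1,\ldots,k'$, where $k'$ is the length of the maximum common prefix of $\pi'$ and $\pi_i$. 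Up to swapping the roles of $\pi$ and $\pi'$, I may assume $k\leq k'$.

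Next I would argue by contradiction. Suppose $N_a = N'_b$ for some $a\leq k$ and $b\leq k'$. Comparing first entries gives $\pi(a) = \pi'(b)$; since $a\leq k$ and $b\leq k'$, both $\pi$ and $\pi'$ coincide with $\pi_i$ at these positions, hence $\pi_i(a) = \pi_i(b)$. Because $\pi_i$ is an acyclic path in the feed-forward network $\cN$, it is simple, so $a = b$. The full equality $N_a = N'_a$ then forces $\pi(a+j) = \pi'(a+j)$ for every $j\geq 0$ together with $|\pi| = |\pi'|$, while for the remaining positions $q<a$ one has $\pi(q) = \pi_i(q) = \pi'(q)$ because $q<a\leq k\leq k'$. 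Combining the two ranges yields $\pi = \pi'$, contradicting the hypothesis. Hence no common node exists, and in particular no common sub-path either.

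The argument is purely combinatorial and requires no network-calculus material. The only subtle points are correctly identifying the list of traversed servers in $\cU$ from the definition of the unfolding, and using the simplicity of $\pi_i$ (which comes directly from acyclicity of the flow paths in the feed-forward network) to pass from an equality of labels to an equality of indices. I do not expect any genuine obstacle; once the sequence of visited nodes is laid out, the contradiction is essentially a bookkeeping exercise on the suffix structure of $\pi$ and $\pi'$.
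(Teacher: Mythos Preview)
Your argument is correct. The paper does not actually supply a proof of this lemma; it merely states ``One can easily check the following lemmas'' and moves on. Your write-up fills in those details in the natural way: identify the nodes visited by $(i,\pi)$ as the suffixes $\langle \pi(a),\ldots,n\rangle$ for $a\le k$, and then use the simplicity of $\pi_i$ to force $a=b$ from a hypothetical coincidence of nodes, which in turn forces $\pi=\pi'$. Proving the stronger claim that the two flows share no node at all is entirely reasonable and is presumably what the authors had in mind. One small remark: the lemma is vacuous unless both flows $(i,\pi)$ and $(i,\pi')$ actually exist, i.e., unless $\pi(1)=\pi'(1)=\pi_i(1)$; this is implicit in your argument (you use $\pi(q)=\pi_i(q)$ for $q<a$, and in particular for $q=1$) and is worth stating explicitly, but it does not affect the validity of the proof.
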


\begin{lemma}
	For each server $\pi = \langle j, \ldots, n\rangle$ of $\cU$, for all $i\in\fl(j)$ in $\cN$, there exists a flow $(i,\pi')$ such that $(i,\pi')\in\fl(\pi)$ in $\cU$.
\end{lemma}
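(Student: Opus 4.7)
My plan is to construct an explicit $\pi'$ by splicing the prefix of $\pi_i$ ending at $j$ onto the path $\pi$, and then to verify that the unfolded flow $(i,\pi')$ reaches the server $\pi$.

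Since $i\in\fl(j)$ in $\cN$, the path $\pi_i$ visits $j$, say at position $a$, so $\pi_i=\langle u_1,\ldots,u_{a-1},u_a,u_{a+1},\ldots,u_{\ell_i}\rangle$ with $u_a=j$. Write the given server as $\pi=\langle v_1=j,v_2,\ldots,v_d=n\rangle$ and define
\[
\pi' \;=\; \langle u_1,\ldots,u_{a-1},v_1,v_2,\ldots,v_d\rangle.
\]
This is a valid path of $\cN$ ending at $n$: the arc $(u_{a-1},u_a)=(u_{a-1},v_1)$ is inherited from $\pi_i$, and the arcs $(v_c,v_{c+1})$ from $\pi$. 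Since $\pi'$ ends at $v_d=n$, it is a node of $\cU$.

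To conclude, I show that $(i,\pi')$ crosses $\pi$ in $\cU$. The maximal common prefix of $\pi_i$ and $\pi'$ contains the initial segment $\langle u_1,\ldots,u_a\rangle$ (their first $a$ entries coincide by construction), so it has length $k\geq a$. By the unfolding rule, the flow $(i,\pi')$ starts at $\pi'$ and traverses the successive suffixes of $\pi'$ down to the one rooted at position $k$. Since $k\geq a$, it in particular traverses the suffix at position $a$, which is precisely $\langle v_1,v_2,\ldots,v_d\rangle=\pi$. Hence $(i,\pi')\in\fl(\pi)$ in $\cU$, as required.

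I do not expect a substantial obstacle: once the candidate $\pi'$ has been guessed, the argument is a direct unpacking of the definitions. The only boundary case worth mentioning is $a=1$ (i.e., $j$ is the first server of $\pi_i$): then the spliced prefix is empty, $\pi'=\pi$ itself, and the same reasoning applies since the common prefix of $\pi_i$ and $\pi$ starts at $j$.
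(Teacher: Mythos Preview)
Your proof is correct. The paper does not actually provide a proof of this lemma; it is preceded by the sentence ``One can easily check the following lemmas'' and left to the reader. Your explicit construction of $\pi'$ by splicing the prefix of $\pi_i$ up to $j$ onto the given path $\pi$, together with the observation that the common prefix of $\pi_i$ and $\pi'$ has length at least $a$ so that the unfolded flow reaches the $a$-th suffix $\pi$, is exactly the direct verification the paper has in mind. The only point worth making explicit (which you implicitly use) is that $\pi'$ has no repeated vertices: since $\cN$ is feed-forward, the $u_1,\ldots,u_{a-1}$ all precede $j$ in the topological order while $v_2,\ldots,v_d$ all succeed $j$, so the concatenation is indeed a simple path ending at $n$ and hence a node of $\cU$.
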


\begin{theorem}
	Let $\cN$ be a feed-forward network and $\cU$ be its unfolding. Let $f_i$ a flow of $\cN$ ending at server $n$. Let $d_i^{\cN}$ be the worst-case delay of flow $f_i$ in $\cN$ and $d_i^{\cU}$ that of the copy of flow $i$ in $\cU$. Then $d_i^{\cN} \leq d_i^{\cU}$.  
\end{theorem}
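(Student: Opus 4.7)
The plan is to embed any admissible trajectory of $\cN$ realizing delay arbitrarily close to $d^\cN_i$ into an admissible trajectory of $\cU$ in which the copy $(i,\pi_i)$ experiences exactly the same delay; since $d^\cU_i$ is the supremum over all admissible trajectories in $\cU$, this will give $d^\cN_i\leq d^\cU_i$.

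Given cumulative processes $(F^{(j)}_{i'})$ in $\cN$, the construction in $\cU$ is a ``copy--paste'': for every server $\pi=\langle j_1,j_2,\ldots,n\rangle$ of $\cU$ and every flow $(i',\pi'')\in\fl(\pi)$, I set
$$G^{(\pi)}_{(i',\pi'')}:=F^{(j_1)}_{i'}.$$
Intuitively, the $\cU$-server $\pi$ with first $\cN$-server $j_1$ simulates what happens at $j_1$ in $\cN$. The two lemmas preceding the theorem give a bijection between $\fl(\pi)$ (in $\cU$) and $\fl(j_1)$ (in $\cN$) via $(i',\pi'')\mapsto i'$, so that under this correspondence the aggregate arrival and aggregate departure at $\pi$ in $\cU$ equal the aggregate arrival and aggregate departure at $j_1$ in $\cN$.

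The verification of admissibility in $\cU$ is then a sequence of reductions: the service-curve inequality, the FIFO property, causality and the shaping constraint at $\pi$ in $\cU$ each follow directly from the corresponding constraint at $j_1$ in $\cN$, because the aggregates coincide and per-flow FIFO at $\pi$ is per-flow FIFO at $j_1$ relabelled via the bijection.

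The one delicate point, which I expect to be the main obstacle, is the \emph{arrival constraint} at the source $\pi''$ of each copy $(i',\pi'')$: one needs $F^{(\pi''(1))}_{i'}$ to be $\alpha_{i'}$-constrained, whereas $F^{(j)}_{i'}$ at an intermediate $\cN$-server $j$ is only shaped by a propagated (burstier) curve in general. The resolution is a careful reading of the definition of $\cU$: the flow $(i',\pi'')$ exists only if $\pi_{i'}$ and $\pi''$ share a non-empty common prefix, which forces $\pi''(1)=\pi_{i'}(1)$, so $F^{(\pi''(1))}_{i'}$ is in fact the $\cN$-input of $i'$ at its source and is $\alpha_{i'}$-constrained by assumption. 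With this check in hand, one observes that $(i,\pi_i)$ enters $\cU$ at $\pi_i$ and exits at $\langle n\rangle$, with arrival $F^{(\pi_i(1))}_i$ and final departure $F^{(n+1)}_i$, identical to those of $f_i$ in $\cN$; the delay equality and hence $d^\cN_i\leq d^\cU_i$ follow at once.
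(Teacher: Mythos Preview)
Your proposal is correct and follows essentially the same approach as the paper: take an admissible trajectory of $\cN$, copy each cumulative process $F^{(j_1)}_{i'}$ onto every $\cU$-server $\pi=\langle j_1,\ldots,n\rangle$ for the unique flow $(i',\pi'')\in\fl(\pi)$ with first coordinate $i'$, check admissibility server by server, and conclude that the copy of $f_i$ sees the same arrival and departure processes as $f_i$ itself. Your explicit use of the two lemmas to obtain the bijection $\fl(\pi)\to\fl(j_1)$ and your careful handling of the arrival constraint via $\pi''(1)=\pi_{i'}(1)$ are exactly the points the paper relies on (the paper states the arrival check as ``$F_{i,\pi}^{(\pi)}=F_i^{(\pi_i(1))}$'' and absorbs the bijection into the sentence ``the arrival and departure processes are then the same as in server $j$ of $\cN$'').
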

\begin{proof}
	Let $(F^{(j)}_i)$ be an admissible trajectory for network $\cN$. Let us build a trajectory for $\cU$. For each flow $(i, \pi)$ of $\cU$, where $\pi = \langle j_1, j_2,\ldots , n\rangle$, if $\pi' = \langle j_1, j_2,\ldots , j_k\rangle$ is the maximum common prefix of both $\pi_i$ and $\pi$, we set $F_{i, \pi}^{(\langle j_x,\ldots , n\rangle)} = F_i^{(j_x)}$, and $F_{i, \pi}^{(n+1)} = F_i^{(j_{k+1})}$. We now need to check that this trajectory is admissible for $\cU$.
	\begin{itemize}
	\item First, the arrival processes $F_{i, \pi}^{(\pi)} = F_i^{(\pi_i(1))}$ so it is $\alpha_i$-constrained.
	\item Second, consider a server $\langle j, h, \ldots, n\rangle$. The processes arriving to (resp departing from) this server are $F_{i, \pi}^{(\langle j, h,\ldots, n\rangle)} = F_i^{(j)}$ (resp. $F_{i, \pi}^{(\langle h,\ldots, n\rangle)} = F_i^{(h)}$ or $F_{i, \pi}^{(n+1)} = F_i^{(h)}$) if $i\in \fl(j)$ and $\langle j, h,\ldots, n\rangle$ is a prefix of $\pi$. The arrival and departure processes are then the same as in server $j$ of $\cN$, so the service, shaping and FIFO constraints are all satisfied. 
	\end{itemize}  
	As a consequence, the trajectory is admissible in $\cU$. In addition, if we consider the copy $(i, \pi)$ of flow $f_i$, we have $F_{i, \pi}^{(\pi)} = F^{(\pi(1))}_i$ and $F_{(i, \pi)}^{(n+1)} = F_i^{(n+1)}$, so the delay for the copy of $f_i$ in $\cU$ is the same as the delay of flow $f_i$ in $\cN$. 
	
	For all admissible trajectory of $\cN$, we have built an admissible trajectory in $\cU$ with the same worst-case delay for flow $i$, which means that   $d_i^{\cN} \leq d_i^{\cU}$. 
\end{proof}

A similar result holds for the backlog bounds. This unfilding procedure can also be used for any methods defined on tree topologies, such as linear programming for blind multiplexing~\cite{BJT10} or ad-hoc algorithms~\cite{Bou19}. 

\subsection{Decomposition into a tree network by splitting flows}
\label{sec:split}
Another solution to split flows into smaller pieces in order to obtain a tree, or a forest (collection of trees), and compute the arrival curves at places flows have been cut. 
The splitting procedure has been described in~\cite{BBL18}, and we briefly recall it here. 

Consider $G_{\cN} = (\N_n, \A)$ the graph induced by $\cN$, and define $\A_r\subseteq A$ such that $(\N_n, \A - \A_r)$ is a tree or a forest. A flow $f_i$ i then transformed into flows $K_i$ flows $(f_i, k)$ with paths $\langle \pi_i(h^i_k),\ldots,  \pi_i(h^i_{k+1}-1)\rangle$ in $(\N_n, \A - \A_r)$, where  $h^i_1 = 1$ and  $(\pi_i(h^i_k), \pi_i(h^i_k+1)) \in \A_r$ for $1< k \leq  K_i$. The transformation is illustrated in Figure~\ref{fig:ff-decomp}.

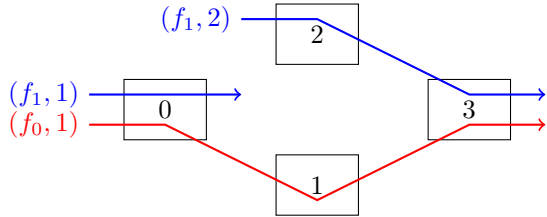
\begin{figure}[htbp]
	\centering
	\begin{tikzpicture}
	[server/.style={shape=rectangle,draw,minimum height=.8cm,inner xsep=3ex}]
	\node[server,name=S0] at (0,0) {0};
	\node[server,name=S1] at (2,-1) {1};
	\node[server,name=S2] at (2,1) {2};
	\node[server,name=S3] at (4,0) {3};
	\draw[thick, ->, red] (-1, -0.2) node [left] {$(f_0, 1)$} -- (0, -0.2)-- (2, -1.2) -- (4, -0.2) -- (5, -0.2);
	\draw[thick, ->, blue] (-1, 0.2) node [left] {$(f_1, 1)$} -- (1, 0.2);
	\draw[thick, ->, blue] (1, 1.2)  node [left] {$(f_1, 2)$}-- (2, 1.2) -- (4, 0.2) -- (5, 0.2);
	\end{tikzpicture}
	\caption{Decomposition of a network into a tree. Example of the network in Figure~\ref{fig:toy-ff} .}
	\label{fig:ff-decomp}
\end{figure}

Our aim is to compute a new network $\cN^{F}$ such that:
\begin{itemize}
	\item its induced graph is the forest $(\N_n, \A - \A_r)$;
	\item its servers offer the same guarantees as those of $\cN$;
	\item its flows are $\{(f_i, k)~|~k\in\{i, \ldots, K_i\}$. The arrival curve of $(f_i, k)$ is an arrival curve for $f_i$ at server $\pi_i(h^i_k+1)$;
	\item the arrival processes are shaped: for all $(j, j')\in\A^r$, flows $\{(f_i, k+1)~|~(\pi_i(h^i_k),\pi_i(h^i_k+1) ) = (j, j')\}$ are shaped by the curve $\sigma_j$.
\end{itemize} 
The arrival curves of the flows remain to be computed. 
As the network is feed-forward, the edges removed can be sorted in the topological order, and the computations be done according to this order, as described in Algorithm~\ref{algo:ff}. 

\begin{algorithm}
	\caption{Network analysis for feed-forward network by flow splitting}
	\label{algo:ff}
	\Begin{
		Sort $\A_r$ in the topological order in $\cN$ accoring to the first coordinate\;
		\ForEach{arc $(j,j')$ in the topological order}
		{\ForEach{flow $(f_i, k+1)$ starting at with $k>0$ starting at server $j'$}
			{Compute an arrival curve for flow $(f_i,k+1)$}}
	}
\end{algorithm}

To compute of the arrival curve for flow $(f_i,k+1)$, we will use Theorem~\ref{th:backlog-curve}. In short, to compute the arrival curve of flow $(f_i,k+1)$, one just have to compute the maximum backlog of flow $(f_i,k-1)$. 
Consider flows $(f_i,k)$ and $(f_i,k+1)$, 
$j$ the first server crossed by flow $(f_i, k)$.
There are two possibilities:
\begin{itemize}
	\item either we do not take into account the greedy shaper of server $j$, and Theorem~\ref{th:backlog-curve} can be applied directly. But not taking into account the shaping effect could lead to pessimistic bounds;
	\item or we take into account the greedy-shaper of server $j$. If done directly, Theorem~\ref{th:backlog-curve} cannot be applied, as it assumes that the arrival curve must be the only constraint of the flow. 
\end{itemize}

We propose to slightly transform the network so that the shaping effect can be taken into account for flows except flow $(f_i,k)$. The transformation is illustrated in Figure~\ref{fig:ff-shaping}. 

\begin{figure}[htbp]
	\centering
	\begin{tikzpicture}[scale=0.8, every node/.style={transform shape}]
	\node[ellipse, draw, minimum height=2cm,minimum width=4cm] at (0,0) {$\cN$};
	\draw[red, ->, ultra thick](-3, -0.2) -- (3, -0.2); 
	\draw[blue, ->, thick](-3, -0.4)  to[out=0,in=170] (3, -0.7);
	\draw[blue, ->, thick](-3, -0.6)  to[out=0,in=150]  (1.5, -1);
	\draw[purple, ->, thick](-3, 0.5) to[out=-10,in=-170] (3, 0.5); 
	\draw[purple, ->, thick](-3, 0.7) to[out=-10,in=-170] (3, 0.7);
	\draw[cyan, ->, thick](-2, 0.9) to[out=-10,in=-170] (2, 0.9);
	\draw[blue, ultra thick] (-3, -0.1) to[out=-10,in=-170] (-3, -0.7);
	\draw[purple, ultra thick] (-3, 0.4) to[out=-10,in=-170] (-3, 0.8);
	\node[purple] at (-3.3, 0.6) {$\sigma_2$};
	\node[blue] at (-3.3, -0.5) {$\sigma_1$};
	\end{tikzpicture}
	\hspace{1cm}
		\begin{tikzpicture}[scale=0.8, every node/.style={transform shape}]
	\node[ellipse, draw, minimum height=2cm,minimum width=4cm] at (0,0) {$\cN^{ns}$};
	\draw[red, ->, ultra thick](-3, -0.2) -- (3, -0.2); 
	\draw[blue, ->, thick](-3, -0.4)  to[out=0,in=170] (3, -0.7);
	\draw[blue, ->, thick](-3, -0.6)  to[out=0,in=150]  (1.5, -1);
	\draw[purple, ->, thick](-3, 0.5) to[out=-10,in=-170] (3, 0.5); 
	\draw[purple, ->, thick](-3, 0.7) to[out=-10,in=-170] (3, 0.7);
	\draw[cyan, ->, thick](-2, 0.9) to[out=-10,in=-170] (2, 0.9);
	\draw[blue, ultra thick] (-3, -0.3) to[out=-10,in=-170] (-3, -0.7);
	\draw[purple, ultra thick] (-3, 0.3) to[out=-10,in=-170] (-3, 0.7);
	\node[purple] at (-3.3, 0.6) {$\sigma_2$};
	\node[blue] at (-3.3, -0.6) {$\sigma_1$};
	\end{tikzpicture}
	\caption{System transformation: (left) the flow of interest (in red and blod) is shaped with two other flows (in blue) by the greedy-shaper $\sigma_1$; (right) the flow of interest is not shaped anymore. The rest of the network is not modified.}
	\label{fig:ff-shaping}
\end{figure}
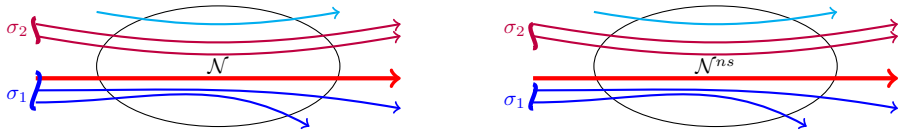

More precisely, let us consider $\cN_{(f_i, k)}$ the sub-network obtained from $\cN^F$ by keeping only the predecessors of the last node visited by flow $(f_i, k)$ and $\cN^{ns}_{(f_i, k)}$ the same network, except the shaping of the arrival processes becomes: for all $(j, j')\in\A^r$, flows $\{(f_{i'}, k'+1)~|~(\pi_{i'}(h^{i'}_{k'}),\pi_{i'}(h^{i'}_{k'}+1) ) = (j, j')\} \setminus \{(f_i, k)\}$ are shaped by the curve $\sigma_j$.
 
 \begin{lemma}
 	If $\alpha$ is an arrival curve for the departure process of flow $(f_i, k)$ in $\cN^{ns}_{(f_i, k)}$, then its is also an arrival curve for the departure process of that flow in $\cN_{(f_i, k)}$.
 \end{lemma}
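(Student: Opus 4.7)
The plan is to argue by inclusion of admissible trajectories: the two networks $\cN_{(f_i,k)}$ and $\cN^{ns}_{(f_i,k)}$ differ only by a single shaping constraint, and that constraint is strictly stronger in $\cN_{(f_i,k)}$. Hence every admissible trajectory of $\cN_{(f_i,k)}$ is admissible in $\cN^{ns}_{(f_i,k)}$, so the set of departure processes of $(f_i,k)$ in $\cN_{(f_i,k)}$ is a subset of the corresponding set in $\cN^{ns}_{(f_i,k)}$, and any arrival curve valid for all departure processes in $\cN^{ns}_{(f_i,k)}$ is automatically valid in $\cN_{(f_i,k)}$.

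The first step is to unroll the definitions. Fix the edge $(j,j')\in\A_r$ associated to the cut at the beginning of flow $(f_i,k)$. Let $S$ denote the set of flows $\{(f_{i'}, k'+1)~|~(\pi_{i'}(h^{i'}_{k'}),\pi_{i'}(h^{i'}_{k'}+1))=(j,j')\}$, so that in $\cN_{(f_i,k)}$ the aggregate $\sum_{g\in S} F_g$ is $\sigma_j$-constrained, while in $\cN^{ns}_{(f_i,k)}$ only $\sum_{g\in S\setminus\{(f_i,k)\}} F_g$ is required to be $\sigma_j$-constrained. Every other constraint (arrival curves, service curves, FIFO, causality, and all other shaping constraints at the other cut edges) is identical in both networks.

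The second step is the elementary monotonicity observation that drives the inclusion. For any $0\le s\le t$, since the cumulative process $F_{(f_i,k)}$ is non-decreasing one has $F_{(f_i,k)}(t)-F_{(f_i,k)}(s)\ge 0$, hence
\begin{equation*}
\sum_{g\in S\setminus\{(f_i,k)\}} \bigl(F_g(t)-F_g(s)\bigr) \;\le\; \sum_{g\in S}\bigl(F_g(t)-F_g(s)\bigr) \;\le\; \sigma_j(t-s).
\end{equation*}
Thus the $\sigma_j$-constraint on $S$ implies the $\sigma_j$-constraint on $S\setminus\{(f_i,k)\}$, so any trajectory admissible in $\cN_{(f_i,k)}$ is admissible in $\cN^{ns}_{(f_i,k)}$.

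The final step is the conclusion: if $\alpha$ is an arrival curve for the departure process of $(f_i,k)$ in $\cN^{ns}_{(f_i,k)}$, then for every admissible trajectory of $\cN^{ns}_{(f_i,k)}$ the departure process of $(f_i,k)$ is $\alpha$-constrained; since admissible trajectories of $\cN_{(f_i,k)}$ form a subset of those of $\cN^{ns}_{(f_i,k)}$, the same bound holds in $\cN_{(f_i,k)}$. There is no real obstacle here; the only subtle point worth emphasizing is the direction of the implication between the two shaping constraints, which relies on the non-negativity of the increments of $F_{(f_i,k)}$ and not on any further network-calculus machinery.
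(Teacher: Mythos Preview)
Your proposal is correct and follows essentially the same route as the paper: both argue that every admissible trajectory of $\cN_{(f_i,k)}$ is admissible in $\cN^{ns}_{(f_i,k)}$ by checking that the $\sigma_j$-constraint on the full group $S$ implies the $\sigma_j$-constraint on $S\setminus\{(f_i,k)\}$, using non-negativity of the increments of $F_{(f_i,k)}$, and then conclude by inclusion of departure processes. The only cosmetic difference is that the paper first disposes of the trivial case where $(f_i,k)$ is not shaped at all (so the two networks coincide), which you could mention for completeness.
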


\begin{proof}
	If flow $(f_i, k)$ was not shaped then the two networks are the same and there is nothing to show. Otherwise, 
	let us denote $I$ the set of flows shaped together with flow $(f_i, k)$ by $\sigma_j$ in  $\cN_{(f_i, k)}$. Let $(F_z^{(j)})$ be an admissible trajectory in $\cN_{(f_i, k)}$. It is then also an admissible trajectory in $\cN_{(f_i, k)}^{ns}$. To prove this, it is enough to check that $\sum_{z\in I\setminus \{(f_i, k)\}} F_j^{(j)}$ is $\sigma_1$-constrained: $\forall s\leq t$, 
	$$\sum_{z\in I\setminus \{(f_i, k)\}} F_z^{(j)}(t) - \sum_{z\in I\setminus \{f_i, k)\}} F_z^{(j)}(s) \leq \sigma_1(t-s) - (F_{(f_i, k)}^{(j)}(t) - F_{(f_i, k)}^{(j)}(s)) \leq \sigma_1(t-s), $$
	since $F_{(f_i, k)}^{(j)}$ is non-decreasing. 
	
	Then the set of possible departures processes of $(f_i, k)$ in $\cN_{(f_i, k)}$ is included in those of $(f_i, k)$ in $\cN_{(f_i, k)}^{ns}$, meaning that if $\alpha$ is an arrival curve for the departure process in $\cN_{(f_i, k)}^{ns}$, it is also one for those in $\cN_{(f_i, k)}$.
\end{proof}

According to Theorem~\ref{th:backlog-curve}, the arrival curve of flow $f_i(i,k)$ can then be computed according to Algorithm~\ref{algo:ff-flow}, where the backlog bound can be computed with the linear program given in Section~\ref{sec:tree}.

\begin{algorithm}
	\caption{Arrival curve for flow $(f_i, k+1)$}
	\label{algo:ff-flow}
	\Begin{
	\lIf{$k = 1$}{{\bf return} $\alpha_i$}
	Compute $B$ a backlog bound for flow $(f_i, k)$ in $\cN_{(f_i, k)}^{ns}$\;
	{\bf return} $\gamma_{B, r_i}$
}
\end{algorithm}

\section{Network with cyclic dependencies}
\label{sec:cyclic}
In this section, we study the case of networks with cyclic dependencies. For this, we will apply the fix-point analysis, that has already been described several times in~\cite{Chang2000, LT2001, BBL18}, to the analysis described in Section~\ref{sec:split}: edges are removed, so that the induced graph becomes a forest and flows are split accordingly. Because of the cyclic dependencies, removed edges cannot be sorted in the topological order, and the fix-point on the arrival curves of all splited flows has to be computed. 

We will first show in Section~\ref{sec:cyc-lp} that the fix-point can be computed using a linear program, and the uniqueness of this fix-point. In Section~\ref{sec:lp-tfa}, we will also apply this result to the TFA++ analysis for network with cyclic dependencies developed in~\cite{TBM19}.

\subsection{A linear program formulation for the fix-point analysis}
\label{sec:cyc-lp}
We formulate the fix-point equation with linear program. More precisely, we prove that the fix-point is obtained by extracting some variables from the optimal solution of a linear program.

We use the same notations as in Section~\ref{sec:split}, and denote $Z = \{(f_i, k),~i\in\{1, \ldots, m\}, k \leq K_i\}$ the set of flows in the network. As we will use Theorem~\ref{th:backlog-curve}, all the arrival curves computed for a flow $(f_i, k)$ will have arrival rate $r_i$, we only focus of the burst of the flows, that we denote $x_{(f_i, k)}$ to enforce that it is a variable.

For all $x = (x_z)_{z\in Z}$, and all $z\in Z$ let us define $\cL_z(x)$ as the backlog bound computed by Algorithm~\ref{algo:ff-flow}, when the burst parameter of flow $z'$ is $b_{z'}$ for all $z'\in Z$ and $\cL(x) = (\cL_z(x))_{z\in\cL}$.
 Theorem~\ref{th:fp} gives a sufficient condition for the stability of the network and the arrival curves of the split flows.

\begin{theorem}[{\cite[Theorem 12.1]{BBL18}}]
	\label{th:fp}
	If the maximal solution $x^*$ of $\cC = \{x \leq \cL(x)\}$ is finite $x^*$, then $\cN$ is globally stable and the burst of the arrival curve of flow $z$ is $x^*_z$.
\end{theorem}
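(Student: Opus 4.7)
The plan follows the standard fixed-point template of network calculus, paralleling \cite[Theorem 12.1]{BBL18}, applied to the operator $\cL$ built from Algorithm~\ref{algo:ff-flow} and the linear program of Section~\ref{sec:tree}. It has three ingredients: monotonicity of $\cL$, a time-truncation bootstrap, and a maximality argument.

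First, I would check that $\cL$ is coordinate-wise monotone: if $x \leq y$, then $\cL(x) \leq \cL(y)$. Inspecting Algorithm~\ref{algo:ff-flow}, the LP of Section~\ref{sec:tree}, and the backlog objective coming from Theorem~\ref{th:backlog-curve}, the burst parameters $x_z$ enter only through arrival/burst constraints of the form $\bF_i^{(j)}(\bt_u) - \bF_i^{(j)}(\bt_v) \leq b_i^{(j)} + r_i(\bt_u - \bt_v)$; increasing $x$ relaxes the feasible polyhedron without touching the objective, so $\cL(x)$ can only grow. In particular, $\cC$ is closed under componentwise supremum, which validates taking a maximal element $x^*$.

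Second, fix any admissible trajectory of $\cN$ starting from an empty state, and a horizon $T>0$. For each split-flow $z\in Z$, let $\tilde b_z(T)$ be the smallest $b$ such that the cumulative process of $z$ restricted to $[0,T]$ is $\gamma_{b,r_i}$-constrained; $\tilde b(T)$ is non-decreasing in $T$. For every initial segment $(f_i,1)$, Algorithm~\ref{algo:ff-flow} returns $\alpha_i$, so $\tilde b_{(f_i,1)}(T)\leq b_i = \cL_{(f_i,1)}(\tilde b(T))$. For a subsequent segment $(f_i,k+1)$, the entering traffic equals the departure of $(f_i,k)$ at a cut edge; by Theorem~\ref{th:backlog-curve} applied inside $\cN^{ns}_{(f_i,k)}$, the burst of this departure on $[0,T]$ is at most the maximum backlog of $(f_i,k)$ in that sub-network, and running the LP of Section~\ref{sec:tree} with bursts $\tilde b(T)$ upper bounds that backlog. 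Hence $\tilde b_{(f_i,k+1)}(T)\leq \cL_{(f_i,k+1)}(\tilde b(T))$, so $\tilde b(T)\in\cC$.

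Third, by maximality of $x^*$ in $\cC$, $\tilde b(T)\leq x^*$ for every $T\geq 0$. Finiteness of $x^*$ then gives both conclusions: (a) the bursts are uniformly bounded in $T$, so $\gamma_{x^*_z,r_i}$ is a valid arrival curve for every flow $z$, proving the second assertion; (b) combining this with local stability ($\sum_{i\in\fl(j)} r_i \leq R_j$) and the per-server backlog bound $b_{\max}(\gamma_{b,r},\beta_{R,T})=b+rT$, the backlog at each server is uniformly bounded, yielding global stability of $\cN$.

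The delicate point will be Step~2: passing rigorously from the LP of Section~\ref{sec:tree} (stated for arrival curves valid on all of $\R_+$) to a statement about the process truncated to $[0,T]$. The cleanest way is to observe that all LP constraints involve only time variables $\bt_k$ that can be chosen within $[0,T]$, so that the polyhedron obtained with bursts $\tilde b(T)$ is a valid relaxation of the truncated dynamics; the bound on the truncated backlog then follows, and taking the supremum over $T$ propagates the inequality to the whole trajectory.
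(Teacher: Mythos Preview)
The paper does not give its own proof of Theorem~\ref{th:fp}: the statement is quoted verbatim from \cite[Theorem~12.1]{BBL18} and is used as a black box to justify the linear-programming reformulation that follows. There is therefore no ``paper's proof'' to compare against.

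That said, your sketch is the standard argument behind the cited result and is essentially sound. Monotonicity of $\cL$ is exactly the property the paper itself exploits later (Property $(P_1)$, where the bursts appear only as right-hand-side upper bounds in the constraints), and the truncation-then-maximality scheme is the canonical route in \cite{BBL18} and \cite{Chang2000}. Two points deserve care. First, when you apply Theorem~\ref{th:backlog-curve} to obtain $\tilde b_{(f_i,k+1)}(T)\leq \cL_{(f_i,k+1)}(\tilde b(T))$, the hypothesis of that theorem requires that the token-bucket curve be the \emph{only} constraint on the flow of interest; this is precisely why Algorithm~\ref{algo:ff-flow} computes the backlog in $\cN^{ns}_{(f_i,k)}$ and not in $\cN_{(f_i,k)}$, so you should invoke that transformation explicitly. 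Second, if the LP used for $\cL$ includes the TFA++/SFA delay constraints of Section~\ref{sec:tree}, those delays $d_j^{TFA}$ and $d_i^{SFA}$ must themselves be computed as monotone functions of the bursts $x$ (or be dropped) for the truncation step to close; otherwise you are assuming stability to derive it. With those two caveats made precise, your outline matches the intended proof.
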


\paragraph{Generic formulation for the fix-point equation of linear program}

Let us now focus on some properties of $\cL_z(x)$. The variables of this linear program are the time $\bt_u$ and process variables $\bF_i^{(j)}(\bt_u)$ described in Section~\ref{sec:tree}. The burst parameters only appear in the arrival curve constraints as $\bF_i^{(j)}(\bt_v) - \bF_i^{(j)}(\bt_u) \leq x_i + r_i(\bt_v - \bt_u)$. So if $x_i$ becomes a variable, the program remains linear. 

Then, there exists vectors a line-vector$A_z$ a column vector $C_z$ and a matrix $B_z$ such that 
$$\cL_z(x) = \max\{A_z(x, y)^t ~|~B_z (x, y)^t \leq C_z,~(x, y)\geq 0\},$$
where vector $y$ represents the time and function variables, and $x$ the burst parameters of the flows, and $(x, y)^t$ is the transposition of the line vector $(x, y)$.

This linear program has the following properties (note that they only concern coefficient that relates to the variables $x$ and not to the variables $y$).

\begin{itemize}
	\item[$(P_1)$] For all constraint $c$, for all $z'$, $(B_z)_{c, z'} \leq 0$ and $[(B_z)_{c, z'} < 0 \land (B_z)_{c, z''} < 0] \implies z'= z''$. In other words, there is at most one variable of type $x_{z'}$ in each constraint, and these variables appears as upper bounds. 
	\item[$(P_2)$] For all $z'$, $(A_z)_{z'} \geq 0$: the objective is increasing with the burst parameters $x_{z'}$.
\end{itemize}

Our aim is then to solve 
\begin{equation}
\label{eq:f1}
\sup \{x~|~x\leq \cL(x)\} = \sup \{x~|~x_z\leq  \max\{A_z (x, y)^t ~|~B_z (x, y)^t \leq C_z,(x, y) \geq 0\}\}.
\end{equation} 

We now show that this problem is equivalent to extracting the burst parameters of the following linear program: 
\begin{equation}
\label{eq:f2}
\max\{\sum_z x_z~|~x_z \leq A_z (x, y_z)^t, B_z (x, y_z)^t \leq C_z,~(x, y)\geq 0, \text{ for all } z\in Z\}.
\end{equation} 
Note that the set of variables $y_z$ are disjoint for each linear program. We call a solution of~\eqref{eq:f2} a vector $x$ for which their exists $((y_z)_{z\in Z})$ satisfying the constraints of ~\eqref{eq:f2}, and we call the solution {\em optimal} if it maximizes the sum of its coefficients. 

\begin{lemma}
	The optimal solution of~\eqref{eq:f2} is unique. 
\end{lemma}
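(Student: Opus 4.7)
The plan is to prove uniqueness of the $x$-part of the optimum by a coordinate-wise maximum argument. Suppose that $(x^{(1)}, (y_z^{(1)})_{z\in Z})$ and $(x^{(2)}, (y_z^{(2)})_{z\in Z})$ are both optimal feasible solutions of~\eqref{eq:f2} with common optimal value $M^{*} = \sum_z x^{(1)}_z = \sum_z x^{(2)}_z$. Define the coordinate-wise maximum $x^{*}\in\R_+^Z$ by $x^{*}_z := \max(x^{(1)}_z, x^{(2)}_z)$, and for each $z\in Z$ pick an index $k(z)\in\{1,2\}$ with $x^{(k(z))}_z = x^{*}_z$. The candidate new solution is $(x^{*}, (y_z^{(k(z))})_{z\in Z})$.

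The first step is to show that this candidate is feasible for~\eqref{eq:f2}. The only change, compared to $(x^{(k(z))}, y_z^{(k(z))})$, is that the burst vector has been enlarged from $x^{(k(z))}$ to $x^{*}$, while the auxiliary variables $y_z^{(k(z))}$ are untouched. By property $(P_2)$, $A_z$ has nonnegative coefficients on the $x$-variables, so
$$A_z(x^{*}, y_z^{(k(z))})^t \;\geq\; A_z(x^{(k(z))}, y_z^{(k(z))})^t \;\geq\; x^{(k(z))}_z \;=\; x^{*}_z,$$
which preserves the constraint $x^{*}_z \leq A_z(x^{*}, y_z^{(k(z))})^t$. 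Property $(P_1)$ says that in each row of $B_z$ at most one $x$-coefficient is nonzero and, when nonzero, is negative; thus raising any coordinate of $x$ can only decrease (never increase) the left-hand side of any such row, whence
$$B_z(x^{*}, y_z^{(k(z))})^t \;\leq\; B_z(x^{(k(z))}, y_z^{(k(z))})^t \;\leq\; C_z.$$

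The second step is the comparison of objective values. Feasibility of $(x^{*}, (y_z^{(k(z))})_z)$ together with optimality of $M^{*}$ gives $\sum_z x^{*}_z \leq M^{*}$, while the pointwise inequality $x^{*}_z \geq x^{(1)}_z$ yields $\sum_z x^{*}_z \geq \sum_z x^{(1)}_z = M^{*}$. Hence $\sum_z (x^{*}_z - x^{(1)}_z) = 0$, and combined with $x^{*}_z \geq x^{(1)}_z$ this forces $x^{*}_z = x^{(1)}_z$ for all $z$, i.e.\ $x^{(1)} \geq x^{(2)}$. Swapping the roles of $x^{(1)}$ and $x^{(2)}$ gives the reverse inequality, so $x^{(1)} = x^{(2)}$, which is the claimed uniqueness.

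There is essentially no hard step: the only non-routine ingredient is recognising that $(P_1)$ and $(P_2)$ together make the feasible region coordinate-wise monotone in $x$ (larger $x$ relaxes every $B_z$-constraint and strengthens the right-hand side of every $A_z$-constraint), so that the pointwise supremum of two feasible burst vectors is again feasible. Note that the argument only concerns the $x$-coordinates; the auxiliary variables $y_z$ are of course not claimed to be unique, consistent with the interpretation of ``the optimal solution'' of~\eqref{eq:f2} as the vector of burst parameters that it determines.
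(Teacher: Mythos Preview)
Your proof is correct and follows essentially the same approach as the paper: build the coordinate-wise maximum $x^* = \max(x^{(1)}, x^{(2)})$, use $(P_1)$ and $(P_2)$ to verify that $(x^*, (y_z^{(k(z))})_z)$ is feasible for~\eqref{eq:f2}, and conclude uniqueness of the $x$-part. The only cosmetic difference is that the paper phrases the last step as a contradiction (if $x^{(1)}\neq x^{(2)}$ then $\sum_z x^*_z > M^*$ contradicts optimality), whereas you argue directly that $\sum_z x^*_z = M^*$ forces $x^* = x^{(1)}$; your observation that only the $x$-component is claimed unique is also exactly the paper's convention for ``solution''.
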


\begin{proof}
	Suppose that $(x, (y_z)_{z\in Z})$ and $(x', (y'_z)_{z\in Z})$ are two different optimal solutions. The equality  $\sum_z x_z = \sum_z x'_z$ then holds and $x\neq x'$.
	
	 We will show that there exists another solution $(\tilde{x}, (\tilde{y}_z)_{z\in Z})$ with  $\tilde{x} = \max(x, x')$ where the maximum is coordinate-wise. In that case, $\sum_z \tilde{x}_z > \sum_z x_z$, which is in contradiction with the optimality of  $(x, y_z)$ and ends the proof. 
	
	From Property $(P_1)$, for all $c$, $(B_i(\tilde{x}, y_z))_c \leq (B_i(x, y_z))_c \leq (C_z)_c$, so the constraints are still satisfied when $x$ is replaced by $\tilde{x}$, and similarly for $x'$.  
	
	For all $z\in Z$, as $\tilde{x}_z = \max(x_z, x'_z)$, we either have  $\tilde{x}_z = x_z \leq A_z (x, y_z)$ or $\tilde{x}_z = x'_z \leq A_z (x', y'_z)$. From Property $(P_2)$, $A_z (x, y_z) \leq A_z (\tilde{x}, y_z)$, so if  $\tilde{x}_i = x_i$, one can choose $\tilde{y}_z = y_z$, and otherwise choose $\tilde{y}_z = y'_z$.  All constraints are satisfied, so $(\tilde{x},(\tilde{y}_z)_{z\in Z})$ is a solution of \eqref{eq:f2}. 
\end{proof}

\begin{theorem}
	The two following statements are equivalent.
	\begin{enumerate}
		\item $x$ is the maximal solution of~\eqref{eq:f1}.
		\item $x$ is the vector of variables extracted from the optimal solution of~\eqref{eq:f2}.
	\end{enumerate}
\end{theorem}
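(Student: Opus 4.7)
The plan is to set up two easy feasibility arguments in each direction, and use property $(P_1)$--$(P_2)$ together with the uniqueness lemma to upgrade ``sum-optimality'' to ``coordinate-wise maximality''. The key preliminary observation is that $\cL$ is \emph{coordinate-wise non-decreasing}: by $(P_1)$, increasing any $x_{z'}$ can only loosen the constraints $B_z(x,y)^t \leq C_z$ (the coefficient of $x_{z'}$ is $\leq 0$), so the feasible set of the inner max grows; by $(P_2)$, the objective $A_z(x,y)^t$ is itself non-decreasing in $x$. Consequently, if $x,x' \in \cC$, then $\max(x,x') \leq \max(\cL(x),\cL(x')) \leq \cL(\max(x,x'))$, so $\cC$ is closed under coordinate-wise maxima.

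Direction $(1) \Rightarrow (2)$: Let $x^*$ be the maximal element of $\cC$. For each $z$, pick $y_z$ attaining $\cL_z(x^*)$, so that $x^*_z \leq A_z(x^*, y_z)^t$ and $B_z(x^*, y_z)^t \leq C_z$; thus $(x^*, (y_z))$ is feasible for~\eqref{eq:f2}. Conversely, any feasible $(x', (y'_z))$ of~\eqref{eq:f2} satisfies $x'_z \leq A_z(x', y'_z)^t \leq \cL_z(x')$, hence $x' \in \cC$, hence $x' \leq x^*$ coordinate-wise and in particular $\sum_z x'_z \leq \sum_z x^*_z$. So $(x^*, (y_z))$ is optimal for~\eqref{eq:f2}.

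Direction $(2) \Rightarrow (1)$: Let $(x^*, (y_z))$ be the (unique) optimal solution of~\eqref{eq:f2}. From $x^*_z \leq A_z(x^*, y_z)^t \leq \cL_z(x^*)$ we get $x^* \in \cC$. To upgrade this to maximality, take any $x' \in \cC$ and, for each $z$, a witness $y'_z$ attaining $\cL_z(x')$; then $(x', (y'_z))$ is feasible for~\eqref{eq:f2}. Now form $\tilde{x} = \max(x^*, x')$. Using $(P_1)$ (monotonicity of the feasibility of each block) and $(P_2)$ (monotonicity of each objective), exactly as in the uniqueness lemma we can build witnesses $(\tilde{y}_z)$ making $(\tilde{x}, (\tilde{y}_z))$ feasible for~\eqref{eq:f2}. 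Optimality of $x^*$ gives $\sum_z \tilde{x}_z \leq \sum_z x^*_z$, and since $\tilde{x} \geq x^*$ coordinate-wise this forces $\tilde{x} = x^*$, i.e.\ $x' \leq x^*$. Hence $x^*$ is the maximal element of $\cC$.

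The only subtle step is the second direction, where one must convert ``sum-maximal for~\eqref{eq:f2}'' into ``coordinate-wise maximal for~\eqref{eq:f1}''. This is exactly what the max-closure argument of the uniqueness lemma delivers, so the proof essentially reuses that construction; the monotonicity consequences of $(P_1)$--$(P_2)$ do the real work, and everything else is bookkeeping about which witnesses $y_z$ to pick in each direction.
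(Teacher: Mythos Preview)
Your proof is correct and follows essentially the same route as the paper. The paper's argument is the terse version of yours: it simply observes that the set of $x$ satisfying~\eqref{eq:f1} coincides with the set of $x$ extractable from a feasible point of~\eqref{eq:f2}, and then appeals to the uniqueness of the maximal/optimal element on each side; your write-up makes explicit why sum-optimality in~\eqref{eq:f2} forces coordinate-wise maximality in $\cC$ by reusing the max-closure construction from the preceding uniqueness lemma, which is exactly the step the paper leaves implicit.
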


\begin{proof}
	If we show that any solution of~\eqref{eq:f1} is a solution of \eqref{eq:f2} and conversely, then the uniqueness of the maximal/optimal solutions to the two problems is enough to conclude.
	
	Let $x$ be a solution of~\eqref{eq:f1}, and $(x, y_z)$ be a solution of the sub-problem $z$. Then $(x, (y_z)_{z\in Z})$ is a solution to~\eqref{eq:f2}. Conversely, if $(x, (y_z)_z)$ is a solution to~\eqref{eq:f2},  $(x, (y_z))$ is a solution to sub-problem $x$ of ~\eqref{eq:f1} and $x$ is a solution of~\eqref{eq:f1}.
\end{proof}

\paragraph{Uniqueness of the fix-point} We just exhibited a linear program whose optimal solution is the largest solution fix-point of $\{x\leq \cL(x)\}$. It might be seen as pessimistic as the intuition would be that the smallest fix-point will also give an admissible solution for the cyclic network. This has been proved in the TFA++ analysis in~\cite{TBM19} and in many classical cases, the fix point is unique, because the considered system is linear. 
 
In this paragraph, we show that the solution of the fix-point with Properties $(P_1)$ and $(P_2)$ is unique. In this paragraph, we assume that the maximal solution of \eqref{eq:f1} is finite.

\begin{lemma}
	For all $z\in Z$, $\cL_z$ is concave and non-decreasing. 
\end{lemma}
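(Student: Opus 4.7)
The plan is to exploit the parametric linear programming structure of $\cL_z$: fix $z$ and regard $\cL_z(x)$ as the value of the LP
\[
\cL_z(x) \;=\; \max\bigl\{A_z(x,y)^{t} \;\big|\; B_z(x,y)^{t}\leq C_z,\ y\geq 0\bigr\},
\]
where $x$ plays the role of a parameter and $y$ is the vector of time and process variables. Split the coefficient arrays into their $x$- and $y$-parts, writing $A_z(x,y)^{t}=A_z^{x}x+A_z^{y}y$ and $B_z(x,y)^{t}=B_z^{x}x+B_z^{y}y$. Then
\[
\cL_z(x) \;=\; A_z^{x}x \;+\; V_z\bigl(C_z-B_z^{x}x\bigr),\quad\text{where}\quad V_z(b)\;=\;\max_{y\geq 0}\{A_z^{y}y : B_z^{y}y\leq b\}.
\]

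For monotonicity, I would argue as follows. Suppose $x\leq x'$ coordinate-wise. By property $(P_1)$, every entry of $B_z^{x}$ is non-positive, so $B_z^{x}x\geq B_z^{x}x'$ and consequently $C_z-B_z^{x}x\leq C_z-B_z^{x}x'$; the feasible set of the inner LP defining $V_z$ can only grow when $x$ is replaced by $x'$, hence $V_z(C_z-B_z^{x}x)\leq V_z(C_z-B_z^{x}x')$. By property $(P_2)$, every entry of $A_z^{x}$ is non-negative, so $A_z^{x}x\leq A_z^{x}x'$. Summing the two inequalities yields $\cL_z(x)\leq\cL_z(x')$.

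For concavity, the key observation is that $V_z(b)$ is concave in $b$: if $y_1$ attains $V_z(b_1)$ and $y_2$ attains $V_z(b_2)$, then for any $\theta\in[0,1]$ the convex combination $\theta y_1+(1-\theta)y_2$ is non-negative and satisfies $B_z^{y}(\theta y_1+(1-\theta)y_2)\leq\theta b_1+(1-\theta)b_2$, so it is feasible for the RHS $\theta b_1+(1-\theta)b_2$ and attains objective value $\theta V_z(b_1)+(1-\theta)V_z(b_2)$; hence $V_z(\theta b_1+(1-\theta)b_2)\geq\theta V_z(b_1)+(1-\theta)V_z(b_2)$. (Equivalently, LP duality expresses $V_z(b)$ as a pointwise minimum of linear functions of $b$.) Since $x\mapsto C_z-B_z^{x}x$ is affine, $x\mapsto V_z(C_z-B_z^{x}x)$ is concave, and adding the linear term $A_z^{x}x$ preserves concavity.

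The only conceptual subtlety is the handling of the feasibility/boundedness of the parametric LP: one needs $\cL_z(x)$ to be finite on the relevant domain so that the convex-combination argument actually applies. This is harmless here, since we are in the setting where the maximal fix-point is assumed finite, so the LP is feasible and bounded for all $x$ in the region of interest, and $\cL_z$ is a proper concave non-decreasing function there.
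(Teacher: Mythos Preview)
Your proof is correct and takes essentially the same approach as the paper: both separate the LP into the parameter block $x$ and the auxiliary variables $y$, and then argue that the optimal value of the resulting parametric LP is concave and non-decreasing in $x$ using properties $(P_1)$ and $(P_2)$. The only cosmetic difference is that the paper establishes concavity via LP duality---writing $\cL_z(x)$ as the minimum, over the finitely many vertices of the fixed dual polyhedron, of affine functions of $x$ with non-negative slopes, which yields concavity and monotonicity in one stroke---whereas you give a direct primal convex-combination argument for concavity and a separate feasible-set-inclusion argument for monotonicity; your handling of the finiteness/boundedness assumption is also in line with the paper's.
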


\begin{proof}
	To prove the result, let us rewrite function $\cL_z(x)$ using the duality of the linear program. 
	First, we separate variables $x$ and $y$. 
	From Property $(P_1)$, one can transform $B_z(x, y)^t \leq C_z$ into $B'_z y^t \leq C'_z(x)$, where now the coefficients of $C'_i$ depend on $x$. Property $(P_1)$ tells us that the coefficients of $C'_z(x)$ are linear and non-decreasing in each variable of $x$. 
	From Property $(P_2)$, one can rewrite $A_z(x, y)^t$ as $A'_z y^t + A''_z x^t$, where all the coefficients of $A''_z$ are non-negative. 
	
	We then have $\cL_z(x) = \max\{A'_z y^t ~|~ B'_z y^t \leq C'_z(x)\} + A''_z x^t$. As function $\cL_z$ represents the computation of a maximum backlog in a stable feed-forward neworks (variables $x$ represent the burst parameters), the linear problem involved in $\cL_z(x)$ has an optimal solution. The dual problem then has the same optimal solution, and we can express $\cL_z(x)$ as 
	$$\cL_z(x) = \min\{C'_z(x)^t w^t ~|~ B_z^{'t} w^t \geq A_z^{'t}\} + A''_z x^t.$$
	The polyhedron defined by $ B_z^{'t} w^t \geq A_z^{'t}$ does not depend on $x$. The optimal solution is obtained at a vertex of this polyhedron, and there is a finite number of vertices. Then $\cL_z(x)$ is the minimum of non-decreasing linear functions (the coefficients of $x$ are all non-negative). Then $\cL_z(x)$ is non-decreasing and concave.
\end{proof} 

To prove the uniqueness of the fix-point, we will follow the lines of the proof of~\cite{Ken01}, where the result is proven for strictly concave functions and strict quasi-increasing function. These assumptions do not hold as functions $\cL_z$ are piece-wise linear. We then adapt the proof in the case where $\cL_z(0) > 0$ for concave and quasi-increasing functions.  The adaptation in straightforward, but for sake of completeness, let use write it, and then comment on the additional hypothesis $\cL_z(0) >0$.

\begin{definition}
	A function $g = (g_1, \ldots, g_n):\R_+^n \to \R_+^n$ is quasi-increasing if for all $i$, for all $x, y$, such that $y\geq x$ and $x_i = y_i$, $g_i(y) \geq g_i(x)$.
\end{definition}

In our case, as $\cL$ in non-decreasing, $\cL - Id$ is quasi-increasing, where $Id$ is the identity function.

\begin{lemma}
	\label{lem:fp}
	Let $F$ be a function from $\R_+^n$ to $\R_+^n$, that is concave and such that $F - Id$ is quasi-increasing and $F_i(0) > 0$ for all $i\leq n$. 
	If $F$ has a fix-point, then this fix-point is unique.
\end{lemma}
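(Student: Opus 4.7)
The plan is to follow the classical Kennan-style argument for uniqueness of fix-points of concave quasi-increasing maps, using the strict positivity $F_i(0)>0$ as a replacement for strict concavity.

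First I would argue by contradiction: suppose $x$ and $y$ are two distinct fix-points of $F$. Set $\lambda = \min_i x_i/y_i$ and, after swapping $x$ and $y$ if necessary, assume $\lambda \le 1$. If $\lambda = 1$ then $x \ge y$ componentwise; since $x\neq y$, the symmetric ratio $\mu = \min_i y_i/x_i$ is strictly less than $1$, and we can redo the argument with the roles of $x$ and $y$ exchanged. So we may assume $\lambda < 1$, and pick an index $i^\star$ where the minimum is attained, i.e. $x_{i^\star} = \lambda\, y_{i^\star}$ and $x \ge \lambda y$ componentwise.

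The second step uses concavity and $F(0)>0$ to derive a strict inequality on the $i^\star$-th coordinate. Writing $\lambda y = \lambda y + (1-\lambda)\cdot 0$ and applying concavity of $F_{i^\star}$ gives
\[ F_{i^\star}(\lambda y) \ge \lambda\, F_{i^\star}(y) + (1-\lambda)\, F_{i^\star}(0) = \lambda\, y_{i^\star} + (1-\lambda)\, F_{i^\star}(0) > x_{i^\star}, \]
where the last inequality is strict because $\lambda<1$ and $F_{i^\star}(0)>0$. This is exactly the place where the hypothesis $F_i(0)>0$ replaces strict concavity: without it, concavity would only give a non-strict inequality and the argument would collapse.

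The third step transfers this strict inequality from $\lambda y$ to $x$ using that $F - Id$ is quasi-increasing. Since $x \ge \lambda y$ and $x_{i^\star} = (\lambda y)_{i^\star}$, quasi-increasingness of $F-Id$ at coordinate $i^\star$ yields
\[ F_{i^\star}(x) - x_{i^\star} \ge F_{i^\star}(\lambda y) - (\lambda y)_{i^\star} = F_{i^\star}(\lambda y) - x_{i^\star}, \]
hence $F_{i^\star}(x) \ge F_{i^\star}(\lambda y)$. Combined with the previous step this gives $x_{i^\star} = F_{i^\star}(x) \ge F_{i^\star}(\lambda y) > x_{i^\star}$, a contradiction. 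The main obstacle I anticipate is only the bookkeeping of the WLOG step ($\lambda \le 1$ vs. $\lambda = 1$); the positivity condition $F_i(0)>0$ makes the strict inequality in the concavity step free, so no further work on strict concavity is needed.
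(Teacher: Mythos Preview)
Your proposal is correct and follows essentially the same argument as the paper's proof: define the minimal ratio $\lambda=\min_i x_i/y_i$ (the paper writes $\gamma$ and calls the minimizing index $r$), use concavity between $0$ and $y$ together with $F_i(0)>0$ to get the strict inequality $F_{i^\star}(\lambda y)>\lambda y_{i^\star}$, and then use quasi-increasingness of $F-Id$ at the point $\lambda y\le x$ with equality in coordinate $i^\star$ to reach the contradiction $0=F_{i^\star}(x)-x_{i^\star}\ge F_{i^\star}(\lambda y)-\lambda y_{i^\star}>0$. The only cosmetic difference is that the paper runs the argument twice (once to get $x\ge y$, once to get $y\ge x$) rather than doing a WLOG swap up front; your handling of the $\lambda=1$ case via the reverse ratio $\mu$ is equivalent.
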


\begin{proof}
	Suppose $x$ and $y$ are two fix-points of $F$: $F(x) = x$ and $F(y) = y$. 
	Define $\gamma = \min_{j\leq n} (\frac{x_j}{y_j}) = \frac{x_r}{y_r}$. If $\gamma \geq 1$, then $x\geq y$. 
	Suppose now that $\gamma <1$ and define $w = \gamma y$.
	On the one hand, $w_i < y_i$ for all $i$. So by concavity of $F$, 
	we have $F_i(w) \geq (1-\gamma)F_i(0) + \gamma F_i(y) >  \gamma F_i(y)  = \gamma y_i = w_i$. In particular, $F_r(w) > w_r$. 
	
	On the other hand, $w\leq x$ and $w_r = x_r$. Indeed,  $w_j = \gamma y_j \leq \frac{x_j}{y_j} y_j = x_j$, and the inequality becomes an equality when $i=r$. 
	As $F - Id$ is quasi-increasing, then $F_r(x) - x_r \geq F_r(w) -w_r$. 
	
	But $x$ is a fix-point, so combining the obtained inequalities, we get $0 = F_r(x) - x_r \geq F_r(w) -w_r > 0$, which is a contradiction. So one must have $x \geq y$. 
	
	By inverting $x$ and $y$, we also obtain $x\leq y$, and finally $x = y$ and enables to conclude regarding the uniqueness of the fix-point.
\end{proof}

Back to our case, the uniqueness of the fix-point is granted if $\cL_z(0)>0$ for all $z\in Z$. This might not always be the case. For example, when the initial bursts and the latencies of the servers are all null.
However, in the general case, this assumption should hold. As we do not shape the flows of interest to compute the burst, we have the following properties: consider a flow with arrival curve $\gamma_{b,r}$ crossing a server with service curve $\beta_{R,T}$. The maximal backlog in the server is $b + rT$, so if either $b$ or $T$ is non-null, the backlog bound transmitted for the bound of the next flow is non-null. Then there exists $k$ such that $\cL_z^k(0)>0$ for all $z$, where $F^k$ is the $k$-th iteration of the composition of $F$.  Here, for example, $k$ can be chosen as $\max_iK_i$. 
Because $\cL$ is non-decreasing and concave, so is $\cL^k$. Then Lemma~\ref{lem:fp} can be applied to $\cL^k$. As a fix-point of $\cL$ is necessary a fix-point of $\cL^k$, this proves the uniqueness of the fix-point of $\cL$.

\subsection{Application to TFA++}
\label{sec:lp-tfa}
Before giving the linear program for the FIFO linear programming with cyclic dependencies, let us give an alternate formulation of the solution with TFA++ given in~\cite{TBM19}. In that paper, the authors compute performances in cyclic network using the TFA++ method for each server. 

Two limitations can be overcome by using the linear programming approach.
\begin{enumerate}
	\item The authors assume that they take into account the shaping only when the shaping rates exceed some given value. The reason for this seems to be the simplification of the computation of maximum delay, and more precisely the place where the horizontal distance between the aggregate arrival curve and the service curve is maximized. With the linear programming approach, this place is not directly computed but given as the solution of a linear program. It also allows more complex service curves than rate-latency. 
	\item The method chosen to compute the fix-point is iteration from 0. The authors show that the least fix-point is indeed a valid solution for computing the performance bounds, which is an improvement compared to~\cite{BBL18} that states this result for the greater fix-point. As we showed the uniqueness of the fix-point, so the two previous approaches were in fact similar. Here, the linear program avoids the iteration whose raw output would be a lower approximation of the fix-point (the iterated are lower bounds of the fix point) and thus require additional (yet simple) computations to obtain  upper bounds.    
\end{enumerate}

\paragraph{Linear program for the TFA++ method}
Table~\ref{tab:tfa-lp} gives the linear constraints for each server $j$ to compute the 
maximum horizontal distance between the arrival and the departure processes.
We use the notation $A_i^{(j)}$ (resp. $D^{(j)}_i$) for the arrival (resp. departure) process of flow $f_i$ at server $j$. As the analysis is TFA, the linear programs are local to each servers, except the delays, so we do note enforce the equality between the departure process at a server and the arrival process at its successors. The dates $s_j$ and $t_j$ are respectively the arrival and departure time of the bit of data suffering the largest delay in server $j$.

\begin{table}[htbp]	
	\centering
	\begin{tabular}{|lll|}
		\hline
		Maximize & $\sum_j \bd_j$&\\
		\hline
		\hline
		such that & for all server $j$&\\ 
		&$0\leq \bs_j \leq \bt_j$ & time constraints\\
		$\forall i \in \fl(j)$ &$\bA^{(j)}_i(\bs_{j}) \leq \bx^{(j)}_i + r_i \bs_{j}$ & arrival constraints\\
		$\forall h$& $\sum_{i\in \fl(h, j)} \bA^{(j)}_i(\bs_j) \leq L_h + C_h \bs_j$&shaping constraints\\
		&$\sum_{i\in\fl(j)}\bD^{(j)}_i(\bt_{j}) \geq R_j\bt_{j} - R_jT_j$& service constraint\\
		&$\sum_{i\in\fl(j)}\bD^{(j)}_i(\bt_{j}) \geq 0$& \\
		&$\bA^{(j)}_i(\bs_{j}) = \bD^{(j)}_i(\bt_{j})$& FIFO constraint\\
		&$\bd_j = \bt_j - \bs_j$ & delay at server $j$\\
		$\forall i \in \fl(j)$& $\bx^{(\su(j))}_i = \bx^{(j)}_i + r_i \bd_j$ & burst propagation\\
		\hline
		& for all flow $f_i$&\\
		&$\bx_i^{(\pi_i(0))} = b_i$& initial bursts\\
		\hline
	\end{tabular}
\caption{Linear program for TFA++ with cyclic dependencies: server $j$}
\label{tab:tfa-lp}
\end{table}

\section{Experimental results}
\label{sec:numerical}
In the first part of the experimental result, we use very simple topologies for the comparisons, that is tandem networks and the ring topology. In the second part, we use networks topologies met in real applications. 

The algortihm presented in this paper have been implemented in {\tt Python 3}. The linear programs methods are solved in two steps. First a linear program is generated (with a {\tt Python} script), and then solves with the open-source linear programming solver {\tt lp\_solve}. This implementation is not optimal for at least two reasons. First commercial like {\tt Cplex} and {\tt Gurobi}  and other open source can solve linear programs more than 100 times faster~\cite{LP2013}. Second, there exists solutions to use {\tt lp\_solve} directly inside Python code. We chose this way of doing this to be able to obtain readable linear programs and be able to interpret more simply the solution. 
This being said, the comparison of the computation times between non-LP solutions and LP solutions can be unfair, but the comparison between LP solutions is still valid. 

We first test our algorithms on toy examples, like tandems and rings, and then on real cases. 
\subsection{Toy examples}
We will compare 4 methods we introduced in this paper: TFA++, SFA, LP (linear program from~\cite{BS12}), PLP (polynomial-size linear program from Section~\ref{sec:tree}), and the performances obtained when the network has regulators after each server~\cite{LeB18}. Regulators regulate the flows according to their arrival curve. It is shown that the delays induces by them is not modified, but the end-to-end delays are computed in a way similar to the TFA algorithm recalled in Algorithm~\ref{algo:tfa}, except that in line 5, $b_i^{(succ_i(j))}$ remains $b_i^{(j)}$. When we mentionned the delay bounds with regulators, it is the bounds computed this way, and not the worst-case delay bounds when inserting regulators.

\subsubsection{Tandem networks}
A tandem network with $n$ servers is a network whose underlying graph is a line. By convention, we number the servers from 1 to $n$ in the topological order. 
We assume uniform networks: each server has offers a service curve $\beta:R(t-T)_+$, a maximum service curve $\beta_u:t\mapsto \eta Rt$, with $\eta \geq 1$ and each flow is constrained by the arrival curve $\alpha:t\mapsto b + rt$. 

We fix $T = 0.001s$, $b = 1Kb$ and $R = 10$Mbps.  The arrival $r$ will vary to study the network at different load, and $\eta$ vary in order to study the sentitivity of TFA++ and PLP to the maximum service rate. 

The flow of interest (f.o.i.) crosses all servers and we will study 
two different configurations: two-hop cross-traffic, and source-sink networks. 

\paragraph{Two-hop cross-traffic}
There are $n-1$ interfering flows, with path $\langle i,i + 1 \rangle$ for all $1\leq i < n$ as depicted on Figure~\ref{fig:2hop} for 4 servers.

\begin{figure}[hpbp]
	\centering
	\begin{tikzpicture}[server/.style={shape=rectangle,draw,minimum height=.8cm,inner xsep=3ex}]
	\node[server,name=S1] at (0,0) {$1$};
	\node[server,name=S2] at (2,0) {$2$};
	\node[server,name=S3] at (4,0) {$3$};
	\node[server,name=S4] at (6,0) {$4$};
	\draw[->, thick, red] (-1,0) node[left] {$f_1$} -- (7,0);
	\draw[->, thick, blue] (-1,0.3) node[left] {$f_2$} to[out=-10,in=-170] (2.7,.3);
	\draw[->, thick, green!70!black]  (1,-0.3) node[left] {$f_3$} to[out=10,in=170] (5,-.3);
	\draw[->, thick, purple] (3.3,0.3) node[left] {$f_4$} to[out=-10,in=-170] (7,.3);
	\end{tikzpicture}
	\caption{Two-hop tandem network.}
	\label{fig:2hop}
\end{figure}
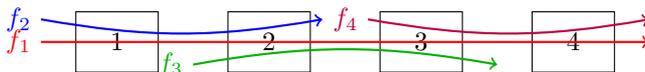


If the arrival rate is $r$, than the load of the network is $U = 3r/R$.

Figure~\ref{fig:two-hop-n} shows the delay bound obtained when the number of servers varies from 1 to 25, when $\eta = 1$ and $U = 0.5$. The LP method is computed only up to 7 servers. 
One can check that the linear programming give the tightest bound, followed by the PLP bound, which confirms to our results. 

The regulators also allow to obtain slightly better bounds for tandems longer than 15, which seems also intuitive as the bursts cannot propagate in the network, and regulating the flows has then more effect on longer networks. 
More surprisingly, one can see that the SFA method, that does not take into account the shaping effect of the maximal service curve behaves almost like TFA++, specially for long network. For the tandem of length 25, the gain between TFA++ and PLP is 28\% and between SFA and PLP 29\%.

Figure~\ref{fig:two-hop-n-exec} compares the execution time of the two linear programming methods. One can verify that the PLP method, whose  execution time is below 5 seconds for 25 servers, scales much better than the LP method, whose execution time is already 10 seconds for 7 servers. 

\begin{figure}[htbp]
	\centering
	\subfloat[\label{fig:two-hop-n}Delay]{
		\centering
		\begin{tikzpicture}
			\begin{axis}[height = 5cm, width = 5cm,
				xlabel={Number of servers},
				ylabel = {Delay of the f.o.i} (s),
				legend entries={TFA++, SFA, PLP, LP, Regulators}, legend style = {fill=none, at = {(0,1)}, anchor = north west, 
					draw = none, font=\tiny},
				xmin = 1, xmax = 25,ymin=0, ymax = 0.05, axis x line = bottom, axis y line = left,
				unbounded coords = jump]
				\addplot+[mark = none, thick, cyan] table[x index = 0,y index= 1] {two_hop_n.data};
				\addplot+[mark = none, thick, lime!70!black] table[x index = 0,y index= 2] {two_hop_n.data};
				\addplot+[mark = none, thick, red] table[x index = 0,y index= 3] {two_hop_n.data};
				\addplot+[mark = none, ultra thick, violet] table[x index = 0,y index= 5] {two_hop_n.data};
				\addplot+[mark = none, thick, dashed, gray] table[x index = 0,y index= 4] {two_hop_n.data};		
			\end{axis}
		\end{tikzpicture}}
	\hspace{0.4cm}
	\subfloat[\label{fig:two-hop-n-exec}Execution time]{
		\centering
		\begin{tikzpicture}
			\begin{semilogyaxis}[height = 5cm, width = 5cm,
				xlabel={Number of servers},
				ylabel = {Execution time (s)},
				legend entries={PLP, LP}, legend style = {fill=none, at = {(0.0,0.97)}, anchor = north west, draw = none, font=\tiny},
				xmin = 1, xmax = 25,ymin=0, ymax = 10, axis x line = bottom, axis y line = left,
				unbounded coords = jump]
				\addplot+[mark = none, thick, red] table[x index = 0,y index= 3] {two_hop_n_exec.data};
				\addplot+[mark = none, thick, violet] table[x index = 0,y index= 5] {two_hop_n_exec.data};
			\end{semilogyaxis}
		\end{tikzpicture}}
	\caption{Comparisons of different methods for two-hop cross-traffic when varying the number of servers.}
\end{figure}
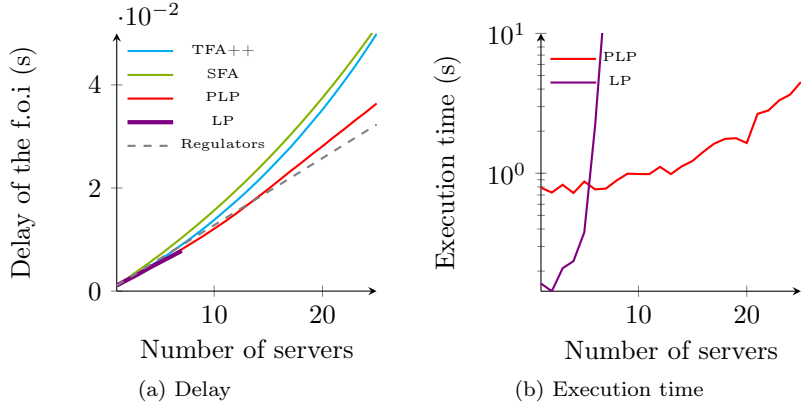

Figure ~\ref{fig:two-hop-load} shows the delay bounds obtained for a network of 10 servers in function of the load. We discard the LP method.
One can observe that TFA++ is outperformd by SFA for high loads. Indeed, when the arrival rate is exactly equal to the service rate in a FIFO server, there can be no benefit from the shaping. One can also see that when $U>0.7$, the delay computed with PLP method grows faster. This is due to the fact that the TFA++ constraints do not enable to improve the delay bounds anymore (they are two pessimistic).
 Finally, Figure~\ref{fig:two-hop-K} show the sensitivity of the delay to the maximum service rate. We here compare the servers for maximum service rates $\eta R$, for $\eta\in\{1, 1.2, 2\}$.
Already when $\eta=1.2$.  TFA++ is outperformed by SFA (for a load above 0.3). While TFA++ seems very sensitive to the maximum serice rate, the delays computed with PLP do not vary much. 
 This shows the effectiveness of PLP approach  when the maximum service rate does not equal the service rate. 

 \begin{figure}[htbp]
 	\centering
 	\subfloat[\label{fig:two-hop-load}Delay in function of the load]{
 		\centering
 		\begin{tikzpicture}
 			\begin{semilogyaxis}[height = 5cm, width = 5cm,
 				xlabel={Load of the network},
 				ylabel = {Delay of the f.o.i (s)},
 				legend entries={TFA++, SFA, PLP, Regulators}, legend style = {fill=none, at = {(0,1)}, anchor = north west, draw = none, font=\tiny},
 				xmin = 0, xmax = 1,ymin=0, ymax = 0.1, axis x line = bottom, axis y line = left,
 				unbounded coords = jump]
 				\addplot+[mark = none, thick, cyan] table[x index = 0,y index= 1] {two_hop_K1.data};
 				\addplot+[mark = none, thick, lime!70!black] table[x index = 0,y index= 2] {two_hop_K1.data};
 				\addplot+[mark = none, thick, red] table[x index = 0,y index= 3] {two_hop_K1.data};
 				\addplot+[mark = none, dashed, thick, gray] table[x index = 0,y index= 4] {two_hop_K1.data};		
 			\end{semilogyaxis}
 		\end{tikzpicture}}
 	\hspace{0.4cm}
 	\subfloat[\label{fig:two-hop-K}Shaping effect]{
 		\centering
 			\begin{tikzpicture}
 			\begin{semilogyaxis}[height = 5cm, width = 5cm,
 				xlabel={Load},
 				ylabel = {Delay of the f.o.i. (s)},
 				legend entries={SFA, 
 					TFA++ $\eta = 1$, 
 					PLP $\eta = 1$, 
 					TFA++ $\eta = 1.2$, 
 					PLP $\eta = 1.2$, 
 					TFA++ $\eta = 2$, 
 					PLP $\eta = 2$}, 
 				legend style = {fill=none, at = {(0.03,0.97)}, anchor = north west, draw = none, font=\tiny},
 				xmin = 0, xmax = 1,ymin=0, ymax = 0.1, axis x line = bottom, axis y line = left,
 				unbounded coords = jump
 				]
 				\addplot+[mark = none, thick, lime!70!black] table[x index = 0,y index= 2] {two_hop_K1.data};
 				\addplot+[mark = none, thick, dotted, cyan] table[x index = 0,y index= 1] {two_hop_K1.data};
 				\addplot+[mark = none, thick, dotted, red] table[x index = 0,y index= 3] {two_hop_K1.data};
 				\addplot+[mark = none, thick, dashed, cyan] table[x index = 0,y index= 1] {two_hop_K10.data};
 				\addplot+[mark = none, thick, dashed, red] table[x index = 0,y index= 3] {two_hop_K10.data};
 				\addplot+[mark = none, thick, solid, cyan] table[x index = 0,y index= 1] {two_hop_K2.data};
 				\addplot+[mark = none, thick, solid, red] table[x index = 0,y index= 3] {two_hop_K2.data};	
 			\end{semilogyaxis}
 		\end{tikzpicture}}
 	\caption{Comparisons of different method for two-hop cross-traffic when varying the load.}
 \end{figure}
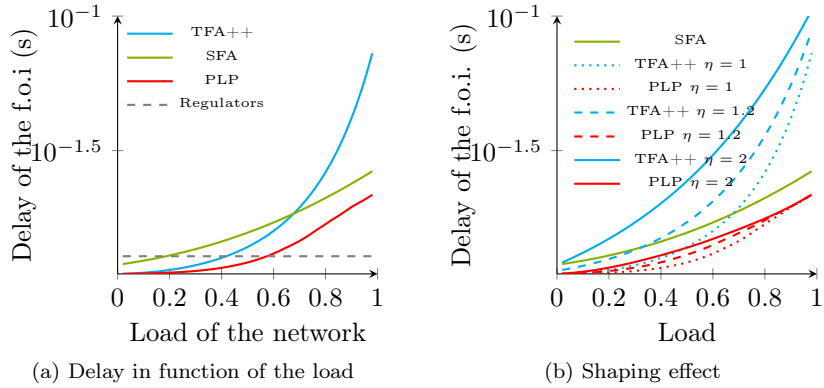

\paragraph{Source-sink network}
We call source-sink tandem a tandem with $n$ servers and $2n-1$ flows. Each flows either starts at server 1 or send at server $n$. In the uniform case, there is one flow per possible path, as depicted in Figure~\ref{fig:long} with $n = 4$. 
There are $n$ flows crossing each server, so the load of the network is $nr/R$. 
 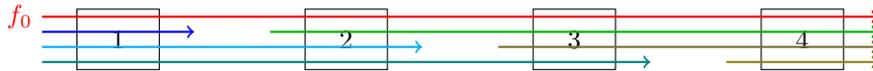
\begin{figure}[htbp]
	\centering
	\begin{tikzpicture}[server/.style={shape=rectangle,draw,minimum height=.8cm,inner xsep=3ex}]
	\node[server,name=S1] at (0,0) {$1$};
	\node[server,name=S2] at (3,0) {$2$};
	\node[server,name=S2] at (6,0) {$3$};
	\node[server,name=S2] at (9,0) {$4$};
	\draw[->, thick, red] (-1,0.3) node[left] {$f_0$} -- (10, 0.3);
	\draw[->, thick, blue] (-1,0.1) -- (1,0.1);
	\draw[->, thick, cyan] (-1,-0.1) -- (4,-0.1);
	\draw[->, thick, teal] (-1,-0.3) -- (7,-0.3);
	\draw[->, thick, green!70!black] (2,0.1) -- (10,0.1);
	\draw[->, thick, olive!70!black] (5,-0.1) -- (10,-0.1);
	\draw[->, thick, olive!90!black] (8,-0.3) -- (10,-0.3);
	\end{tikzpicture}
	\caption{Source-sink tandem of length 4.}
	\label{fig:long}
\end{figure}

Figure~\ref{fig:long-n} depicts the worst-case delays computed by each method when the length of the tandem grows from 1 to 25 when $\eta = 1$ and $U=0.5$. One can still check that thee linear programming methods still give the best delay bounds. Here, the gap between the LP and PLP methods is very small. The TFA++ also performs very well, the gain for 25 severs is 13\%. These three bounds are below the bound obtained with regulators. SFA is completely outperformed. Indeed, at each server, $n-1$ flows continue to the next server. Then the shaping has a very strong effect on the performances. 
Figure~\ref{fig:long-n-exec} compares the execution times of LP and PLP. Again we see the tractability improvement of this new approach. 

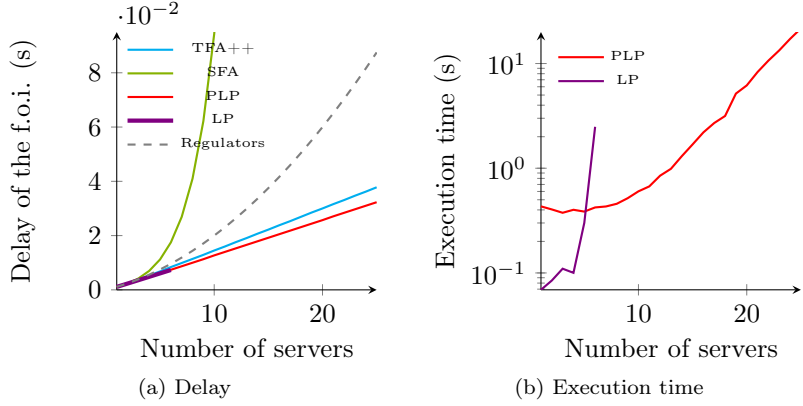
\begin{figure}[htbp]
	\centering
	\subfloat[\label{fig:long-n}Delay]{
		\centering
		\begin{tikzpicture}
			\begin{axis}[height = 5cm, width = 5cm,
				xlabel={Number of servers},
				ylabel = {Delay of the f.o.i. (s)},
				legend entries={TFA++, SFA, PLP, LP, Regulators}, legend style = {fill=none, at = {(0,1)}, anchor = north west, draw = none, font=\tiny},
				xmin = 1, xmax = 25,ymin=0, ymax = 0.095, axis x line = bottom, axis y line = left,
				unbounded coords = jump]
				\addplot+[mark = none, thick, cyan] table[x index = 0,y index= 1] {long_n.data};
				\addplot+[mark = none, thick, lime!70!black] table[x index = 0,y index= 2] {long_n.data};
				\addplot+[mark = none, thick, red] table[x index = 0,y index= 3] {long_n.data};
				\addplot+[mark = none, ultra thick, violet] table[x index = 0,y index= 5] {long_n.data};
				\addplot+[mark = none, thick, dashed, gray] table[x index = 0,y index= 4] {long_n.data};		
			\end{axis}
		\end{tikzpicture}}
	\hspace{0.4cm}
	\subfloat[\label{fig:long-n-exec}Execution time]{
		\centering
		\begin{tikzpicture}
			\begin{semilogyaxis}[height = 5cm, width = 5cm,
				xlabel={Number of servers},
				ylabel = {Execution time (s)},
				legend entries={PLP, LP}, legend style = {fill=none, at = {(0.03,0.97)}, anchor = north west, draw = none, font=\tiny},
				xmin = 1, xmax = 25,ymin=0, ymax = 20, axis x line = bottom, axis y line = left,
				unbounded coords = jump]
				\addplot+[mark = none, thick, red] table[x index = 0,y index= 3] {long_n_exec.data};
				\addplot+[mark = none, thick, violet] table[x index = 0,y index= 5] {long_n_exec.data};
			\end{semilogyaxis}
		\end{tikzpicture}}
	\caption{Comparisons of different method for source-sink networks when varying the number of servers.}
\end{figure}

Figure~\ref{fig:long-load} show how the delay bounds grow with the load of a tandem of length 10. When the load is small, TFA++ and PLP are similar, but the gap between the two grows exponentially. For example, the gain between TFA+ and PLP is 12\% improvement for a load of 0.5 and 51\% for a load of 0.8. PLP outperforms the delay bound with regulators until a load of 0.95. 

Figure~\ref{fig:long-K} compares the delay bounds for several maximum service rates, with $\eta\in\{1, 2, 3\}$. Again, the PLP method is not very sensitive to this parameter, and the bounds for $\eta=2$ and $\eta=3$ are very similar, contrary to the TFA++ method. TFA++ and SFA are comparable when $\eta=3$.   

 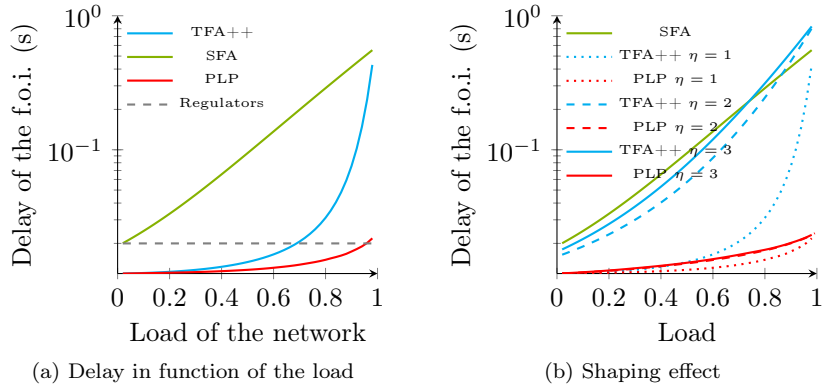
\begin{figure}[htbp]
	\centering
	\subfloat[\label{fig:long-load}Delay in function of the load]{
		\centering
		\begin{tikzpicture}
			\begin{semilogyaxis}[height = 5cm, width = 5cm,
				xlabel={Load of the network},
				ylabel = {Delay of the f.o.i.} (s),
				legend entries={TFA++, SFA, PLP, Regulators}, legend style = {fill=none, at = {(0,1)}, anchor = north west, draw = none, font=\tiny},
				xmin = 0, xmax = 1,ymin=0, ymax = 1, axis x line = bottom, axis y line = left,
				unbounded coords = jump]
				\addplot+[mark = none, thick, cyan] table[x index = 0,y index= 1] {long_load.data};
				\addplot+[mark = none, thick, lime!70!black] table[x index = 0,y index= 2] {long_load.data};
				\addplot+[mark = none, thick, red] table[x index = 0,y index= 3] {long_load.data};
				\addplot+[mark = none, thick, dashed, gray] table[x index = 0,y index= 4] {long_load.data};		
			\end{semilogyaxis}
		\end{tikzpicture}}
	\hspace{0.4cm}
	\subfloat[\label{fig:long-K}Shaping effect]{
		\centering
		\begin{tikzpicture}
			\begin{semilogyaxis}[height = 5cm, width = 5cm,
				xlabel={Load},
				ylabel = {Delay of the f.o.i. (s)},
				legend entries={SFA, 
					TFA++ $\eta = 1$, 
					PLP $\eta = 1$, 
					TFA++ $\eta = 2$, 
					PLP $\eta = 2$, 
					TFA++ $\eta = 3$, 
					PLP $\eta = 3$}, 
				legend style = {fill=none, at = {(0,1)}, anchor = north west, draw = none, font=\tiny},
				xmin = 0, xmax = 1,
				ymax = 1, axis x line = bottom, axis y line = left,
				unbounded coords = jump
				]
				\addplot+[mark = none, thick, lime!70!black] table[x index = 0,y index= 2] {long_load.data};
				\addplot+[mark = none, thick, dotted, cyan] table[x index = 0,y index= 1] {long_load.data};
				\addplot+[mark = none, thick, dotted, red] table[x index = 0,y index= 3] {long_load.data};
				\addplot+[mark = none, thick, dashed, cyan] table[x index = 0,y index= 1] {long_K2.data};
				\addplot+[mark = none, thick, dashed, red] table[x index = 0,y index= 3] {long_K2.data};
				\addplot+[mark = none, thick, solid, cyan] table[x index = 0,y index= 1] {long_K10.data};
				\addplot+[mark = none, thick, solid, red] table[x index = 0,y index= 3] {long_K10.data};
			\end{semilogyaxis}
		\end{tikzpicture}}
	\caption{Comparisons of different method for source-sink networks when varying the load.}
\end{figure}

\subsubsection{Mesh network}
We consider the mesh network of Figure~\ref{fig:mesh}. There is one flow per path from server 0 or 1 to server 8, which represents a total of 16 paths. Servers 0 to 7 have the same characteristics as above, and server 8's service rate is $2R$, as there are twice as much flows crossing it compared to the other servers. We also keep the same characteristics as above for the flows. 

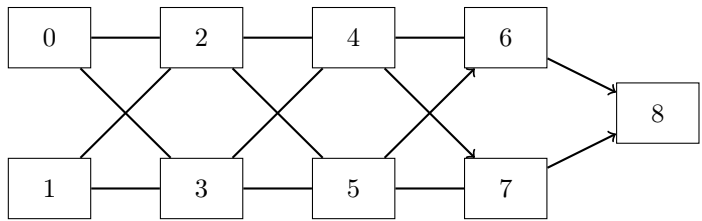
\begin{figure}[htbp]
\centering
\begin{tikzpicture}
	[server/.style={shape=rectangle,draw,minimum height=.8cm,inner xsep=3ex}]
	\node[server,name=S0] at (0,1) {0};
	\node[server,name=S1] at (0,-1) {1};
	\node[server,name=S2] at (2,1) {2};
	\node[server,name=S3] at (2,-1) {3};	
	\node[server,name=S4] at (4,1) {4};
	\node[server,name=S5] at (4,-1) {5};
	\node[server,name=S6] at (6,1) {6};
	\node[server,name=S7] at (6,-1) {7};
	\node[server,name=S8] at (8,0) {8};
	\draw[thick, ->] (S0) -- (S2) -- (S4) -- (S6) -- (S8);
	\draw[thick, ->] (S0) -- (S3) -- (S4) -- (S7);
	\draw[thick, ->] (S1) -- (S3) -- (S5) -- (S7) -- (S8);
	\draw[thick, ->] (S1) -- (S2) -- (S5) -- (S6);
\end{tikzpicture}	
\caption{Mesh network.}
\label{fig:mesh}
\end{figure}

Figures~\ref{fig:num_mesh} compares the delays obtained for TFA++, SFA and the two different methods introduced for analyzing feed-forward networks: network unfolding and flow splitting. Due to its computation time, we do not compare with the exponential LP method. Figure~\ref{fig:mesh1} depicts the delays when $\eta =1$ and Figure~\ref{fig:mesh2} when $\eta = 5$. 
Similarly to the previous cases, TFA++ is very accurate when $\eta = 1$ and the load is small, but becomes pessimistic when $\eta$ is larger or when the load is large. It is not a surprise that the unfold method leads to tighter delay bounds than the split method. Indeed, splitting a flow lead to some over-approximations. But the unfold network's size being exponential in the size of the original network, this method is not scalable. We notice that the gap between the two methods is not very large, specially when $\eta=1$. 

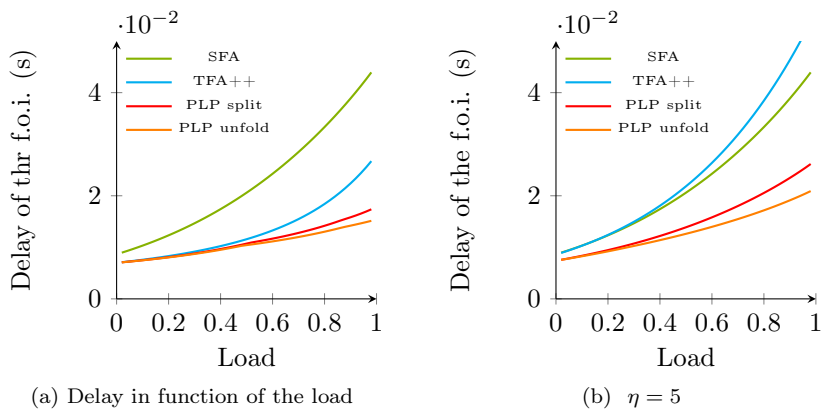
\begin{figure}[htbp]
	\centering
	\subfloat[\label{fig:mesh1}Delay in function of the load]{
		\centering
		\begin{tikzpicture}
			\begin{axis}[height = 5cm, width = 5cm,
				xlabel={Load},
				ylabel = {Delay of thr f.o.i. (s)},
				legend entries={SFA, TFA++, PLP split, PLP unfold}, 
				legend style = {fill=none, at = {(0,1)}, anchor = north west, draw = none, font=\tiny},
				xmin = 0, xmax = 1,ymin=0, ymax = 0.05, axis x line = bottom, axis y line = left,
				unbounded coords = jump
				]
				\addplot+[mark = none, thick, lime!70!black] table[x index = 0,y index= 2] {mesh1.data};
				\addplot+[mark = none, thick, cyan] table[x index = 0,y index= 1] {mesh1.data};
				\addplot+[mark = none, thick, red] table[x index = 0,y index= 3] {mesh1.data};
				\addplot+[mark = none, thick, orange] table[x index = 0,y index= 4] {mesh1.data};
			\end{axis}
		\end{tikzpicture}}
	\hspace{0.4cm}
	\subfloat[\label{fig:mesh2} $\eta = 5$]{
		\centering
		\begin{tikzpicture}
			\begin{axis}[height = 5cm, width = 5cm,
				xlabel={Load},
				ylabel = {Delay of the f.o.i. (s)},
				legend entries={SFA, TFA++, PLP split, PLP unfold}, 
				legend style = {fill=none, at = {(0,1)}, anchor = north west, draw = none, font=\tiny},
				xmin = 0, xmax = 1,ymin=0, ymax = 0.05, axis x line = bottom, axis y line = left,
				unbounded coords = jump
				]
				\addplot+[mark = none, thick, lime!70!black] table[x index = 0,y index= 2] {mesh2.data};
				\addplot+[mark = none, thick, cyan] table[x index = 0,y index= 1] {mesh2.data};
				\addplot+[mark = none, thick, red] table[x index = 0,y index= 3] {mesh2.data};
				\addplot+[mark = none, thick, orange] table[x index = 0,y index= 4] {mesh2.data};
			\end{axis}
		\end{tikzpicture}}
	\caption{Comparisons of different method for a mesh network.}
	\label{fig:num_mesh}
\end{figure}

\subsubsection{Ring network}
We now consider a ring network, such as depicted on Figure~\ref{fig:ring} for $n=4$. For a ring of length $n$, there are $n$ flows of length $n$. 
If the arrival rate is $r$, than the load of the network is $U = nr/R$.
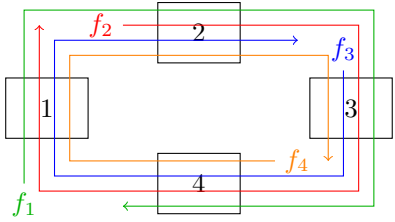
\begin{figure}[htbp]
	\centering
	\begin{tikzpicture}[server/.style={shape=rectangle,draw,minimum height=.8cm,inner xsep=3ex}]
	\node[server,name=S1] at (-2,0) {$1$};
	\node[server,name=S2] at (0,1) {$2$};
	\node[server,name=S3] at (2,0) {$3$};
	\node[server,name=S3] at (0,-1) {$4$};
	\draw[->, green!70!black] (-2.3,-1) node[below] {$f_1$} -- (-2.3,1.3)  -- (2.3, 1.3) -- (2.3, -1.3) -- (-1, -1.3);
	\draw[->, red] (-1,1.1) node[left] {$f_2$} -- (2.1,1.1) -- (2.1, -1.1) -- (-2.1, -1.1) -- (-2.1, 1.1);
	\draw[->, blue] (1.9,0.5) node[above] {$f_3$} -- (1.9,-0.9) -- (-1.9, -0.9) -- (-1.9, 0.9) -- (1.3, 0.9);
	\draw[->, orange] (1,-0.7) node[right] {$f_4$} -- (-1.7,-0.7) -- (-1.7, 0.7) -- (1.7, 0.7) -- (1.7, -0.7);
	\end{tikzpicture}
	\caption{Ring network with $n = 4$.}
	\label{fig:ring}
\end{figure}
Figure~\ref{fig:ring-n} shows the worst-case delay bound for the different methods when the number of servers grows from 2 to 10 (and to 4 for the LP method). The load of the network is 0.5 and $\eta = 1$. One can see that the bounds found for PLP and TFA++ are really close from one another and the gap with LP is larger than with the other topologies. Here again, SFA gives very inaccurate bounds. This in inline with the example of the source-sink tandem: at each server $n-1$ flows are shaped together, which makes the TFA++ method very efficient. The execution time of LP and PLP is depicted in Figure~\ref{fig:ring-n-exec}. One can see that PLP takes longer to compute, because the fix-point requires to solve a linear program a much larger linear program. 

Figure~\ref{fig:ring-load} compares the different approches (except LP) when the load of the network grows from 0 to 1, for a ring network of 7 servers. Similarly to the other examples, when the load becomes large, the PLP method computes much tighter bounds than TFA++, and has a larger stability region (local stability). The influence of TFA++ on PLP is also more visible: when the TFA++ delay bounds become infinite, the delay bounds of PLP increases more. Again, Figure~\ref{fig:ring-K} shows how the performances evolve when the maximum service rate of the servers grows, for $\eta \in \{1, 2, 5\}$. 
When $\eta = 5$, the delays of TFA++ are comparable with SFA. With TFA++, the stability region also decreases with $\eta$: the sufficient conditions for the stability computed by TFA++ for $\eta = 1, 2, 5$ are  respectively $U<0.85$, $U<0.55$ and $U< 0.38$, whereas PLP seems to ensure stability under the local stability hypothesis in all cases.

\begin{figure}[htbp]
	\centering
	\subfloat[\label{fig:ring-n}Delay]{
		\centering
		\begin{tikzpicture}
			\begin{axis}[height = 5cm, width = 5cm,
				xlabel={Number of servers},
				ylabel = {Delay of the f.o.i. (s)},
				legend entries={TFA++, SFA, PLP, LP, Regulators}, legend style = {fill=none, at = {(1,0)}, anchor = south east, draw = none, font=\tiny},
				xmin = 1, xmax = 10,ymin=0, ymax = 0.02, axis x line = bottom, axis y line = left,
				unbounded coords = jump]
				\addplot+[mark = none, thick, cyan] table[x index = 0,y index= 1] {ring_n.data};
				\addplot+[mark = none, thick, lime!70!black] table[x index = 0,y index= 2] {ring_n.data};
				\addplot+[mark = none, thick, red] table[x index = 0,y index= 3] {ring_n.data};
				\addplot+[mark = none, ultra thick, violet] table[x index = 0,y index= 5] {ring_n.data};
				\addplot+[mark = none, thick, dashed, gray] table[x index = 0,y index= 4] {ring_n.data};		
			\end{axis}
		\end{tikzpicture}}
	\hspace{0.4cm}
	\subfloat[\label{fig:ring-n-exec}Execution time]{
		\centering
		\begin{tikzpicture}
			\begin{semilogyaxis}[height = 5cm, width = 5cm,
				xlabel={Number of servers},
				ylabel = {Execution time (s)},
				legend entries={PLP, LP}, legend style = {fill=none, at = {(0.03,0.97)}, anchor = north west, draw = none, font=\tiny},
				xmin = 1, xmax = 10,ymin=0.1, ymax = 10, axis x line = bottom, axis y line = left,
				unbounded coords = jump]
				\addplot+[mark = none, thick, red] table[x index = 0,y index= 3] {ring_n_exec.data};
				\addplot+[mark = none, thick, violet] table[x index = 0,y index= 5] {ring_n_exec.data};
			\end{semilogyaxis}
		\end{tikzpicture}}
	\caption{Comparisons of different method for the ring network when varying the number of servers.}
\end{figure}
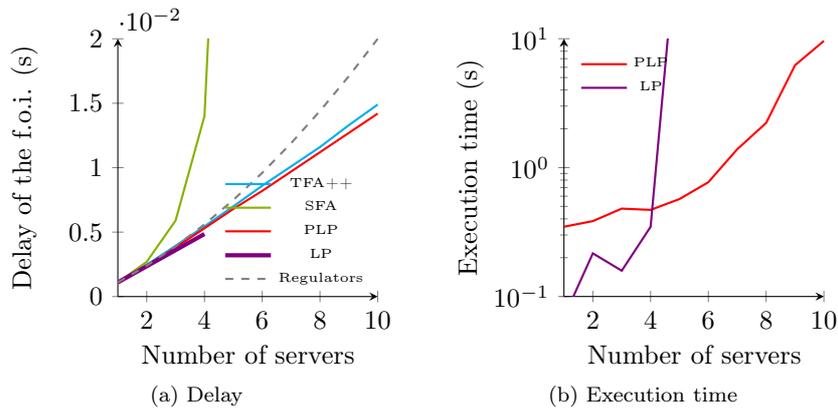

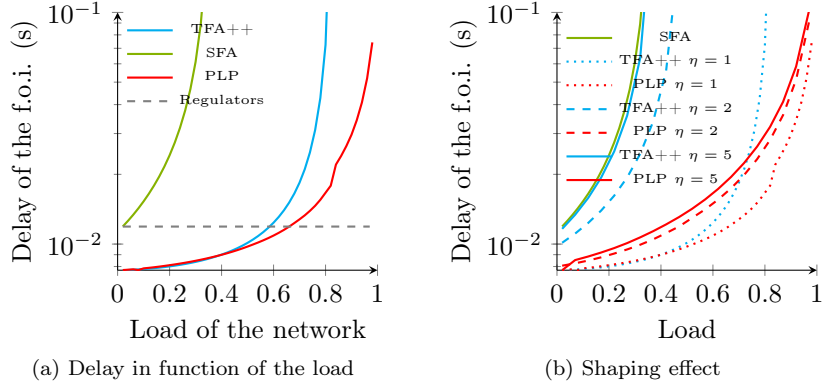
\begin{figure}[htbp]
	\centering
	\subfloat[\label{fig:ring-load}Delay in function of the load]{
		\centering
		\begin{tikzpicture}
			\begin{semilogyaxis}[height = 5cm, width = 5cm,
				xlabel={Load of the network},
				ylabel = {Delay of the f.o.i. (s)},
				legend entries={TFA++, SFA, PLP, Regulators}, legend style = {fill=none, at = {(0,1)}, anchor = north west, draw = none, font=\tiny},
				xmin = 0, xmax = 1,ymin=0, ymax = 0.1, axis x line = bottom, axis y line = left,
				unbounded coords = jump]
				\addplot+[mark = none, thick, cyan] table[x index = 0,y index= 1] {ring_load.data};
				\addplot+[mark = none, thick, lime!70!black] table[x index = 0,y index= 2] {ring_load.data};
				\addplot+[mark = none, thick, red] table[x index = 0,y index= 3] {ring_load.data};
				\addplot+[mark = none, thick, dashed, gray] table[x index = 0,y index= 4] {ring_K2.data};		
			\end{semilogyaxis}
		\end{tikzpicture}}
	\hspace{0.4cm}
	\subfloat[\label{fig:ring-K}Shaping effect]{
		\centering
			\begin{tikzpicture}
			\begin{semilogyaxis}[height = 5cm, width = 5cm,
				xlabel={Load},
				ylabel = {Delay of the f.o.i. (s)},
				legend entries={SFA, TFA++ $\eta = 1$, PLP $\eta = 1$, TFA++ $\eta = 2$, PLP $\eta = 2$, TFA++ $\eta = 5$, PLP $\eta = 5$}, 
				legend style = {fill=none, at = {(0,0.97)}, anchor = north west, draw = none, font=\tiny},
				xmin = 0, xmax = 1,ymin=0, ymax = 0.1, axis x line = bottom, axis y line = left,
				unbounded coords = jump
				]
				\addplot+[mark = none, thick, lime!70!black] table[x index = 0,y index= 2] {ring_load.data};
				\addplot+[mark = none, thick, cyan, dotted] table[x index = 0,y index= 1] {ring_load.data};
				\addplot+[mark = none, thick, red, dotted] table[x index = 0,y index= 3] {ring_load.data};
				\addplot+[mark = none, thick, cyan, dashed] table[x index = 0,y index= 1] {ring_K2.data};
				\addplot+[mark = none, thick, red, dashed] table[x index = 0,y index= 3] {ring_K2.data};
				\addplot+[mark = none, thick, cyan, solid] table[x index = 0,y index= 1] {ring_K5.data};
				\addplot+[mark = none, thick, red, solid] table[x index = 0,y index= 3] {ring_K5.data};
			\end{semilogyaxis}
		\end{tikzpicture}}
	\caption{Comparisons of different method for the ring network when varying the load.}
\end{figure}

\subsection{Real examples}
\subsubsection{Carrier networks}
Consider the network of Figure~\ref{fig:carrier}. It is made of two bidirectional rings. There are 8 flows departing from each router, 4 of them going to each of the neighbors (except the two central nodes that are not considered as neighbors). Two flows by a direct path (path of length 1) and one by a path along one ring and and the last path along the two rings, as depicted on Figure~\ref{fig:carrier}. 
Links are bidirectional, so Figure~\ref{fig:carrier} is not the exact representation of the network, that we do not give for the sake of readability.
In the example, all packets have length $L = 128 B$ and are periodically sent for each flow at period $P = 125 \mu s$. Then each flow is constrained by the arrival curve $\alpha:t\to B + t T/P$. The service rate guaranteed each link is $\beta: R(t-L/R)$. The maximum service curve, to take into account the packetization at each router is $\beta_u: t\mapsto R^u t + L$. 
The problem is to find the rate $R$ to allocate to these flows so that a maximum delay bound is satisfied for all flows. Let us assume that the target delay is $75\mu s$. 

\begin{figure}
	\centering
	\begin{tikzpicture}[scale=0.8, every node/.style={transform shape}, server/.style={shape=rectangle,draw,minimum height=.8cm,inner xsep=3ex}]
	\node[server,name=S1] at (0,1) {S};
	\node[server,name=S2] at (0,-1) {S};
	\node[server,name=S3] at (-3,1.5) {S};
	\node[server,name=S4] at (-3,-1.5) {S};
	\node[server,name=S5] at (3,1.5) {S};
	\node[server,name=S6] at (3,-1.5) {S};
	\node[server,name=S7] at (6,0) {S};
	\node[server,name=S8] at (-6,0) {S};
	\draw (S1) -- (S3) -- (S8) -- (S4) -- (S2) -- (S1) -- (S5) -- (S7) -- (S6) -- (S2); 
	\draw[->, red, thick] (-7, 0.3) -- (-6, 0.3) -- (-3.2, 1.8) -- (-3.2, 2.3);
	\draw[->, blue, thick] (-7, 0) -- (-5.8, 0) -- (-3, -1.3) -- (-0.2, -0.8) -- (-0.2, 0.8) -- (-3, 1.3) -- (-3, 2.3) ;
	\draw[green!70!black, ->, thick] (-7, -0.3) -- (-6, -0.3) -- (-3, -1.8) -- (0, -1.2) -- (3, -1.8) -- (6.3, 0) -- (3, 1.8) -- (0, 1.2) -- (-2.7, 1.8) -- (-2.7, 2.3);
	\end{tikzpicture}
	\caption{Carrier network with three types of paths: direct paths (red), one-ring path (blue), two-ring path(green).}
	\label{fig:carrier}
\end{figure}
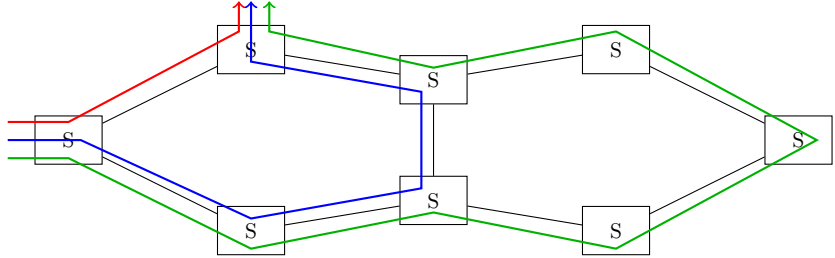

Table~\ref{tab:carrier} shows the rates to be allocated in two scenarios: first when {\em hard slicing} is at work, that is $R^u = R$, and the other when {\em soft slicing} is at work, that is $R^u = 10Gb/s$ is the total capacity of the links. 

\begin{table}[htbp]
	\centering
	\begin{tabular}{|l|l|l|l|l|}
		\hline
		& TFA++& PLP& LP\\
		\hline
		$R = R^u$ &690Mb/s (0.166s)&650Mb/s (112s)&510Mbps  (24min)\\
		$R$, $R^u = 10Gb/s$ &1.350Gb/s (0.2s) &1.0Gb/s (130s)&540Mbps (10min)\\
		\hline 
	\end{tabular}
\caption{Service rate required to guarantee the maximum delay of 75$\mu$s, with different methods.}
\label{tab:carrier}
\end{table}

One can observe that in the case of hard-slicing, the improvement of LPL cmpared to TFA++ is very small (less than 6\%). This can be explained because to obtain a delay bound as small as 75$\mu s$, the load of the network is small (approximately 15\%). But the gain obtained with LP is 26\%. In this scenario, a 24 minute computation might not be considered too costly given the gain on the bandwidth. In the case of soft slicing, PLP improves the TFA++ bound by more than 25\%, and between TFA++ and LP the improvement is 60\%. 

\subsubsection{Smart-Campus network}
In this example, the network topology is a sink tree. Four classes of flows are circulating in the network, from the leaves to the root, as depicted in Figure~\ref{fig:campus}, every class of traffic has flows following all paths, and the service policy is FIFO per class. Among the class the DRR scheduling is at work, and each flow is offered the same guarantee, that is 25\% of the service rate, and the quantum assigned to each class is $Q$.  If a network element has service curve $\beta = \beta_{R, 0}$, we assume in a simplified model that the service curve $\beta^{DRR} = \beta_{R/4, 3Q/R}$ is offered to each class. The service rate of each server is given (in Gbps) on Figure~\ref{fig:campus} and we take $Q = 16kb$. The characteristics of each class of flow is given in Table~\ref{tab:campus-desc}. Moreover, there is a shaping for each class of flow (separately) at the entrance of the network, at rate 1Gbps. The shaper is then $t\mapsto 10^9 t + \ell$, where $\ell$ is the maximum packet size of the flow of interest.

\begin{table}[htbp]
	\centering
	\begin{tabular}{|l|l|l|l|}
		\hline
		Class & burst & arrival rate & packet size\\
		\hline
		Electric protection & 42.56 kb & 8.521 Mbps & 3040b\\ 
		Virtual reality game & 2.16Mb&180 Mbps& 12kb\\
		Video conference & 3.24Mb&162Mbps&12kb\\
		4K video& 7.2Mbps&180Mbps&12kb \\
		\hline
	\end{tabular}
\caption{Characteristics of the 4 classes of flows.}
\label{tab:campus-desc}
\end{table}

Table~\ref{tab:campus} summarizes the delay found for each class and each method. In order to make all methods compute the delays faster and more accurate, we concatenate servers that are crossed by the same sets of flows. This enables the LP method to compute the delays fast (which would not be possible otherwise). 

Because the shaping rate at each server is four times larger than the service rate, TFA++ is outperformed by all the other methods (although slightly by SFA). The delays are divided by approximately 2 between SFA and PLP, and again by 2 between LP and PLP. We also see that the LP method provides a good approximation of the actual worst-case: the last column of Table~\ref{tab:campus} represents the delay obtained when the newtork is simulated and the maximum traffic arrived from time 0. The gap can also be explained by the fact that the DRR service curve can be pessimistic and that this trajectory may not be the one maximizing the delays.

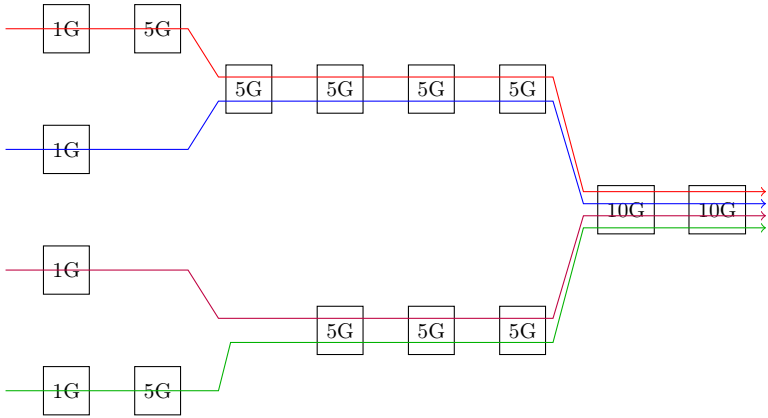
\begin{figure}
	\centering
	\begin{tikzpicture}[scale=0.8, every node/.style={transform shape}, server/.style={shape=rectangle,draw,minimum height=.8cm,inner xsep=1ex}]
	\node[server,name=S1] at (0,3) {1G};
	\node[server,name=S2] at (0,1) {1G};
	\node[server,name=S3] at (0,-1) {1G};
	\node[server,name=S4] at (0,-3) {1G};
	\node[server,name=S5] at (1.5,3) {5G};
	\node[server,name=S6] at (1.5,-3) {5G};
	\node[server,name=S7] at (3,2) {5G};
	\node[server,name=S8] at (4.5,2) {5G};
	\node[server,name=S7] at (6,2) {5G};
	\node[server,name=S8] at (7.5,2) {5G};
	\node[server,name=S8] at (4.5,-2) {5G};
	\node[server,name=S7] at (6,-2) {5G};
	\node[server,name=S8] at (7.5,-2) {5G};
	\node[server,name=S7] at (9.2,0) {10G};
	\node[server,name=S8] at (10.7,0) {10G};
	\draw[red, ->] (-1, 3) -- (2,3) -- (2.5, 2.2) -- (8, 2.2) -- (8.5,0.3) -- (11.5, 0.3); 
	\draw[blue, ->] (-1, 1) -- (2,1) -- (2.5, 1.8) -- (8, 1.8) -- (8.5,0.1) -- (11.5, 0.1); 
	\draw[purple, ->] (-1, -1) -- (2,-1) -- (2.5, -1.8) -- (8, -1.8) -- (8.5,-0.1) -- (11.5, -0.1);
	\draw[green!70!black, ->] (-1, -3) -- (2.5,-3) -- (2.7, -2.2) -- (8, -2.2) -- (8.5,-0.3) -- (11.5, -0.3); 
	\end{tikzpicture}
	\caption{Smart campus network: four classes of traffic arrive according to every path drawn. }
	\label{fig:campus}
\end{figure}

\begin{table}[htbp]
	\centering
	\begin{tabular}{|c||r|r|r|r|r|}
		\hline
		class & TFA++ ($\mu$s) & SFA ($\mu$s)& PLP ($\mu$s)& LP ($\mu$s) & simulation ($\mu$s)\\
		\hline
		EP &	156 &	157 &	129 &	112& 33\\
		VR &	7970 &	6890 &	3700 &	2454 &1911\\
		VC &	11679 &	10260 &	5395 &	3563 &3171\\
		4KV &	26481 &	22881 &	12084 &	7986 &7147\\
		\hline
	\end{tabular}
\caption{Delays with for the four classes of traffic with the different methods, and the simulation of a candidate trajectory for the worst-case delay.}
\label{tab:campus}
\end{table}

\section{Conclusion}
\label{sec:conclusion}
In this paper, we have proposed a new linear program technique for the analysis of FIFO networks, that offers a good trade-off between accuracy of the bounds and tractability. This algorithm does not lead to performance bounds as accurate as the previous ones, but it can be performed in polynomial time, which enables to use it in larger networks from real cases. This new algorithm also improves the performances bounds compared to the other methods from the literature. 

We also presented a linear programming solution to deal with cyclic networks. Although presented for FIFO networks, this solution is valid for the other LP methods used in network calculus. This method improves both the delay bounds and the stability region. 

Comparison with other scalable methods (TFA++, SFA) also enables to have a more precise knowledge of when these bounds are accurate. While SFA is never accurate, we could exhibit cases where TFA++ can provide accurate performance guarantees. It is when the load of the network is small or medium and the maxumum and minimum service rates coincide. 

Through the example of TFA, we saw that the use of the shaping was very important to reduce the performance bounds computed. One research direction would also be to see if SFA can be adapted to take into account the shaping effect into a SFA++ method. Some work has already been done in this direction~\cite{Boy2010}, and the improvement computed when only the flow of interest is shaped. It would be interesting to see if the shaping of the cross-traffic can improve the bound for SFA. 

Concerning the accuracy/tractability trade-off, many questions remain open to move from tractability to scalability. First, the PLP algorithm is tractable, but it might not yet be usable for large network. One step to scalability could be to decompose the network into smaller sub-networks, and recombine these sub-networks for performance computations. Many issues would then have to be solved: what is a good decomposition? In particular, what can be the size of the sub-networks: medium size with PLP, small size with LP?

\bibliographystyle{elsarticle-num}

\end{document}